\documentclass[aps,pra,twocolumn,superscriptaddress,longbibliography]{revtex4-2}
\usepackage[T1]{fontenc}
\usepackage[utf8]{inputenc}
\usepackage{amsmath,amssymb,mathtools,mathrsfs,amsfonts}
\usepackage{bm,booktabs,array,graphicx,tabularx}
\usepackage{xcolor}
\definecolor{RevisionBlue}{RGB}{0,75,165}
\newif\ifshowchanges
\showchangesfalse
\DeclareRobustCommand{\rev}[1]{{\ifshowchanges\color{RevisionBlue}\fi#1}}

\usepackage[hidelinks]{hyperref}
\usepackage{microtype}
\allowdisplaybreaks[2]
\newtheorem{theorem}{Theorem}
\newtheorem{lemma}{Lemma}
\newtheorem{corollary}[theorem]{Corollary}

\newtheorem{proposition}{Proposition}
\newtheorem{definition}{Definition}
\newtheorem{remark}{Remark}
\newenvironment{proof}[1][Proof]{\par\noindent\emph{#1.}\ }{\hfill$\square$\par}
\newcommand{\cQ}{\mathcal{Q}}
\newcommand{\cL}{\mathcal{L}}
\newcommand{\cS}{\mathcal{S}}
\newcommand{\cE}{\mathcal{E}}
\newcommand{\vB}{\vec{B}}
\newcommand{\vQ}{\vec{Q}}
\newcommand{\vL}{\vec{L}}

\newcommand{\vtheta}{\vec{\theta}}

\newcommand{\Ext}{\operatorname{Ext}}
\newcommand{\conv}{\operatorname{conv}}
\newcommand{\cone}{\operatorname{cone}}
\newcommand{\tr}{\operatorname{tr}}
\newcommand{\rank}{\operatorname{rank}}
\newcommand{\intt}{\operatorname{int}}
\newcommand{\sgn}{\operatorname{sgn}}
\newcommand{\scrN}{\mathcal{N}}
\newcommand{\Bell}{\widehat{B}}
\newcommand{\eps}{\varepsilon}
\begin{document}
\title{Exact Identification of Quantum Correlations in the Minimal Bell Scenario}
\author{Liang-Liang Sun}
\affiliation{Department of Modern Physics and National Laboratory for Physical Sciences at Microscale, University of Science and Technology of China, Hefei, Anhui 230026, China}
\author{Jie Wang}
\email{wangjie212@amss.ac.cn}
\affiliation{State Key Laboratory of Mathematical Sciences, Academy of Mathematics and Systems Science, Chinese Academy of Sciences, Beijing 100190, China}
\author{Sixia Yu}
\email{yusixia@ustc.edu.cn}
\affiliation{Department of Modern Physics and National Laboratory for Physical Sciences at Microscale, University of Science and Technology of China, Hefei, Anhui 230026, China}
\date{October 17, 2026}
\begin{abstract}
Bell nonlocality places the long-standing philosophical disputes over the nature of quantum theory on a mathematically rigorous, experimentally testable footing, positioning quantum mechanics between local realism and the most general no-signaling theories. Yet the precise quantum boundary has remained elusive  even in the minimal Bell scenario, where the extreme points generating the set are known. Here we characterize this boundary through a computable necessary and sufficient criterion: a correlation is quantum if and only if it can be decomposed into an extreme quantum correlation and a classical correlation. The criterion follows from our proof that every Bell functional exhibiting quantum advantage admits a unique quantum maximizer. We further classify the finitely many maximal-face structures containing each nonlocal exposed point according to their local deterministic vertices. By sharply distinguishing physically realizable correlations from nonphysical ones, these results resolve a long-standing open problem and provide a new foundation for identifying the principles underlying quantum theory and determining the ultimate information-theoretic limits of quantum correlations.
\end{abstract}
\maketitle

Bell's theorem provides a means to distinguish theories of fundamentally different natures by the correlations they permit \cite{Bell1964,Brunner2014}: classical correlations are bounded by Bell inequalities, the outer frontier is set by the no-signaling principle, and quantum correlations occupy a distinctive middle ground. They enable information-processing advantages beyond the classical theory \cite{Ekert1991,Acin2007,Pironio2010}, but they stop short of the full range of correlations compatible with the no-signaling principle \cite{Tsirelson1980,Popescu1994,BarrettPironio2005,Barrett2005}. This very gap poses a fundamental question: does some additional physical principle prevent quantum correlations from attaining the full strength allowed by no-signaling alone \cite{Popescu1994,Brassard2006,Pawlowski2009,NavascuesWunderlich2010,Fritz2013,Navascues2015}? At the heart of these efforts lies the precise characterization of the boundary of quantum correlations---a problem that has remained open even in the minimal Bell scenario.

Determining this boundary amounts to translating the mathematical structure of quantum theory into operational constraints on the observed statistics. The canonical tool is the semidefinite-programming hierarchy---a nested sequence of outer approximations that converges to the quantum set only asymptotically \cite{NPA2007,NPA2008,Doherty2008}, so that membership cannot in general be certified at a fixed finite level. The minimal Bell scenario offers a more tractable arena \cite{Landau1988,Masanes2003,Goh2018}. Tailored analyses have led to a sufficient criterion for correlations with vanishing marginals \cite{Landau1988,Masanes2003,Ishizaka2018}. By a convexity argument, Carath\'{e}odory's theorem \cite{Rockafellar1970} guarantees that any correlation decomposes into at most nine extreme points; although these points admit an analytical characterization \cite{Goh2018,Masanes2005,WangWuScarani2016,WolfeYelin2012,BarizienBancal2025}, the number of free parameters renders such a direct decomposition intractable. Dually, eight supporting hyperplanes \cite{Goh2018,DonohueWolfe2015} associated with the extreme points of the Clauser--Horne--Shimony--Holt (CHSH) inequality \cite{CHSH1969} have been identified through case-by-case sum-of-squares constructions \cite{Tsirelson1993,Wehner2006,BampsPironio2015}. The complete boundary, comprising infinitely many such hyperplanes, has remained analytically out of reach for such one-by-one techniques. Meanwhile, accumulating evidence has revealed striking regularities peculiar to this scenario that remain unexplained: every Bell functional exhibiting a quantum advantage possesses a unique maximizer, which seems to point to a hidden structure governing the quantum boundary that existing methods have yet to capture.

Here we close this gap by establishing a necessary and sufficient criterion for quantum correlations in the minimal Bell scenario, together with a constructive approach to its implementation. Our key insight comes from the geometry of the quantum set. We prove that every Bell functional exhibiting a strict quantum advantage has a unique quantum maximizer. More generally, any Bell functional maximized at a nonlocal exposed point admits no other nonlocal extreme maximizer, even in the absence of a quantum advantage. These results lead to a decomposition of every quantum correlation into a single correlation generated by a pure two-qubit state with projective measurements and a local correlation. For each nonlocal exposed point, we further reduce the classification of all maximal proper faces containing it to finitely many local-vertex configurations and exact feasibility tests. Exploiting this geometric structure, we reformulate the membership criterion as the feasibility of a system of trigonometric equations in five angular parameters, providing a route to its numerical implementation.

\section{Notions}
The simplest nontrivial Bell scenario involves two parties, Alice and Bob, who share a bipartite state $\rho$ and each choose at random between two binary measurements, $\hat A_x$ and $\hat B_y$ with $x,y\in\{0,1\}$, obtaining outcomes $a,b\in\{\pm1\}$ with probability $p(ab|xy)$. Equivalently, their statistics are captured by eight expectation values---the marginals $\langle A_x\rangle$ and $\langle B_y\rangle$ together with the joint correlators $E_{xy}:=\langle A_xB_y\rangle$---which we collect into a vector $\vQ:=(\langle A_x\rangle,\langle B_y\rangle,\langle A_x B_y\rangle)\in\mathbb R^8$. The set of all such vectors realizable in quantum theory is the quantum set $\cQ\subset\mathbb R^8$. Of particular interest is the subset of quantum correlations that can be realized by projective measurements on a pure two-qubit state \cite{Masanes2005,WangWuScarani2016,BarizienBancal2025}, which can be parametrized by five angles $\vtheta:=(a_0,a_1,b_0,b_1,\theta)$ as $\vQ_{\vtheta}$:
\begin{equation}\label{eq:family-main}
\begin{gathered}\cS:=\{\vQ_{\vtheta}=(A_x,B_y,E_{xy})\},\\ A_x=\cos\theta\cos a_x,\quad B_y=\cos\theta\cos b_y,\\ E_{xy}=\cos a_x\cos b_y+\sin\theta\sin a_x\sin b_y.\end{gathered}
\end{equation}
with $a_x,b_y\in[0,2\pi)$ and $\theta\in[0,\pi/2]$. This set contains all extreme points of $\cQ$ \cite{Goh2018,BarizienBancal2025}; however, membership in $\cS$ does not by itself guarantee extremality, as the extreme points form a subset $\cE\subsetneq\cS$. Sixteen of these extreme points are classical; we denote them by $\cE_L=\{\vL_l\}_{l=1}^{16}$. Their convex hull is the local polytope $\cL$ \cite{Fine1982,Froissart1981,Pitowsky1989}, whose nontrivial facets are given by the saturated CHSH inequalities \cite{CHSH1969,Sliwa2003}
\begin{equation*}
S_\eps:=\sum_{x,y=0}^{1}(-1)^{\eps_{xy}}E_{xy}\le2,
\end{equation*}
where $\eps_{xy}\in\{0,1\}$ are sign patterns satisfying $\sum_{x,y}\eps_{xy}\equiv1\pmod 2$.

The remaining, nonclassical extreme points are either exposed or non-exposed, and we denote the two sets by $\cE_{\rm es}$ and $\cE_{\rm ne}$, respectively, according to whether the point is the unique maximizer of some Bell functional \cite{Goh2018}. A Bell functional is specified by a nonzero coefficient vector $\vB=(\alpha_0,\alpha_1,\beta_0,\beta_1,c_{00},c_{01},c_{10},c_{11})\in\mathbb R^8$, acting linearly on a correlation $\vec{Q}$ via $\beta_{\vec{B}}(\vec{Q}):=\vec{B}\cdot \vec{Q}$. The polar set of $\cQ$ is
\begin{equation}\label{eq:polar}
\mathcal B=\{\vB\in\mathbb R^8:\vB\cdot\vQ\le1\ \forall\vQ\in\cQ\}, \nonumber
\end{equation}
where $Q_{\vB}:=\max_{\vQ\in\cQ}\vB\cdot\vQ$ is the quantum bound of $\vB$. We call a Bell functional frustrated if
\begin{equation}\label{eq:frustration-main}
c_{00}c_{01}c_{10}c_{11}<0. 
\end{equation}
A Bell functional defines a supporting hyperplane $H_{\vB}$ whose intersection with $\cQ$ is the exposed face $F_{\vB}$,
\begin{equation*}
H_{\vB}:=\{\vQ\in\mathbb R^8:\vB\cdot\vQ=Q_{\vB}\},\quad F_{\vB}:=H_{\vB}\cap\cQ. \nonumber
\end{equation*}
If $F_{\vB}$ is a singleton, its unique element is an exposed point; otherwise, the Bell functional exposes a proper face. A proper face is called maximal if there is no larger proper face $F_{\vB'}$ with $F_{\vB}\subsetneq F_{\vB'}$. The Bell functionals maximized at a nonlocal point $\vQ_{\vtheta}\in\cQ$ form its normal cone,
\begin{equation}\label{eq:normal-main}
\scrN(\vQ_{\vtheta})=\{\vB\in\mathbb R^8:\vB\cdot(\vQ-\vQ_{\vtheta})\le0\ \forall\vQ\in\cQ\}. \nonumber
\end{equation}
For each nonzero $\vB\in\scrN(\vQ_{\vtheta})$, the maximizing face is $F_{\vB}=\{\vQ\in\cQ:\vB\cdot\vQ=\vB\cdot\vQ_{\vtheta}\}$. The extreme rays of the cone characterize it completely: every other element is a nonnegative combination of them. At the relative boundary of $\scrN(\vQ_{\vtheta})$, the maximizing face may grow and acquire additional extreme points.

Further background on Bell operators, the overlap parametrization, and the convex geometry used below is collected in Appendix~\ref{app:basic}.

\section{Main results}
The quantum correlation set $\cQ\subset\mathbb R^8$ is compact and convex. Carath\'{e}odory's theorem guarantees that each $\vQ\in\cQ$ can be expressed as a convex combination of at most nine extreme points. This dimension-based bound, however, does not reflect the special geometry of $\cQ$. Exploiting this geometry, we obtain a more structured decomposition, valid throughout the quantum set, involving only one member of $\cS$ and a local behavior.

\begin{theorem}[One-point-plus-local decomposition]\label{thm:decomp}
In the $(2,2,2)$ Bell scenario, a behavior is quantum if and only if it decomposes into a single behavior generated by a pure real two-qubit state with real projective measurements and a local behavior. Equivalently,
\begin{equation}\label{eq:decomp-main}
\begin{gathered}\cQ=\{\lambda \vQ_{\vtheta}+(1-\lambda)\vL:
\\0\le\lambda\le1,\ \vQ_{\vtheta}\in\cS,\ \vL\in\cL\}.\end{gathered}
\end{equation}
\end{theorem}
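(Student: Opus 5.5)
The easy inclusion comes first. Every $\vQ_{\vtheta}\in\cS$ is quantum, since it is realized by the stated two-qubit state and measurements. Every $\vL\in\cL$ is quantum as well, and $\cQ$ is convex. Hence the right-hand side of \eqref{eq:decomp-main} is contained in $\cQ$.

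For the converse, fix $\vQ\in\cQ$. If $\vQ\in\cL$, take $\lambda=0$. Otherwise $\vQ\notin\cL$, and I would use a ray-shooting argument. Let $\vQ$ violate some CHSH facet, so $S_\eps(\vQ)>2$. Because $\cQ$ is compact and convex, the extreme points of $\cQ$ are exactly $\cE_L\cup\cE_{\rm es}\cup\cE_{\rm ne}\subset\cS$.

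The plan is to move $\vQ$ away from the local polytope until it hits the boundary of $\cQ$. Choose a point $\vL_0\in\cL$, for instance the relative interior point $\vec 0$ or a suitable local point on the facet $S_\eps=2$. Consider the ray $\vQ(t)=\vL_0+t(\vQ-\vL_0)$ for $t\ge1$, and let $t^*$ be the largest $t$ with $\vQ(t)\in\cQ$. Then $\vQ^*=\vQ(t^*)$ lies on the boundary of $\cQ$, and $\vQ=\lambda\vQ^*+(1-\lambda)\vL_0$ with $\lambda=1/t^*$. By a supporting hyperplane there is a Bell functional $\vB$ with $\vQ^*\in F_{\vB}$. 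The task then reduces to showing that every boundary point of $\cQ$ outside $\cL$ lies in a face $F_{\vB}$ of the form $\conv(\{\vQ_{\vtheta}\}\cup V)$, with a single nonlocal point $\vQ_{\vtheta}$ and $V\subset\cE_L$. Given that, $\vQ^*=\mu\vQ_{\vtheta}+(1-\mu)\vL'$ with $\vL'\in\cL$. Substituting gives $\vQ=\lambda\mu\vQ_{\vtheta}+(1-\lambda\mu)\vL''$, where $\vL''$ is the appropriate convex combination of $\vL'$ and $\vL_0$, which lies in $\cL$.

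The key structural step is the face result announced in the introduction. If $\vB$ has a strict quantum advantage ($Q_{\vB}>$ its local bound), it has a unique quantum maximizer. In that case $F_{\vB}$ is a single point, necessarily extreme and hence in $\cS$, and we are done. If there is no strict advantage, the face $F_{\vB}$ contains local vertices. Here I would invoke the stronger statement that a functional maximized at a nonlocal exposed point admits no other nonlocal extreme maximizer. The extreme points of $F_{\vB}$ are extreme points of $\cQ$, so by Carathéodory within the face they consist of at most one nonlocal extreme point plus local deterministic points. That gives exactly the required form.

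The case where the nonlocal extreme point in $F_{\vB}$ is non-exposed also needs care. I would handle it by noting that every point of $\cE_{\rm ne}$ is a limit of exposed points. I would then argue that a face containing two distinct nonlocal extremes would, by perturbing $\vB$ within the normal cone, produce a functional exposing a face with two nonlocal extremes at an exposed point, which is a contradiction. Since $\cS$ is compact (it is a continuous image of a compact angle domain) and $\cL$ is compact, a limiting argument also preserves the decomposition.

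The main obstacle is precisely this uniqueness-of-nonlocal-extreme-maximizer statement. It requires a genuine analysis of the quantum maximization: the Jordan reduction to qubits, then showing that the Bell operator for a frustrated functional with quantum advantage has a nondegenerate top eigenvalue whose measurement angles are uniquely determined. I would attack it through the overlap parametrization in Appendix~\ref{app:basic}, proving strict concavity of the optimized value along the relevant angle directions.
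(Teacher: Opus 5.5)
Your ray-shooting step reduces the theorem to a claim about faces of $\partial\cQ$ that is false. The gap is therefore not only the unproved uniqueness lemma.

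At the exit point $\vQ^*$ you obtain only \emph{some} supporting functional of $\cQ$, possibly with $Q_{\vB}=L_{\vB}$. You then need a hypothesis guaranteeing a single nonlocal extreme point in $F_{\vB}$. The hypothesis that delivers this is frustration of $\vB$ (Theorem~\ref{thm:frustrated-face}), not the presence of a nonlocal exposed point, which you never establish for $F_{\vB}$ anyway.

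The claim genuinely fails on the positivity faces $F_{ab|xy}=\{\vQ\in\cQ:p(ab|xy)=0\}$. Their functional $\vB\cdot\vQ=-(aA_x+bB_y+abE_{xy})=1-4p(ab|xy)$ has a single nonzero correlator coefficient, so it is unfrustrated with $Q_{\vB}=L_{\vB}=1$. The face is seven-dimensional: it contains the local facet $\{p(ab|xy)=0\}\cap\cL$ together with non-exposed nonlocal (Hardy-type) extreme points for a whole family of states. Take a nonlocal point in its relative interior, e.g.\ a mixture of a Hardy point with a relative-interior point of the local facet. Any face containing it contains all of $F_{ab|xy}$, so it lies in no face of the form $\conv(\{\vQ_{\vtheta}\}\cup V)$.

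Ray shooting cannot steer around such points. For a boundary input and $\vL_0=\vec 0$ you get $t^*=1$ and $\vQ^*=\vQ$. For the same reason your perturbation patch for non-exposed points cannot be repaired: it would prove that no face carries two nonlocal extreme points, which is false here. Moreover, moving $\vB$ inside the normal cone changes the face, and density of exposed points gives no reason for two nonlocal maximizers to persist.

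The missing idea is to avoid analyzing faces of $\partial\cQ$ altogether. The paper first treats $\vQ\in\intt\cQ\setminus\cL$ and chooses the decomposition of \emph{minimal} quantum weight $\lambda_*$ in $K_\lambda=\lambda\cQ+(1-\lambda)\cL$. The argument then runs as follows:
\begin{itemize}
\item Minimality puts $\vQ$ on $\partial K_{\lambda_*}$, and a supporting functional of $K_{\lambda_*}$ there satisfies $\vB\cdot\vQ=\lambda_*Q_{\vB}+(1-\lambda_*)L_{\vB}$.
\item Interiority of $\vQ$ in $\cQ$ gives $\vB\cdot\vQ<Q_{\vB}$, so $Q_{\vB}>L_{\vB}$ because $\lambda_*<1$.
\item Saturation of the identity makes the quantum component a maximizer of a strict-gap functional. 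Theorem~\ref{thm:strict-gap} alone then identifies it with the unique maximizer in $\cS$.
\end{itemize}
Your fixed-$\vL_0$ ray lacks exactly this feature. Its local part is not a maximizer of the same functional, so nothing forces a strict gap.

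Boundary inputs, including those on positivity faces, are then handled by approximating with $(1-\eps)\vQ+\eps\vL_{\mathrm{ref}}$. One extracts a convergent subsequence of the decomposition triples in the compact set $[0,1]\times\cS\times\cL$. This is the correct form of the limiting argument you allude to.

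Separately, ``strict concavity of the optimized value along angle directions'' is unsubstantiated in your proposal. The paper instead uses the determinant $f(s,t)=\det(\mu I-\Bell(s,t))$ with $[s^2t^2]f=16c_{00}c_{01}c_{10}c_{11}<0$. For this theorem, however, you may simply cite Theorem~\ref{thm:strict-gap}.
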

The proof is given in Appendix~\ref{app:decomposition}, building on the uniqueness theorem established in Appendix~\ref{sec:unique}.

To understand the geometric origin of the decomposition, note that every interior quantum behavior is a convex combination of an interior local point and a boundary point lying in a maximal proper face. The extreme points generating these faces therefore govern the possible decompositions. We therefore study these faces through the normal cone $\scrN(\vQ_{\vtheta})$ of a nonlocal exposed point, which is sufficient for the decomposition theorem. Nonlocal extreme points with a vanishing joint probability are non-exposed \cite{BarizienBancal2025} and lie on explicitly known positivity facets \cite{Goh2018,Hardy1993}: each condition $p(ab|xy)=0$ defines a maximal proper face $F_{ab|xy}=\{\vQ\in\cQ:p(ab|xy)=0\}$. Their supporting positivity hyperplanes are therefore identified directly, without computing any normal cone; this does not determine the full quantum structure within those faces or imply locality.

\begin{theorem}[Nontrivial maximal proper faces]\label{thm:maximal-faces}
In the $(2,2,2)$ Bell scenario, every maximal proper face $F$ containing a nonlocal exposed point $\vQ_{\vtheta}$ has no other nonlocal extreme points. Its remaining extreme points are local deterministic behaviors. More precisely,
\begin{equation}\label{eq:maximal-face}
\begin{gathered}
F=\conv\bigl(\{\vQ_{\vtheta}\}\cup V_F\bigr),\\
V_F:=F\cap\Ext(\cL).
\end{gathered}
\end{equation}
\end{theorem}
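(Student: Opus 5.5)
The plan is to combine the uniqueness result of Appendix~\ref{sec:unique} with elementary facts about faces of compact convex sets. Let $F$ be a maximal proper face containing the nonlocal exposed point $\vQ_{\vtheta}$. First I show that $F$ is exposed, $F=F_{\vB}$ for some nonzero $\vB\in\scrN(\vQ_{\vtheta})$. Take any proper exposed face containing $F$; by maximality it equals $F$. Such a face exists because a proper face lies in the relative boundary of $\cQ$, and $\cQ$ is full dimensional in $\mathbb R^8$, so a supporting hyperplane exists through any relative interior point of $F$. Since $F$ is a face of $\cQ$, we have $\Ext(F)=F\cap\Ext(\cQ)$. By Krein--Milman (Minkowski in finite dimension), $F=\conv(\Ext F)$. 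The theorem therefore reduces to two claims about $\Ext(F)$: it contains no nonlocal extreme point other than $\vQ_{\vtheta}$, and its local extreme points are exactly $F\cap\Ext(\cL)$.

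The second claim is easy. The local extreme points of $\cQ$ are the sixteen deterministic vertices $\cE_L=\Ext(\cL)$. Conversely, every deterministic vertex is extreme in $\cQ$, as stated in the Notions section. Hence the local part of $\Ext(F)$ is $F\cap\Ext(\cL)=V_F$.

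For the first claim I invoke the uniqueness theorem in its general form stated in the introduction: any Bell functional maximized at a nonlocal exposed point admits no other nonlocal extreme maximizer. Apply it to $\vB$. Every extreme point of $F_{\vB}$ is an extreme maximizer of $\beta_{\vB}$, so any nonlocal one must equal $\vQ_{\vtheta}$. Two cases arise. If $\vB$ has a strict quantum advantage, uniqueness of the quantum maximizer makes $F$ a singleton. That would contradict $F$ being maximal unless $F$ is contained in no larger proper face; but every exposed point lies on the boundary of some larger face (for example, perturbing $\vB$ to the relative boundary of $\scrN(\vQ_{\vtheta})$), so this case does not yield a maximal face. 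If instead $\vB$ lies on the relative boundary of the normal cone, with no strict advantage, the general version still applies. Combining these, $\Ext(F)\subseteq\{\vQ_{\vtheta}\}\cup V_F$, and $F=\conv(\{\vQ_{\vtheta}\}\cup V_F)$ follows. The non-exposed nonlocal points, for instance those on positivity faces $F_{ab|xy}$, are excluded by the same uniqueness statement, since they are nonlocal extreme maximizers of $\vB$.

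The main obstacle is entirely inside the general uniqueness statement for functionals on the boundary of $\scrN(\vQ_{\vtheta})$, where $F_{\vB}$ can grow. There one must rule out that the hyperplane $H_{\vB}$ touches $\cS$ at a second nonlocal point $\vQ_{\vtheta'}$. I would attack this with the explicit parametrization~\eqref{eq:family-main}. Maximizing $\vB\cdot\vQ_{\vtheta}$ over the five angles gives stationarity equations. The plan is to show that two distinct nonlocal stationary maxima would force the segment between them to lie in $\cQ\cap H_{\vB}$. This would contradict the strict curvature of the two-qubit family, via the Bell-operator spectral argument (a top eigenvalue that is nondegenerate whenever the maximizer is nonlocal). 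Since the paper proves this in Appendix~\ref{sec:unique}, I take it as given, and the remaining argument above is routine convex geometry.
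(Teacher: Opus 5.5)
Your convex-geometric reduction is correct and matches the paper's (compare the proof of Corollary~\ref{thm:finite-cone}). A maximal proper face is exposed, so $F=F_{\vB}$ with $0\neq\vB\in\scrN(\vQ_{\vtheta})$. Then $\Ext F=F\cap\Ext(\cQ)$, its local members are exactly $V_F$, and $F=\conv(\Ext F)$. The gap is in the one step that carries content. Applied to the exposing functional of $F$, the introduction's sentence you quote is precisely the first claim of the theorem, so quoting it leaves that claim unproved. Appendix~\ref{sec:unique} does not prove it in that form.

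What the appendix proves is Theorem~\ref{thm:frustrated-face}, under the hypothesis $c_{00}c_{01}c_{10}c_{11}<0$. Lemma~\ref{lem:zero-line-poly} needs $[s^2t^2]f=16c_{00}c_{01}c_{10}c_{11}<0$, and Lemmas~\ref{lem:simple-eigenvalue} and~\ref{lem:no-zero-line} need every $c_{xy}\neq0$. The missing idea is that every nonzero $\vB\in\scrN(\vQ_{\vtheta})$ is frustrated. When $Q_{\vB}>L_{\vB}$ this is Proposition~\ref{prop:force-frustration}(a), but that case only yields $F=\{\vQ_{\vtheta}\}$. The case that matters is $Q_{\vB}=L_{\vB}$, where frustration is not automatic. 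There you need the external input that nonlocal exposed points have strictly positive joint probabilities \cite{BarizienBancal2025}, so that Proposition~\ref{prop:force-frustration}(b) applies. Theorem~\ref{thm:frustrated-face} then gives $F=\conv(\{\vQ_{\vtheta}\}\cup(F\cap\cL))$. Since $F\cap\cL$ is a face of the polytope $\cL$, it is the convex hull of $V_F$.

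Two smaller points. First, your dismissal of the singleton case is unjustified. Moving $\vB$ to the relative boundary of a normal cone need not enlarge the exposed face: at a corner of a planar lens every normal exposes only the corner, which is itself a maximal proper face. You do not need the dismissal anyway, since $F=\{\vQ_{\vtheta}\}$ satisfies the formula with $V_F=\emptyset$.

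Second, the route you sketch for uniqueness would not work. Two maximizers of a linear functional always span a segment in $\cQ\cap H_{\vB}$, and $\cQ$ has genuinely flat faces (the triangles of Sec.~\ref{sec:four-face-example}). So no curvature property of $\cS$ gives a contradiction. The paper instead restricts $f(s,t)=\det(\mu I-\Bell(s,t))\ge0$ to the line through two optimal overlap pairs. When both coordinates differ, this gives $\kappa u^2(u-1)^2$ with $\kappa<0$, which is negative between the zeros. The remaining axis-parallel zero line is excluded by the kernel argument. Both steps again rest on frustration.
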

This follows from the structure of frustrated maximizing faces (Theorem~\ref{thm:frustrated-face}) proved in Appendix~\ref{sec:unique}; see also Theorem~\ref{thm:no-common-support}. 

\paragraph*{Frustrated signs and quantum advantage.}
Frustrated signs are a necessary condition for a quantum advantage. Local output relabelings allow Alice and Bob to flip their outcomes independently at each input. For $u_x,v_y\in\{\pm1\}$, these transformations act as $A'_x=u_xA_x$, $B'_y=v_yB_y$, and $E'_{xy}=u_xv_yE_{xy}$. Transforming the Bell coefficients simultaneously as $\alpha'_x=u_x\alpha_x$, $\beta'_y=v_y\beta_y$, and $c'_{xy}=u_xv_yc_{xy}$ preserves the Bell value and both optimization bounds, $\beta_{\vB'}(\vQ')=\beta_{\vB}(\vQ)$. Thus, if $\vB$ presents a quantum advantage, so does $\vB'$; the two are equivalent for our purposes.

For a Bell functional $\vB$ (possibly after relabeling) and one of its maximizers $\vQ$, one can construct a local behavior $\vL$ with the same marginals and correlators $E_{xy}^{\rm Loc}:=1-|A_x-B_y|\ge E_{xy}$.  Since $E_{xy}^{\rm Loc}-E_{xy}
=4\min\{p(+-|xy),p(-+|xy)\},$
and the local construction preserves the marginals,
we obtain
\begin{equation}\label{eq:local-comparison-main}
\beta_{\vB}(\vL)-\beta_{\vB}(\vQ)
=4\sum_{x,y}c_{xy}
\min\{p(+-|xy),p(-+|xy)\}. \nonumber
\end{equation}
Thus, when all $c_{xy}\ge0$, the local behavior attains a Bell value no smaller than that of
$\vQ$, with strict inequality whenever $c_{xy}>0$ and both corresponding probabilities
are positive for some input pair. Since local output relabelings can make all
nonzero correlator coefficients nonnegative
whenever $c_{00}c_{01}c_{10}c_{11}\ge0$,
a strict quantum advantage necessarily requires
frustrated signs,
$c_{00}c_{01}c_{10}c_{11}<0$.
Moreover, if a maximizing quantum behavior has
strictly positive joint probabilities, every
nonzero Bell functional supporting it must be
frustrated, even when the quantum and local
bounds coincide.

We prove that every frustrated Bell functional maximized at exposed point 
has at most one nonlocal extreme maximizer,
although local maximizers may coexist.
Consequently, any maximizing face containing
a nonlocal extreme point $\vQ_{\vtheta}$ has
the form $F_{\vB}
=\conv\bigl(\{\vQ_{\vtheta}\}
\cup(F_{\vB}\cap\cL)\bigr).$ In particular, every Bell functional exhibiting
a strict quantum advantage has a unique quantum
maximizer, resolving the conjecture of
Ref.~\cite{Goh2018}.
The proof is given in Appendix~\ref{app:faces}.

This geometric structure implies that every
point on such an exposed face is a convex mixture
of $\vQ_{\vtheta}$ and a local behavior.
The same representation is not restricted to
these faces. Indeed, the minimum-weight
construction in Appendix~\ref{app:decomposition}
establishes the decomposition for interior
quantum behaviors, while compactness extends it
to the boundary of $\cQ$, including points on
positivity faces. Hence every quantum behavior
admits the one-point-plus-local decomposition
stated in Theorem~\ref{thm:decomp}.

Finally, we note that the decomposition theorem also yields a finite-dimensional realization of every quantum behavior. The component $\vQ_{\vtheta}$ admits a two-qubit realization with state $\rho_{\vtheta}$, while any local behavior $\vL$ can be realized by a separable state $\rho_L$ on \rev{$\mathbb C^4\otimes\mathbb C^4$ \cite{DonohueWolfe2015}. Indeed, each local four-dimensional basis labels one of the four predetermined output pairs; a diagonal separable state encoding their joint distribution, with diagonal projective observables, reproduces all local behaviors.} The decomposition admits a direct-sum realization with $\mathcal H_A=\mathcal H_B=\mathbb C^2\oplus\mathbb C^4$, block-diagonal local observables, and $\rho=\lambda\rho_{\vtheta}\oplus(1-\lambda)\rho_L$ supported on the two matched tensor-product sectors. Consequently, a $6\otimes6$ Hilbert space suffices to realize every quantum behavior in the $(2,2,2)$ Bell scenario.

\section{Normal cone of a quantum extreme point}
Our uniqueness theorem makes the face structure finite at each fixed nonlocal exposed point, without requiring its normal cone to be polyhedral.

\begin{corollary}[Finite maximal-face classification]\label{thm:finite-cone}
Fix a nonlocal exposed point $\vQ_{\vtheta}$, and let $\{\vL_i\}_{i=1}^{8}=\Ext(\cL)$ are the eight  local extreme correlations saturating the CHSH inequality of  $\vec{Q}_{\vec{\theta}}$. Every exposed face $F_{\vB}$ containing $\vQ_{\vtheta}$ satisfies
\begin{equation}\label{eq:finite-cone}
F_{\vB}=\conv\bigl(\{\vQ_{\vtheta}\}\cup\{\vL_i:\vB\cdot\vL_i=\vB\cdot \vQ_{\vtheta}\}\bigr).
\end{equation}
Thus only finitely many exposed faces, and in particular finitely many maximal proper faces, contain $\vQ_{\vtheta}$. All such maximal faces are obtained by retaining the inclusion-maximal feasible sets $I\subseteq\{1,\ldots,8\}$ satisfying
\begin{equation*}
\begin{gathered}
\vB\cdot \vQ_{\vtheta}=\max_{\vtheta'}\vB\cdot\vQ_{\vtheta'}=1,\\
\vB\cdot\vL_i=1\ (i\in I),\qquad
\vB\cdot\vL_j<1\ (j\notin I)
\end{gathered}
\end{equation*}
for some Bell functional $\vB$.
\end{corollary}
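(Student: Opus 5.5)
The plan is to deduce the corollary from the structure results already stated: the uniqueness of nonlocal extreme maximizers, and Theorem~\ref{thm:maximal-faces} together with its exposed-face version $F_{\vB}=\conv(\{\vQ_{\vtheta}\}\cup(F_{\vB}\cap\cL))$. The argument has three steps.

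\emph{Step 1: apply the face structure.} Let $\vB$ be a nonzero functional whose exposed face $F_{\vB}$ contains the nonlocal exposed point $\vQ_{\vtheta}$. Because $F_{\vB}$ is a face of the compact convex set $\cQ$, it is the convex hull of its own extreme points, and these are extreme points of $\cQ$. By the uniqueness result, $\vQ_{\vtheta}$ is the only nonlocal one among them. Every other extreme point of $F_{\vB}$ therefore lies in $\cE_L=\Ext(\cL)$. This gives $F_{\vB}=\conv(\{\vQ_{\vtheta}\}\cup V)$ with $V=F_{\vB}\cap\Ext(\cL)=\{\vL\in\Ext(\cL):\vB\cdot\vL=\vB\cdot\vQ_{\vtheta}\}$.

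\emph{Step 2: restrict to the eight CHSH-saturating vertices.} I would show that only those eight deterministic points can appear in $V$. Since $\vQ_{\vtheta}$ is nonlocal, it violates exactly one CHSH inequality $S_\eps\le2$. Nonlocal points of $\cS$ violate a single CHSH inequality; I take this as standard in the minimal scenario, or as a consequence of the appendix parametrization. Any point $\vQ\in F_{\vB}$ lying strictly between $\vQ_{\vtheta}$ and a vertex $\vL$ gives a convex decomposition of face points. Suppose $\vL$ had $S_\eps(\vL)=-2$, or sat on the other side of $S_\eps=0$. I would use the frustrated sign structure from the local-comparison argument to contradict $\vB\cdot\vL=\vB\cdot\vQ_{\vtheta}$, namely that the correlator signs of $\vB$ must match $\eps$ after relabeling, so that $\vB$ is aligned with that facet. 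The intended mechanism is this. The segment $[\vQ_{\vtheta},\vL]$ would stay in the boundary. Then moving $\vQ_{\vtheta}$ slightly along $\cS$ in the direction that decreases $S_\eps$ would raise $\vB$ above $Q_{\vB}$, contradicting maximality.

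This step is the main obstacle. If it resists, I would fall back to a weaker but sufficient statement: $V\subseteq\Ext(\cL)$, a set of sixteen points. That still yields finiteness. I would then identify the relevant eight as those with $S_\eps(\vL)=2$, using the fact that the face must contain nonlocal points near $\vQ_{\vtheta}$. Checking this is a case analysis over the sixteen deterministic points against the frustrated sign pattern of $\vB$.

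\emph{Step 3: finiteness and the feasibility characterization.} By Step~1, each exposed face containing $\vQ_{\vtheta}$ is determined by a subset $I\subseteq\{1,\dots,8\}$, so there are at most $2^8$ such faces. Maximal proper faces are in particular exposed faces containing $\vQ_{\vtheta}$, up to the standard fact that maximal proper faces of a compact convex set are exposed; this holds because a maximal face is contained in some supporting hyperplane's face. Hence there are finitely many maximal proper faces.

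For the characterization, normalize $\vB$ so that $\vB\cdot\vQ_{\vtheta}=1$. This is possible because $Q_{\vB}>0$ can be arranged by adding a constant shift, or by noting that the face is proper. Since $\cQ=\conv\cS$, the condition $\vB\cdot\vQ_{\vtheta}=Q_{\vB}$ is equivalent to $\max_{\vtheta'}\vB\cdot\vQ_{\vtheta'}=1$. The set $I$ of saturated vertices is exactly the index set in \eqref{eq:finite-cone}. Inclusion among faces then corresponds to inclusion among the feasible sets $I$. So the maximal faces correspond to the inclusion-maximal feasible $I$, with maximality taken among proper faces, i.e.\ excluding $\vB=0$.
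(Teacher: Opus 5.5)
Your Steps~1 and~3 follow the paper's proof. The paper also obtains $F_{\vB}=\conv(\{\vQ_{\vtheta}\}\cup(F_{\vB}\cap\cL))$, notes that $F_{\vB}\cap\cL$ is generated by saturated deterministic vertices, counts subsets, uses that maximal proper faces are exposed (Proposition~\ref{prop:extreme-normals}), and matches face inclusion with index-set inclusion because the $\vL_i$ are extreme points of $\cQ$. Two repairs are needed there.

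First, that face structure comes from Theorem~\ref{thm:frustrated-face}, which requires $\vB$ to be frustrated. You never check this, and it is the one substantive line of the paper's proof. Use Proposition~\ref{prop:force-frustration}(a) if $Q_{\vB}>L_{\vB}$. If $Q_{\vB}=L_{\vB}$, use part~(b) together with the cited fact that nonlocal exposed points have strictly positive joint probabilities. Alternatively, apply Theorem~\ref{thm:maximal-faces} to a maximal proper face containing $F_{\vB}$.

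Second, the functionals $\vB\cdot\vQ$ have no constant term, so a ``constant shift'' is unavailable. The normalization $\vB\cdot\vQ_{\vtheta}=1$ is legitimate because the origin is interior to $\cQ$ (Lemma~\ref{lem:local-origin}), which gives $Q_{\vB}>0$ for every $\vB\neq0$.

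The genuine gap is Step~2, and the mechanism you sketch cannot work. Every point of $\cS\subseteq\cQ$ has $\vB$-value at most $Q_{\vB}$, so no motion along $\cS$ can raise $\vB$ above $Q_{\vB}$. The claimed alignment of the correlator signs of $\vB$ with $\eps$ is unproven. Even if true, it would not determine which vertices are saturated, since the marginal coefficients also matter.

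Be aware that the paper's proof does not establish this restriction either. It argues with all sixteen deterministic vertices (``only sixteen such vertices''; $I\subseteq\{1,\ldots,16\}$ in Appendix~\ref{app:normal-cone}), which is exactly your fallback. That yields finiteness and the feasibility characterization over subsets of sixteen indices, but not the index set $\{1,\ldots,8\}$ in the statement.

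A short crossing argument closes the gap, using only the cited positivity input the paper already relies on.
\begin{itemize}
\item Two distinct CHSH expressions are either negatives of each other or agree in exactly two correlator signs. Hence $S_\eps+S_{\eps'}\le4$, so $S_{\eps'}(\vQ_{\vtheta})<2$ for every $\eps'\neq\eps$. By the cited input, all joint probabilities of $\vQ_{\vtheta}$ are also strictly positive.
\item Suppose a deterministic $\vL\in F_{\vB}$ has $S_\eps(\vL)<2$. Let $\vec X=\mu\vQ_{\vtheta}+(1-\mu)\vL$ with $0<\mu<1$ chosen so that $S_\eps(\vec X)=2$.
\item Then $\vec X\in F_{\vB}$, and since $\mu>0$ it satisfies every positivity and CHSH inequality of $\cL$ strictly, except $S_\eps\le2$. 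So $\vec X$ lies in the relative interior of the CHSH facet $F_0=\cL\cap\{S_\eps=2\}$.
\item The set $\{\vQ'\in\cL:\vB\cdot\vQ'=Q_{\vB}\}$ is a face of $\cL$ that meets the relative interior of $F_0$, so it contains $F_0$.
\item Because $F_0$ is a facet, $\vB$ is then constant (equal to $Q_{\vB}$) on the hyperplane $S_\eps=2$. This forces $\vB\cdot\vQ'=\tfrac{Q_{\vB}}{2}S_\eps(\vQ')$ for all $\vQ'$.
\item Consequently $\vB\cdot\vL=\tfrac{Q_{\vB}}{2}S_\eps(\vL)<Q_{\vB}$, contradicting $\vL\in F_{\vB}$.
\end{itemize}
Hence every deterministic vertex of $F_{\vB}$ saturates the CHSH inequality violated by $\vQ_{\vtheta}$, which justifies restricting $I$ to the eight indices.
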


Indeed, by the maximization requirement, $\vB \in \scrN(\vQ_{\vtheta})$ if and only if $\vB \cdot \vQ_{\vtheta} = Q_{\vB}$, which implies $\vB \cdot \partial \vQ_{\vtheta} / \partial \theta_k = 0$ for all $\theta_{k}\in \{a_{x}, b_{y}, \theta\}_{x, y=0, 1}$. With $[T_{\vtheta}]_{ik} := \partial (\vQ_{\vtheta})_i / \partial \theta_k$ the $8 \times 5$ tangent matrix, these conditions read $T_{\vtheta}^{T} \vB = 0$. For $\theta \ne \pi/2$ the tangent matrix has rank five, imposing five independent linear constraints on $\vB$. At $\theta = \pi/2$ the tangent rank drops to four at regular points; however, a common shift $a_x \mapsto a_x + t$, $b_y \mapsto b_y + t$ leaves the behavior unchanged, which supplies the additional constraint $\sum_{x=0}^{1} \alpha_x \sin a_x + \sum_{y=0}^{1} \beta_y \sin b_y = 0$. Fixing the overall scale of $\vB$ by $\vB \cdot \vQ_{\vtheta} = 1$, these six independent linear constraints in total reduce the candidate normals from $\mathbb{R}^8$ to a two-parameter affine family $\vB(c_1, c_2)$. For a supporting face through a nonlocal exposed point, all additional extreme points must be local deterministic vertices. Imposing the saturation conditions $\vB(c_1, c_2) \cdot \vL_i = 1$ at two such local vertices typically fixes the remaining parameters, provided that these constraints are independent. We denote the resulting candidate by $\vB_{ij}$ and retain it only if $\vB_{ij} \cdot \vL_k \le 1$ for all other local vertices $\vL_k$; the index set of the local vertices entering the decomposition is denoted by $I_{\vtheta}$. As an example, consider the extreme point $\vQ_{\vtheta}$ with parameters $(a_0, a_1, b_0, b_1, \theta) = (5\pi/6,\, 3\pi/2,\, 2\pi/3,\, 7\pi/5,\, 11\pi/20)$; its normal cone---a pentagon generated by five such functionals $\vB_{ij}$---is computed in the appendix.

A certified face description can also restrict the classical part of a correlation in its decompositions. Since the origin lies in the interior of $\cQ$, every $\vQ\in\cQ$ lies on a segment joining the origin to a boundary point, $\vQ=\lambda_{\mathrm{rad}}\vQ_{\partial}+(1-\lambda_{\mathrm{rad}})\vQ_0$, with $\vQ_{\partial}\in\partial\cQ$ and $0\le\lambda_{\mathrm{rad}}\le1$, where $\vQ_0$ is the correlation all of whose elements vanish. Applying Theorem~\ref{thm:decomp} to $\vQ_{\partial}$ and expanding its local component into deterministic vertices gives
\begin{equation}\label{eq:radial-decomp}
\vQ=\lambda_0\vQ_{\vtheta}+\sum_{j\in I_{\vtheta}}\lambda_j\vL_j,
\end{equation}
where $\vQ_{\vtheta}\in\cS$, all weights are nonnegative, and $\lambda_0+\sum_{j\in I_{\vtheta}}\lambda_j=\lambda_{\mathrm{rad}}\le1$; the remaining weight $1-\lambda_{\mathrm{rad}}$ is carried by $\vQ_0$ and contributes nothing. If $\vQ_{\partial}$ lies in an identified face $F_I$ from Theorem~\ref{thm:finite-cone}, its local component can be restricted to the vertices indexed by $I$. The classification imposes no two-vertex bound; without an identified face, all sixteen local vertices remain available.

A supplied feasible pair took a median $0.0128\,\mathrm{s}$; cyclic testing of the $28$ pairs took $0.339\,\mathrm{s}$. These are observed numerical-search times, not exact certification costs or worst-case bounds. An unsuccessful local search leaves the input unresolved rather than establishing nonquantumness.

\paragraph*{Summary.}
The quantum boundary of Bell correlations delimits the nonlocality that
nature can produce. Yet even in the minimal Bell scenario, the quantum
set was known only through its extremal points, which cannot by itself
decide whether a given correlation arises in nature: testing it still
requires searching over unknown convex mixtures. Here we settle this
question. A correlation is quantum if and only if it decomposes into a
single pure-two-qubit correlation and a local correlation, reducing the
search to one extremal point. The proof rests on a new uniqueness
result---strict quantum advantage implies a unique quantum
maximizer---and also yields a finite classification of the maximal
proper faces through each nonlocal exposed point. Constructively, the
criterion becomes eight necessary and sufficient inequalities in five
angles, with all mixture weights eliminated and recovered explicitly,
and no separating Bell inequality required. This turns the
extremal-point description into an explicit test of what nature can and
cannot produce.

\onecolumngrid
\appendix
\clearpage

\clearpage
\section{Basic notions}\label{app:basic}
As preparation, we introduce the correlation sets involved here and recall the basic notions of convex geometry.

\subsection{Bell correlations in the minimal Bell scenario}\label{appcontents:A:1}
In the $(2,2,2)$ Bell scenario, Alice and Bob choose inputs $x,y\in\{0,1\}$ in each round and obtain outcomes $a,b\in\{\pm1\}$. A behavior is a family of nonnegative, normalized conditional probabilities $\{p(ab|xy)\}_{a,b,x,y}$. A no-signaling behavior can equivalently be specified by eight expectation values,
\begin{equation*}
\vQ=(A_0,A_1,B_0,B_1,E_{00},E_{01},E_{10},E_{11})\in\mathbb R^8,
\end{equation*}
related to its joint statistics by
\begin{equation}\label{eq:probabilities}
p(ab|xy)=\frac{1+aA_x+bB_y+abE_{xy}}{4}.
\end{equation}
\begin{itemize}
\item $\cL$: the local set---correlations compatible with local realism. It is the convex hull of the sixteen deterministic behaviors $\vL_{\eps;\eta}$, indexed by $\eps=(\eps_0,\eps_1)$ and $\eta=(\eta_0,\eta_1)$ in $\{\pm1\}^2$, and given by
\begin{equation*}
A_x=\eps_x,\qquad B_y=\eta_y,\qquad E_{xy}=\eps_x\eta_y.
\end{equation*}
Thus $\cL$ is a polytope. We use ``local'' and ``classical'' interchangeably.
\item $\cQ$: the quantum set---a behavior $\vQ$ is quantum if it admits a realization
\begin{equation*}
\begin{aligned}
A_x&=\tr\bigl[\rho(\hat A_x\otimes I)\bigr],\\
B_y&=\tr\bigl[\rho(I\otimes\hat B_y)\bigr],\\
E_{xy}&=\tr\bigl[\rho(\hat A_x\otimes\hat B_y)\bigr],
\end{aligned}
\end{equation*}
where $\rho$ is a density operator and $\hat A_x,\hat B_y$ represent the local binary measurements.
\item $\cS$: the pure-two-qubit projective family---the quantum correlations generated by pure two-qubit states and real projective measurements. It is the compact five-parameter family given by
\begin{equation}\label{eq:family-app}
\begin{aligned}
A_x&=\cos\theta\cos a_x,\qquad B_y=\cos\theta\cos b_y,\\
E_{xy}&=\cos a_x\cos b_y+\sin\theta\sin a_x\sin b_y,
\end{aligned}
\end{equation}
where $0\le\theta\le\pi/2$ and all measurement angles are taken modulo $2\pi$. Its members are realized by
\begin{equation*}
\begin{gathered}
|\psi_\theta\rangle=\cos(\theta/2)|00\rangle+\sin(\theta/2)|11\rangle,\\
\hat A_x=\cos a_x Z+\sin a_x X,\qquad\hat B_y=\cos b_y Z+\sin b_y X,
\end{gathered}
\end{equation*}
where $X$ and $Z$ are Pauli matrices \cite{Masanes2005,BarizienBancal2025}.
\item $\Ext(K)$: the extreme-point set of a nonempty compact convex set $K$. A point $\vQ\in K$ is extreme if
\begin{equation*}
\vQ=\lambda \vQ_1+(1-\lambda)\vQ_2,\quad \vQ_1,\vQ_2\in K,\quad0<\lambda<1\ \Longrightarrow\ \vQ_1=\vQ_2=\vQ.
\end{equation*}
Every nonempty compact convex set in finite dimension is generated by its extreme points: $K=\conv\Ext(K)$ \cite{Rockafellar1970}.
\item $\conv E$: the convex hull---a convex mixture $\vQ=\sum_i\lambda_i\vQ_i$, with $\lambda_i\ge0$ and $\sum_i\lambda_i=1$, describes a shared classical random choice among behaviors. The convex hull of $E\subseteq\mathbb R^8$ is the set of all such mixtures,
\begin{equation}\label{eq:conv-def}
\conv E=\left\{\sum_{i=1}^{m}\lambda_i\vQ_i:m\in\mathbb N,\ \vQ_i\in E,\ \lambda_i\ge0,\ \sum_{i=1}^{m}\lambda_i=1\right\}.
\end{equation}
A set is convex if it contains every convex mixture of its points.
\item $\partial K$: the boundary of set $K$.
\end{itemize}
Membership in $\cS$ does not imply extremality, and $\cS$ is nonconvex. Writing $\cE:=\Ext(\cQ)$, we have $\cE\subsetneq\cS\subsetneq\cQ$.
\begin{proposition}[Convex-hull reduction \cite{Masanes2005,BarizienBancal2025}]\label{prop:convex-hull}
The quantum set is compact and
\begin{equation}\label{eq:conv-Q}
\cQ=\conv\cS.
\end{equation}
\end{proposition}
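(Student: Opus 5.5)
Compactness and the identity $\cQ=\conv\cS$ will be handled separately.

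\emph{Compactness.} The quantum set is bounded, since $\cQ\subseteq[-1,1]^8$. The substantive part is closedness. Here I would use the standard Jordan-lemma reduction: any pair of binary observables on each side can be dilated to projective measurements. The pair $(\hat A_0,\hat A_1)$ then block-diagonalizes into at most two-dimensional blocks, and likewise for Bob. Consequently every quantum behavior is a convex mixture of behaviors realized on $\mathbb C^2\otimes\mathbb C^2$ with projective measurements.

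\emph{Reduction to $\cS$.} For the inclusion $\cQ\subseteq\conv\cS$:
\begin{itemize}
\item A mixed two-qubit state is a convex mixture of pure states, so it suffices to treat pure states.
\item A pure two-qubit state can be brought to its Schmidt form $\cos(\theta/2)|00\rangle+\sin(\theta/2)|11\rangle$ by local unitaries, which are absorbed into the observables.
\item Each block observable has the form $\hat A=\pm I$ or $\hat A=\vec a\cdot\vec\sigma$. Degenerate blocks $\pm I$ correspond to the angles $a_x\in\{0,\pi\}$ combined with $\theta=0$ choices, or are handled by mixing. I would simply note that a constant outcome on one side factorizes, so it is reproduced by deterministic points. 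These lie in $\cS$: take $\theta=0$ and $a_x,b_y\in\{0,\pi\}$, which gives $E_{xy}=\cos a_x\cos b_y=A_xB_y$.
\item The Bloch vectors may have $Y$ components. I would remove them by the known argument that the behavior depends only on real parts: averaging the state with its complex conjugate (partial transposition invariance of correlators of $Z,X$, with $Y\mapsto -Y$ on both sides) and applying a local rotation about $Z$. This places both Bloch vectors in the $XZ$-plane, and the statistics coincide with those of \eqref{eq:family-app}.
\end{itemize}
This last realification step is the one I expect to be the main obstacle. With a non-maximally entangled state, rotating about $Z$ preserves the Schmidt state only when the rotations are correlated, by $e^{i\phi Z}\otimes e^{-i\phi Z}$. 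My plan is therefore to use this freedom together with the conjugation symmetry to map $(\vec a_0,\vec a_1)$ into a real plane. When they are not coplanar with $Z$, I would instead decompose via the observation that the correlators $\langle \sigma_i\otimes\sigma_j\rangle$ for $|\psi_\theta\rangle$ are diagonal in $\{X,Y,Z\}$ with entries $(\sin\theta,-\sin\theta,1)$. The resulting correlation is then a convex combination of the two complex-conjugate realizations. Averaging those two realizations gives a real one on a larger space, and that real realization is again reducible by Jordan's lemma to real qubit blocks.

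\emph{Closedness and the reverse inclusion.} Since $\cS$ is the continuous image of the compact angle set, it is compact. By Carathéodory's theorem, $\conv\cS$ is then compact in $\mathbb R^8$. This gives $\cQ=\conv\cS$ as a closed set, hence compactness. The reverse inclusion $\conv\cS\subseteq\cQ$ holds by direct-sum realizations with a classical flag, exactly as in the $6\otimes6$ construction of the main text.
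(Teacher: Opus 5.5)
The paper does not prove this proposition; it imports it from Masanes and Barizien--Bancal. Your outline follows the standard reduction: Naimark dilation, Jordan blocks, pure-state decomposition, deterministic points for degenerate blocks, closedness from compactness of $\conv\cS$ via Carath\'eodory, and flagged direct sums for $\conv\cS\subseteq\cQ$. Those parts are fine.

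The gap is the realification step. It arises because you spend the local-unitary freedom on the state before using it on the measurements. After Schmidt normalization, $|\psi_\theta\rangle$ is already real. Averaging with the complex conjugate therefore only replaces each Bloch vector by its copy with the $Y$-component flipped. This gives an identical behaviour and removes nothing. For $0<\theta<\pi/2$, the only local unitaries fixing $|\psi_\theta\rangle$ are $e^{i\phi Z}\otimes e^{-i\phi Z}$. That is one parameter, and it cannot in general bring both parties' pairs into the $XZ$-plane.

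The diagonal correlation matrix does not help either. Writing $\vec a_x=(a_x^X,a_x^Y,a_x^Z)$, one has $E_{xy}=a_x^Zb_y^Z+\sin\theta\,(a_x^Xb_y^X-a_x^Yb_y^Y)$, which still depends on the $Y$-components. Calling the behaviour a convex combination of the two conjugate realizations is vacuous, since they coincide.

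Your fallback claims that a flagged mixture of a realization and its conjugate is locally unitarily equivalent to a real realization, after which real Jordan applies. This is true, but you only assert it. A proof needs, for example, a flag-qubit rotation $Z\mapsto Y$ on each side, turning $R\otimes I+iS\otimes Z$ into the real matrix $R\otimes I+S\otimes iY$, and similarly for the state. Even then, the argument closes only if the final Schmidt step uses real orthogonal maps. Redoing it with complex local unitaries puts you back at the same obstacle.

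The fix is to reverse the order; no enlarged space is then needed.
\begin{itemize}
\item In each $2\otimes2$ block, first use independent local unitaries to put $\vec a_0,\vec a_1$ and $\vec b_0,\vec b_1$ into the $XZ$-plane. This is always possible, since two vectors span at most a plane.
\item All of $\hat A_x\otimes I$, $I\otimes\hat B_y$ and $\hat A_x\otimes\hat B_y$ are now real symmetric. Hence $\tr(\rho O)=\tr(\operatorname{Re}\rho\,O)$, where $\operatorname{Re}\rho=(\rho+\bar\rho)/2$ is a real density matrix on the same space. Its eigenvectors are real pure states.
\item Write the coefficient matrix of such a state as $M=O_1\Sigma O_2^T$, a real SVD with $O_1,O_2\in O(2)$. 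The local map $O_1^T\otimes O_2^T$ brings the state to $\cos(\theta/2)|00\rangle+\sin(\theta/2)|11\rangle$ with $\theta\in[0,\pi/2]$.
\item Conjugating a real traceless involution by an orthogonal matrix, reflections included, again gives one of the form $\cos a'Z+\sin a'X$. So every block behaviour lies in $\cS$.
\end{itemize}
This is essentially the argument in the cited references.

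A minor point: for infinite-dimensional realizations, Jordan's lemma gives a direct integral rather than a direct sum. You therefore first obtain membership in the closure of $\conv\cS$. This is harmless, because you have already shown that $\conv\cS$ is compact.
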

This is a cited input to our argument.

\subsection{Bell functionals and convex geometry}\label{appcontents:A:2}
A Bell functional $\beta_{\vB}$ is a linear functional on behaviors. We identify it with its coefficient vector
\begin{equation*}
\vB=(\alpha_0,\alpha_1,\beta_0,\beta_1,c_{00},c_{01},c_{10},c_{11})\in\mathbb R^8
\end{equation*}
in the dual space, acting on $\vQ$ by
\begin{equation}\label{eq:bell-functional}
\beta_{\vB}(\vQ)=\vB\cdot\vQ=\sum_x\alpha_xA_x+\sum_y\beta_yB_y+\sum_{x,y}c_{xy}E_{xy}.
\end{equation}
Write $C=(c_{xy})_{x,y=0}^{1}$ for the matrix of correlator coefficients. For a nonempty compact correlation set $K$, put
\begin{equation*}
h_K(\vB)=\max_{\vQ\in K}\beta_{\vB}(\vQ),
\end{equation*}
and abbreviate the local and quantum bounds by
\begin{equation*}
L_{\vB}=\max_{\vQ\in\cL}\beta_{\vB}(\vQ),\qquad Q_{\vB}=\max_{\vQ\in\cQ}\beta_{\vB}(\vQ).
\end{equation*}
The inequality $\beta_{\vB}(\vQ)\le L_{\vB}$ is a Bell inequality. A functional has a \emph{strict quantum advantage}, or a strict Bell gap, if $Q_{\vB}>L_{\vB}$.
\begin{itemize}
\item $H_{\vB}$: the \textbf{supporting hyperplane}---for a nonzero Bell functional, the hyperplane supporting a correlation set $K$ is
\begin{equation*}
H_{\vB}=\{\vQ\in\mathbb R^8:\beta_{\vB}(\vQ)=h_K(\vB)\}.
\end{equation*}
Unless another set is specified, $K=\cQ$.
\item $F_{\vB}$: the \textbf{exposed face}---the intersection of $K$ with the corresponding supporting hyperplane,
\begin{equation}\label{eq:exposed-face}
F_{\vB}=K\cap H_{\vB}=\{\vQ\in K:\beta_{\vB}(\vQ)=h_K(\vB)\}.
\end{equation}
\item $\cE_{\vB}$: the \textbf{extreme points of the exposed face},
\begin{equation}\label{eq:exposed-generators}
\cE_{\vB}:=\Ext(F_{\vB}).
\end{equation}
An exposed face is generated by its extreme points. If $F_{\vB}$ consists of a single point, that point is called an \emph{exposed point}. If $|\cE_{\vB}|=n<\infty$, the face has dimension at most $n-1$, with equality when its generators are affinely independent; in that case the face is a simplex. A proper face of $\cQ$ is \emph{maximal} if it is not strictly contained in another proper face.
\item $\cE_{\rm es}$: the set of \textbf{nonlocal exposed points} of $\cQ$.
\item $\scrN(\vQ)$: the \textbf{normal cone} of the quantum correlation set at $\vQ$,
\begin{equation}\label{eq:normal-app}
\scrN(\vQ)=\{\vB\in\mathbb R^8:\vB\cdot(\vQ'-\vQ)\le0\ \text{for every }\vQ'\in\cQ\}.
\end{equation}
Its nonzero elements are precisely the Bell functionals maximized at $\vQ$.
\end{itemize}

\subsection{Bell operators}\label{appcontents:A:3}
\begin{itemize}
\item $\Bell$: the \textbf{Bell operator}---for fixed measurement observables, the functional in \eqref{eq:bell-functional} is represented by
\begin{equation}\label{eq:bell-op-notions}
\Bell=\sum_x\alpha_x\hat A_x\otimes I+\sum_y\beta_y I\otimes\hat B_y+\sum_{x,y}c_{xy}\hat A_x\otimes\hat B_y.
\end{equation}
Its expectation in $\rho$ gives the Bell score. A pair with overlap $s$ is unitarily equivalent to the canonical pair $\hat A_0=Z$ and $\hat A_1=sZ+\sqrt{1-s^2}X$, and similarly for Bob. The corresponding operator is written $\Bell(s,t)$, and Proposition~\ref{prop:convex-hull} reduces the quantum bound to
\begin{equation}\label{eq:bound-notions}
Q_{\vB}=\max_{(s,t)\in[-1,1]^2}\lambda_{\max}\bigl(\Bell(s,t)\bigr).
\end{equation}
\item $s,t$: the \textbf{local measurement overlaps} for traceless projective qubit observables,
\begin{equation*}
s=\tfrac12\tr(\hat A_0\hat A_1),\qquad t=\tfrac12\tr(\hat B_0\hat B_1).
\end{equation*}
Each pair is noncommuting when $|s|<1$ and $|t|<1$, respectively.
\item $M,f$: the \textbf{Bell gap matrix and its determinant}---with $\mu=Q_{\vB}$,
\begin{equation}\label{eq:gap-notions}
M(s,t)=\mu I-\Bell(s,t),\qquad f(s,t)=\det M(s,t).
\end{equation}
The global quantum bound implies $M(s,t)\succeq0$ for all overlaps.
\item $[s^it^j]f$: \textbf{coefficient extraction}---the coefficient of $s^it^j$ in the polynomial $f$.
\end{itemize}
To characterize the maximal proper faces containing a prescribed extreme point, we turn to the extreme rays of its normal cone. The following general convex-geometric result justifies this approach.
\begin{proposition}[Extreme normals and maximal faces]\label{prop:extreme-normals}
Let $K\subset\mathbb R^n$ be a full-dimensional compact convex set and $\vQ\in\partial K$. Every maximal proper face of $K$ containing $\vQ$ can be exposed by a functional generating an extreme ray of $N_K(\vQ)$. The converse need not hold.
\end{proposition}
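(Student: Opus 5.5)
We prove Proposition~\ref{prop:extreme-normals} by pure convex geometry, using only that $K$ is compact, convex and full-dimensional. Let $F$ be a maximal proper face of $K$ containing $\vQ$. The plan has three steps: show that $F$ is exposed; show that every functional exposing it lies in $N_K(\vQ)$; and then find an extreme ray of $N_K(\vQ)$ that exposes a face containing $F$, which by maximality must equal $F$.

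\textbf{Step 1 (maximal faces are exposed).} Since $F$ is a proper face, it is disjoint from $\intt K$ and lies in $\partial K$. Pick a point $\vQ_r$ in the relative interior of $F$. Because $\vQ_r\in\partial K$ and $K$ is full-dimensional, the supporting hyperplane theorem gives a nonzero $\vB$ with $\vB\cdot\vQ_r=h_K(\vB)$. The exposed face $F_{\vB}$ is a face containing a relative-interior point of $F$. A face that meets the relative interior of a convex subset $F\subseteq K$ contains all of $F$ (Rockafellar, Thm.~18.1), so $F\subseteq F_{\vB}$. Since $F_{\vB}$ is proper ($\vB\neq0$ and $K$ is full-dimensional), maximality gives $F=F_{\vB}$. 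In particular $\vQ\in F_{\vB}$, so $\vB\in N_K(\vQ)\setminus\{0\}$.

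\textbf{Step 2 (extreme-ray decomposition).} The cone $N_K(\vQ)$ is closed and convex, and it is pointed: if $\pm\vB\in N_K(\vQ)$ with $\vB\neq0$, then $K$ would lie in a hyperplane, contradicting full-dimensionality. A closed pointed cone in $\mathbb R^n$ is the conical hull of its extreme rays (Minkowski/Krein--Milman applied to a compact base). So we can write $\vB=\sum_k\mu_k\vB_k$ with $\mu_k>0$ and each $\vB_k$ generating an extreme ray.

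\textbf{Step 3 (selecting the ray).} Each $F_{\vB_k}$ is a proper face containing $\vQ$. For $\vQ'\in K$ we have $\vB\cdot(\vQ'-\vQ)=\sum_k\mu_k\vB_k\cdot(\vQ'-\vQ)$, and every term is $\le0$. Hence $\vQ'\in F_{\vB}$ if and only if $\vQ'\in F_{\vB_k}$ for all $k$, that is, $F_{\vB}=\bigcap_kF_{\vB_k}$. Therefore $F\subseteq F_{\vB_k}$ for each $k$. Any single $F_{\vB_k}$ is then a proper face containing $F$, and maximality forces $F_{\vB_k}=F$. Thus $\vB_k$ is an extreme ray of $N_K(\vQ)$ exposing $F$.

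\textbf{The converse.} It fails because an extreme ray can expose a non-maximal face. A square pyramid gives an example: at the apex, the extreme rays are the facet normals, so this particular example does not work. Instead take a convex body $K$ and a boundary point $\vQ$ where $N_K(\vQ)$ is a ray exposing the singleton $\{\vQ\}$, while $\vQ$ lies on an edge. Such an edge is a strictly larger face only if it is exposed by that same functional, so the example needs care. A cleaner choice is a point on a smooth part of a face, for instance a stadium-shaped set in $\mathbb R^2$ at the junction of a segment and an arc. There the normal cone is a single ray; it exposes the segment, which is maximal, so this again is not a counterexample.

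In three dimensions, take $K=\conv(\text{disk}\times\{0\}\cup\{\text{apex}\})$, a cone, and let $\vQ$ be a point on the rim. The normal cone at $\vQ$ is two-dimensional, with extreme rays the base normal, which exposes the base disk (maximal), and the lateral normal, which exposes the generator segment. The generator segment is also maximal. So for a genuine counterexample we only need an explicit instance; finding one is the step I expect to need the most care. Mathematically the remark is inessential, and a sketched example suffices, since the statement itself is the forward direction.
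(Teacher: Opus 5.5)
Your proof of the forward direction is correct and follows the paper's argument. You write the exposing functional as a positive combination of extreme-ray generators of $N_K(\vQ)$, note that $F_{\vB}=\bigcap_k F_{\vB_k}$, and invoke maximality. The differences are minor. You prove that maximal proper faces are exposed (relative-interior point plus Rockafellar's Theorem~18.1), where the paper cites this. You get a compact base from pointedness, where the paper normalizes explicitly by $\vB\cdot(\vQ-\vQ_{\mathrm{ref}})=1$ with $\vQ_{\mathrm{ref}}\in\intt K$.

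What is missing is the final clause, ``the converse need not hold.'' None of your trial examples is a counterexample, and several could not be in principle. If $N_K(\vQ)$ is a single ray, your own forward argument shows that this ray exposes the unique maximal face through $\vQ$. For a full-dimensional polytope, every extreme normal is a facet normal and so exposes a maximal face. The paper also only asserts this clause without an example.

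A non-polyhedral corner suffices. Take the half-disk $K=\{(x,y):x^2+y^2\le1,\ y\ge0\}$ and $\vQ=(1,0)$. Then $N_K(\vQ)=\cone\{(1,0),(0,-1)\}$. The extreme ray $(1,0)$ exposes only $\{\vQ\}$, since $(1,0)$ is the unique maximizer of $x$ on the disk. Yet $\{\vQ\}$ is strictly contained in the proper face $[-1,1]\times\{0\}$, which is exposed by the other extreme ray $(0,-1)$. So an extreme ray of the normal cone can expose a non-maximal face.
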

\begin{proof}
Let $F$ be a maximal proper face containing $\vQ$. Every maximal proper face of a finite-dimensional convex set is exposed \cite{Rockafellar1970}. Hence $F=F_{\vB}$ for some nonzero $\vB\in N_K(\vQ)$.

Choose $\vQ_{\mathrm{ref}}\in\intt K$. Since $\vB\cdot(\vQ-\vQ_{\mathrm{ref}})>0$ for every nonzero $\vB\in N_K(\vQ)$, we may normalize the cone by
\begin{equation*}
N_{\vQ}=\{\vB\in N_K(\vQ):\vB\cdot(\vQ-\vQ_{\mathrm{ref}})=1\}.
\end{equation*}
This is a compact convex section of the normal cone. Therefore, after normalization, $\vB$ admits a finite convex decomposition
\begin{equation*}
\vB=\sum_i\lambda_i \vB_i,\qquad\lambda_i>0,\quad\sum_i\lambda_i=1,\quad \vB_i\in\Ext(N_{\vQ}).
\end{equation*}
The vectors $\vB_i$ generate extreme rays of $N_K(\vQ)$. Since all $\vB_i$ are maximized at $\vQ$,
\begin{equation}\label{eq:face-intersect}
F_{\vB}=\bigcap_i F_{\vB_i}\subseteq F_{\vB_i}.
\end{equation}
Each $F_{\vB_i}$ is a proper face of $K$. By maximality of $F_{\vB}$, we obtain $F_{\vB_i}=F_{\vB}$ for every $i$. Thus every $\vB_i$ exposes the same maximal proper face. The converse fails in general: an extreme normal may expose a face strictly contained in a larger proper face.
\end{proof}
Specialized to $\cQ$, this result shows that the maximal proper faces passing through a nonlocal exposed point $\vQ$ are governed entirely by the extreme rays of the normal cone $\scrN(\vQ)$. Theorem~\ref{thm:maximal-faces} further implies that every such face has $\vQ$ as its unique nonlocal extreme point, all other extreme points being local deterministic behaviors. Every such maximal proper face can therefore be exposed by a Bell functional on an extreme ray of $\scrN(\vQ)$; the converse is not assumed.

\subsection{Frustrated signs}\label{appcontents:A:4}
One is free to relabel the outcomes, which is equivalent to flipping signs in the Bell functional. Flipping Alice's output at input $x$ by $u_x\in\{\pm1\}$ and Bob's output at input $y$ by $v_y\in\{\pm1\}$ gives
\begin{equation*}
A'_x=u_xA_x,\qquad B'_y=v_yB_y,\qquad E'_{xy}=u_xv_yE_{xy}.
\end{equation*}
The coefficients transform accordingly:
\begin{equation}\label{eq:relabel}
\alpha'_x=u_x\alpha_x,\qquad\beta'_y=v_y\beta_y,\qquad c'_{xy}=u_xv_yc_{xy}.
\end{equation}
It preserves the Bell score, $\beta_{\vB'}(\vQ')=\beta_{\vB}(\vQ)$, and both the local and quantum bounds. Input permutations are also score-preserving relabelings.
\begin{lemma}[A local model with fixed marginals]\label{lem:local-coupling}
Given the marginals of any no-signaling behavior, there is a local behavior with the same marginals and correlators
\begin{equation}\label{eq:local-bound}
E_{xy}^{\rm Loc}:=1-|A_x-B_y|\ge E_{xy}.
\end{equation}
If all joint probabilities of the original behavior are strictly positive, the inequality is strict for every input pair $(x,y)$.
\end{lemma}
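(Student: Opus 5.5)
The plan is to build the local behavior input pair by input pair, using the maximal coupling of two $\pm1$ variables, and then glue the four pair-couplings together consistently.

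\emph{Step 1: the pair $(x,y)$.} Fix $(x,y)$. Let $a$ be a $\pm1$ variable with mean $A_x$ and $b$ a $\pm1$ variable with mean $B_y$. Their joint laws all have $p(++)+p(+-)=(1+A_x)/2$ and $p(++)+p(-+)=(1+B_y)/2$. The correlator is $E=1-2[p(+-)+p(-+)]$. It is maximized by the monotone coupling, which makes one of $p(+-)$, $p(-+)$ vanish. This gives $p(+-)+p(-+)=|A_x-B_y|/2$ and hence $E^{\rm Loc}_{xy}=1-|A_x-B_y|$.

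\emph{Step 2: the inequality.} For the original behavior,
\[
p(+-|xy)-p(-+|xy)=(A_x-B_y)/2 .
\]
Therefore
\[
E^{\rm Loc}_{xy}-E_{xy}=2\bigl[p(+-)+p(-+)-|p(+-)-p(-+)|\bigr]=4\min\{p(+-|xy),p(-+|xy)\}\ge0,
\]
which matches the identity quoted in the main text. If all joint probabilities are positive, the minimum is strictly positive, so the inequality is strict for every $(x,y)$.

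\emph{Step 3: locality.} It remains to show the four monotone couplings come from one local model. Use a single shared uniform variable $\lambda\in[0,1]$. Set
\[
a_x=\sgn\bigl(\tfrac{1+A_x}{2}-\lambda\bigr),\qquad b_y=\sgn\bigl(\tfrac{1+B_y}{2}-\lambda\bigr).
\]
Each $a_x$ has mean $A_x$ and each $b_y$ has mean $B_y$. For every pair $(x,y)$ the joint law is the comonotone (quantile) coupling, which is exactly the maximal coupling of Step 1, so $\langle a_xb_y\rangle=1-|A_x-B_y|$.

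This is a deterministic-strategy mixture over $\lambda$, so it lies in $\cL$. Concretely, the thresholds split $[0,1]$ into at most five intervals, and each interval corresponds to one vertex $\vL_{\eps;\eta}$. The marginals are preserved by construction.

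The step I expect to matter most is Step 3, the simultaneous realizability. A naive product of pairwise couplings would not be well defined. The common-threshold construction resolves this, because comonotone coupling is consistent across all pairs at once.
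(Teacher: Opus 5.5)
Your proposal is correct and follows essentially the same route as the paper. Both arguments derive the identity $1-|A_x-B_y|-E_{xy}=4\min\{p(+-|xy),p(-+|xy)\}$ from $A_x-B_y=2[p(+-|xy)-p(-+|xy)]$, and both use a single shared uniform variable, thresholded at $(1+A_x)/2$ and $(1+B_y)/2$, to realize all four input pairs in one local model. Your Step 1, the optimality of the monotone coupling, is not needed for the lemma. The value of $\sgn$ exactly at a threshold only affects a null set.
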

\begin{proof}
For fixed $(x,y)$, put $r=p(+-|xy)$ and $s'=p(-+|xy)$. Then
\begin{equation*}
E_{xy}=1-2(r+s'),\qquad A_x-B_y=2(r-s').
\end{equation*}
Since $r,s'\ge0$, this gives $E_{xy}\le1-|A_x-B_y|$. More precisely,
\begin{equation}\label{eq:local-difference}
1-|A_x-B_y|-E_{xy}=4\min\{p(+-|xy),p(-+|xy)\}.
\end{equation}
For the local construction, let the parties share a single random variable $U$ uniformly distributed on $[0,1]$. Alice outputs $+1$ when $U\le(1+A_x)/2$, and Bob outputs $+1$ when $U\le(1+B_y)/2$. This reproduces the prescribed marginals. The probability of disagreement is $|A_x-B_y|/2$, so the correlator is $1-|A_x-B_y|$. The same shared variable defines all four input pairs, and hence a local behavior. Strictness follows from \eqref{eq:local-difference}.
\end{proof}

The sign of Bell correlators $C\equiv\{c_{xy}\}_{x,y=0,1}$ plays a central role in the physical motivation and supplies the essential sign structure underlying our uniqueness theorem.
\begin{definition}[Frustrated correlator signs]\label{def:frustration}
A Bell functional has frustrated correlator signs when
\begin{equation}\label{eq:frustration-app}
c_{00}c_{01}c_{10}c_{11}<0.
\end{equation}
\end{definition}
Frustrated signs can be derived from physically motivated properties of the Bell functional although they do not imply nonlocality directly.
\begin{proposition}[Conditions that force frustrated signs]\label{prop:force-frustration}
Each of the following conditions implies \eqref{eq:frustration-app}:
\begin{itemize}
\item[(a)] $Q_{\vB}>L_{\vB}$; the Bell functional presents a quantum advantage;
\item[(b)] $Q_{\vB}=L_{\vB}$ and a maximizing quantum behavior has all joint probabilities strictly positive.
\end{itemize}
\end{proposition}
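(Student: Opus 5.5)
We argue by contraposition: if $c_{00}c_{01}c_{10}c_{11}\ge0$, then (a) and (b) both fail. The tool is Lemma~\ref{lem:local-coupling}, combined with the local output relabelings \eqref{eq:relabel}.

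First we normalize the signs. Suppose the product $c_{00}c_{01}c_{10}c_{11}$ is nonnegative. We claim there are $u_x,v_y\in\{\pm1\}$ with $u_xv_yc_{xy}\ge0$ for all $x,y$.
\begin{itemize}
\item If some $c_{xy}=0$, the remaining three signs impose three constraints $u_xv_y=\sgn c_{xy}$ on the $2\times2$ bipartite graph with one edge removed. That graph is a tree, so the constraints can always be solved.
\item If all $c_{xy}\neq0$, the product being positive means the four signs satisfy the cycle consistency condition. We then set $u_0=1$, $v_y=\sgn c_{0y}$ and $u_1=\sgn(c_{10})v_0$; consistency gives the fourth sign automatically.
\end{itemize}
Since relabelings preserve Bell values and both bounds, we may henceforth assume $c_{xy}\ge0$ for all $x,y$.

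Next we compare with a local model. Let $\vQ\in\cQ$ be any quantum maximizer, so $\beta_{\vB}(\vQ)=Q_{\vB}$; it exists by compactness of $\cQ$. Lemma~\ref{lem:local-coupling} gives a local $\vL$ with the same marginals and with $E^{\rm Loc}_{xy}\ge E_{xy}$. The marginal terms of $\beta_{\vB}$ therefore coincide, and by \eqref{eq:local-difference}
\[
\beta_{\vB}(\vL)-\beta_{\vB}(\vQ)=4\sum_{x,y}c_{xy}\min\{p(+-|xy),p(-+|xy)\}\ge0 .
\]
Hence $L_{\vB}\ge\beta_{\vB}(\vL)\ge Q_{\vB}\ge L_{\vB}$, so $Q_{\vB}=L_{\vB}$ and (a) fails.

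For (b), assume $Q_{\vB}=L_{\vB}$ and that some maximizer $\vQ$ has all joint probabilities strictly positive. The chain above forces the displayed sum to vanish. Every minimum in it is strictly positive, so every $c_{xy}$ must be zero. Then the correlator part of $\vB$ vanishes, and $\beta_{\vB}(\vQ)=\sum_x\alpha_xA_x+\sum_y\beta_yB_y$ is maximized at $\vQ$ over $\cQ$.

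The main obstacle is ruling out this degenerate all-zero-$c$ case, where frustration fails trivially. The functional then depends only on marginals, and $\vB\neq0$. Over $\cQ$ the marginals range independently over $[-1,1]^4$; product deterministic points realize the corners. A nonzero linear function of the marginals is therefore maximized only at points with $|A_x|=1$ for some $x$ with $\alpha_x\ne0$, or $|B_y|=1$ for some $y$ with $\beta_y\ne0$. If, say, $A_x=\sgn(\alpha_x)$, then $p(-b|xy)=0$ for the appropriate sign, contradicting strict positivity. Hence in case (b) some $c_{xy}\ne0$, and strict positivity of all minima makes the difference strictly positive. This contradicts $\beta_{\vB}(\vL)\le L_{\vB}=Q_{\vB}$, which completes the contraposition.
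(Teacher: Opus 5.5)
Your proof is correct and follows essentially the same route as the paper. You relabel outputs to make all $c_{xy}\ge0$ via the four-cycle sign argument, then use Lemma~\ref{lem:local-coupling} to dominate any quantum maximizer by a local behavior with equal marginals, and finally dispose of the marginal-only case $C=0$ by observing that a maximizer must then have some $|A_x|=1$ or $|B_y|=1$, which forces a vanishing joint probability. The only cosmetic difference is that you first derive $C=0$ from the vanishing of $4\sum_{x,y}c_{xy}\min\{p(+-|xy),p(-+|xy)\}$ and then treat that case, whereas the paper splits into $C\ne0$ and $C=0$ from the outset.
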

\begin{proof}
Suppose \eqref{eq:frustration-app} fails. Local output flips can then make every nonzero entry of $C$ positive. To see this, view the four entries as signed edges of the bipartite four-cycle. If an entry is zero, the remaining graph is a forest, whose edge signs can be made positive by vertex flips. If all entries are nonzero, this is possible precisely when the product of their signs is positive.

After these relabelings, $c_{xy}\ge0$ for every $(x,y)$. Lemma~\ref{lem:local-coupling} preserves every marginal contribution and does not decrease any positively weighted correlator. Thus every quantum Bell value is bounded by the value of a local behavior. This rules out a strict Bell gap and proves (a).

For (b), suppose first that $C\ne0$. At a behavior with strictly positive joint probabilities, the same local model strictly increases every correlator, and hence strictly increases the Bell score. This contradicts maximality. If $C=0$, the functional contains only marginals. Its local and quantum maximum is $\sum_x|\alpha_x|+\sum_y|\beta_y|$, attained by deterministic outputs. Strict positivity of the joint probabilities implies $|A_x|,|B_y|<1$. Since $\vB\ne0$, the score at such marginals is strictly below that maximum. This again contradicts maximality.
\end{proof}

\section{Non-trivial maximal proper face}\label{app:faces}\label{sec:unique}
In this section, we prove that a maximal proper face containing one nonlocal exposed point contains no other nonclassical extreme points.

\begin{center}
\begin{minipage}{0.94\textwidth}
\textbf{Proof map for Stage I}\par\smallskip
Frustration + quantum bound + determinant identities $\Longrightarrow f\ge0$, $\deg_s f,\deg_t f\le2$, $[s^2t^2]f<0$.\par
Assume nonuniqueness: two distinct interior zeros $(s,t),(s',t')\in(-1,1)^2$ of $f$.\par
\quad $\bullet$ $s\ne s'$, $t\ne t'$ $\Longrightarrow$ a quartic with two double zeros $\xrightarrow{\text{negative leading coefficient}}$ contradiction with $f\ge0$;\par
\quad $\bullet$ $s=s'$ or $t=t'$ $\Longrightarrow$ an entire zero line $\xrightarrow{\text{global positivity + kernel argument}}$ $M\succ0$ on its interior: contradiction.\par
Hence at most one optimal interior overlap pair; a simple top eigenvalue (independent eigenspace lemma) gives at most one nonlocal maximizing member of $\cS$. Then $\cQ=\conv\cS\Rightarrow$ at most one nonlocal extreme point in $F_{\vB}$; and $Q_{\vB}>L_{\vB}\Rightarrow$ no local maximizer $\Rightarrow F_{\vB}=\{\vQ_{\vtheta}\}$ as a byproduct.
\end{minipage}
\end{center}

\subsection{Determinant polynomial of  Bell operator}\label{appcontents:B:1}

The Bell functional acts on behaviors. To optimize it over quantum realizations, we now represent the same score by an operator and reduce the measurement dependence to two real parameters.

For fixed observables, the Bell operator associated with \eqref{eq:bell-functional} is
\begin{equation}\label{eq:bell-op}
\Bell=\sum_x\alpha_x\hat A_x\otimes I+\sum_y\beta_y I\otimes\hat B_y+\sum_{x,y}c_{xy}\hat A_x\otimes\hat B_y.
\end{equation}
The score of a realization is $\beta_{\vB}(\vQ)=\tr(\rho\Bell)$. For these fixed measurements, maximizing over states gives $\lambda_{\max}(\Bell)$.

\subsubsection{Canonical measurements and the overlap boundary}
For nontrivial projective qubit observables, define
\begin{equation}\label{eq:overlaps}
s=\tfrac12\tr(\hat A_0\hat A_1),\qquad t=\tfrac12\tr(\hat B_0\hat B_1).
\end{equation}
Two such observables are determined up to a unitary by their overlap. We may therefore choose
\begin{equation}\label{eq:canonical}
\begin{aligned}
\hat A_0&=Z,&\hat A_1&=sZ+\sqrt{1-s^2}X,\\
\hat B_0&=Z,&\hat B_1&=tZ+\sqrt{1-t^2}X,
\end{aligned}
\end{equation}
with $(s,t)\in[-1,1]^2$. The Alice pair is noncommuting precisely when $|s|<1$, and similarly the Bob pair is noncommuting precisely when $|t|<1$.

Substituting \eqref{eq:canonical} into \eqref{eq:bell-op} gives the operator $\Bell(s,t)$. This is a real symmetric matrix, so a maximizing eigenvector can be chosen real. By Proposition~\ref{prop:convex-hull}, the global quantum bound is
\begin{equation}\label{eq:global-bound}
Q_{\vB}=\max_{(s,t)\in[-1,1]^2}\lambda_{\max}\bigl(\Bell(s,t)\bigr).
\end{equation}

\subsection{Excluding two quantum maximizers  in the case $s\neq s'$ or $t\neq t'$.}\label{secline}
Fix the functional and put $\mu=Q_{\vB}$. We do not assume that $\mu>L_{\vB}$. Define
\begin{equation}\label{eq:gap-matrix}
M(s,t)=\mu I-\Bell(s,t),\qquad f(s,t)=\det M(s,t).
\end{equation}
Because $\mu$ is optimized over all states and all measurements,
\begin{equation}\label{eq:global-psd}
M(s,t)\succeq0\qquad\text{for every }(s,t)\in[-1,1]^2.
\end{equation}
The same coefficients and the same value of $\mu$ are used throughout the overlap square. Therefore
\begin{equation}\label{eq:det-zero-criterion}
f(s,t)\ge0,\qquad f(s,t)=0\iff\lambda_{\max}(\Bell(s,t))=\mu.
\end{equation}
Indeed, all eigenvalues of $\Bell(s,t)$ are at most $\mu$, so the determinant vanishes precisely when one of them equals $\mu$. The zeros of $f$ thus record optimal overlap pairs. The next section identifies the algebraic structure of this scalar function.

Although the canonical Bell operator contains square roots, its determinant is a polynomial in $s$ and $t$. We shall derive its separate degree bounds and identify the one coefficient whose sign will matter:
\begin{equation*}
\deg_s f,\deg_t f\le2,\qquad[s^2t^2]f=16c_{00}c_{01}c_{10}c_{11}.
\end{equation*}
Here $[s^it^j]f$ denotes coefficient extraction. Under frustration, the last coefficient is negative.

\subsubsection{A Pauli-trace calculation}
Let $\sigma=(Z,X)$ and introduce
\begin{equation}\label{eq:phi-gram}
\Phi_s=\begin{pmatrix}1&s\\0&\sqrt{1-s^2}\end{pmatrix},\qquad G_s=\Phi_s^T\Phi_s=\begin{pmatrix}1&s\\s&1\end{pmatrix},
\end{equation}
with analogous definitions for $t$. Set
\begin{equation*}
\begin{gathered}
\mathbf a=(\alpha_0,\alpha_1)^T,\quad\mathbf b=(\beta_0,\beta_1)^T,\\
\widetilde{\mathbf a}=\Phi_s\mathbf a,\quad\widetilde{\mathbf b}=\Phi_t\mathbf b,\quad \widetilde C=\Phi_s C\Phi_t^T.
\end{gathered}
\end{equation*}
Then
\begin{equation}\label{eq:pauli-op}
\Bell(s,t)=(\widetilde{\mathbf a}\cdot\sigma)\otimes I+I\otimes(\widetilde{\mathbf b}\cdot\sigma)+\sum_{i,j=1}^2\widetilde C_{ij}\sigma_i\otimes\sigma_j.
\end{equation}
Define
\begin{equation}\label{eq:wdr}
\begin{aligned}
w&=\|\widetilde{\mathbf a}\|^2+\|\widetilde{\mathbf b}\|^2+\|\widetilde C\|_F^2,\\
d&=\widetilde{\mathbf a}^{T}\widetilde C\widetilde{\mathbf b},\\
r&=\|\widetilde C^T\widetilde{\mathbf a}\|^2+\|\widetilde C\widetilde{\mathbf b}\|^2+\|\widetilde{\mathbf a}\|^2\|\widetilde{\mathbf b}\|^2+(\det\widetilde C)^2,
\end{aligned}
\end{equation}
where $\|\cdot\|_F$ is the Frobenius norm.
\begin{lemma}[Trace identities]\label{lem:trace}
The Bell operator satisfies
\begin{equation}\label{eq:traces}
\tr\Bell=0,\quad\tr\Bell^2=4w,\quad\tr\Bell^3=24d,\quad\tr\Bell^4=4w^2+16r.
\end{equation}
\end{lemma}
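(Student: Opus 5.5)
The plan is to expand $\Bell(s,t)$ in the tensor-product Pauli basis and use the orthogonality relation $\tr(P\,P')=4\delta_{PP'}$ for Pauli strings $P,P'\in\{I,X,Y,Z\}^{\otimes2}$. The starting point is the form \eqref{eq:pauli-op}. Write $\sigma_1=Z$, $\sigma_2=X$, and use $\sigma_i\sigma_k=\delta_{ik}I+i\epsilon_{ik}Y$ with $\epsilon_{12}=1$, which follows from $ZX=iY$. Two facts about the real Paulis do most of the work:
\begin{equation*}
\tr\sigma_i=0,\qquad \tr(\sigma_i\sigma_k\sigma_m)=0,
\end{equation*}
the second because any product of an odd number of elements of $\{Z,X\}$ is traceless.

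\emph{First two traces.} Every term of $\Bell$ is a nonidentity Pauli string, so $\tr\Bell=0$ is immediate. Orthogonality then gives $\tr\Bell^2=4$ times the sum of the squared coefficients, which equals $4w$.

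\emph{Third trace.} I will sort the cubic terms by type.
\begin{itemize}
\item Any term containing only Alice-side factors, only Bob-side factors, or three $\widetilde C$ factors has a tensor factor equal to a product of one or three $\sigma$'s. Such terms vanish by the two facts above.
\item Terms of type $\widetilde{\mathbf a}\widetilde{\mathbf a}\widetilde C$ or $\widetilde{\mathbf b}\widetilde{\mathbf b}\widetilde C$ leave a single $\sigma$ on one side, so they also vanish.
\item Only the mixed products with one factor from each of the three parts survive. In any of the $3!=6$ orderings, each side carries a product $\sigma_i\sigma_k$ whose trace $2\delta_{ik}$ is order independent. Each ordering therefore contributes $4\,\widetilde{\mathbf a}^T\widetilde C\widetilde{\mathbf b}$, which gives $\tr\Bell^3=24d$.
\end{itemize}

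\emph{Fourth trace.} I will compute $\Bell^2=wI+K$ explicitly through anticommutators. The squares of the three parts contribute to the identity, giving $w$. They also give a $Y\otimes Y$ piece: since $\sum\epsilon_{ik}\epsilon_{jl}\widetilde C_{ij}\widetilde C_{kl}=2\det\widetilde C$, this piece is $-2\det\widetilde C\,Y\otimes Y$. The cross terms follow from $\{\sigma_i,\sigma_k\}=2\delta_{ik}$ and read
\begin{equation*}
K=2(\widetilde{\mathbf a}\cdot\sigma)\otimes(\widetilde{\mathbf b}\cdot\sigma)+2\,(\widetilde C\widetilde{\mathbf b})\cdot\sigma\otimes I+2\,I\otimes(\widetilde C^T\widetilde{\mathbf a})\cdot\sigma-2\det\widetilde C\,Y\otimes Y.
\end{equation*}
Now $K$ is traceless, so $\tr\Bell^4=4w^2+\tr K^2$. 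The four pieces of $K$ are supported on mutually orthogonal Pauli strings, so
\begin{equation*}
\tr K^2=16\bigl(\|\widetilde{\mathbf a}\|^2\|\widetilde{\mathbf b}\|^2+\|\widetilde C\widetilde{\mathbf b}\|^2+\|\widetilde C^T\widetilde{\mathbf a}\|^2+(\det\widetilde C)^2\bigr)=16r.
\end{equation*}

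The main obstacle is the bookkeeping for $K$. In particular, the $Y\otimes Y$ coefficient must be extracted correctly from the antisymmetric parts of $\sigma_i\sigma_k\otimes\sigma_j\sigma_l$, and the cross terms of type $\widetilde{\mathbf a}\widetilde C$ must be checked to reduce to pure single-side Paulis. Here the antisymmetric part $i\epsilon Y$ cancels in the anticommutator; it survives only in the $\widetilde C^2$ term, where two factors $i$ multiply to give a real coefficient.
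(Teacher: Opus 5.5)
Your proposal is correct and is essentially the paper's argument: expand in Pauli strings, write $\Bell^2=wI+K$ with $K=2\Delta$, and use pairwise orthogonality of the four pieces for $\tr\Bell^4$; the only variation is that you obtain $\tr\Bell^3$ by expanding the cube directly, whereas the paper writes it as $\tr(\Bell\cdot\Bell^2)=2\tr(\Bell\Delta)$ and evaluates it through the three matching Pauli pairings. Your case list for the cube omits the types $\widetilde{\mathbf a}\widetilde{\mathbf a}\widetilde{\mathbf b}$, $\widetilde{\mathbf a}\widetilde{\mathbf b}\widetilde{\mathbf b}$, $\widetilde{\mathbf a}\widetilde C\widetilde C$ and $\widetilde{\mathbf b}\widetilde C\widetilde C$, but these vanish by the criterion you already state (some tensor side carries an odd number of $\sigma$'s), so nothing is lost.
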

\begin{proof}
Use $ZX=iY$, $XZ=-iY$, and the orthogonality of Pauli tensors. Direct multiplication gives $\Bell^2=wI+2\Delta$, where
\begin{equation}\label{eq:delta}
\Delta=(\widetilde C\widetilde{\mathbf b})\cdot\sigma\otimes I+I\otimes(\widetilde C^T\widetilde{\mathbf a})\cdot\sigma+(\widetilde{\mathbf a}\cdot\sigma)\otimes(\widetilde{\mathbf b}\cdot\sigma)-(\det\widetilde C)Y\otimes Y.
\end{equation}
These four terms are pairwise orthogonal in the Hilbert--Schmidt inner product. Hence $\tr\Delta=0$, $\tr\Delta^2=4r$, and
\begin{equation*}
\tr\Bell^4=\tr(wI+2\Delta)^2=4w^2+16r.
\end{equation*}
For the cubic trace, each of the three matching Pauli pairings in $\tr(\Bell\Delta)$ contributes $4\widetilde{\mathbf a}^T\widetilde C\widetilde{\mathbf b}$. Therefore $\tr\Bell^3=2\tr(\Bell\Delta)=24d$. The remaining identities follow directly from Pauli orthogonality.
\end{proof}
Newton's identities for a traceless $4\times4$ matrix give
\begin{equation*}
\det(\mu I-\Bell)=\mu^4-\frac{\tr\Bell^2}{2}\mu^2-\frac{\tr\Bell^3}{3}\mu+\frac{(\tr\Bell^2)^2}{8}-\frac{\tr\Bell^4}{4}.
\end{equation*}
Substituting \eqref{eq:traces} yields
\begin{equation}\label{eq:det-expression}
f=\mu^4-2w\mu^2-8d\mu+w^2-4r.
\end{equation}

\subsubsection{Removing the square roots and isolating the leading coefficient}
The square-root coordinates disappear when the three scalar quantities are written in terms of Gram matrices:
\begin{equation}\label{eq:gram-wdr}
\begin{aligned}
w={}&\mathbf a^T G_s\mathbf a+\mathbf b^TG_t\mathbf b+\tr(C^T G_s C G_t),\\
d={}&\mathbf a^T G_s C G_t\mathbf b,\\
r={}&\mathbf a^T G_s C G_t C^T G_s\mathbf a+\mathbf b^T G_t C^T G_s C G_t\mathbf b\\
 &+(\mathbf a^T G_s\mathbf a)(\mathbf b^T G_t\mathbf b)+(\det C)^2(1-s^2)(1-t^2).
\end{aligned}
\end{equation}
\begin{proposition}[Polynomial structure]\label{prop:polynomial}
The function $f(s,t)$ is a polynomial with
\begin{equation}\label{eq:polynomial-coeff}
\deg_s f,\deg_t f\le2,\qquad[s^2t^2]f=16c_{00}c_{01}c_{10}c_{11}.
\end{equation}
\end{proposition}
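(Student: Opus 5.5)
The plan is to read everything off the explicit formula \eqref{eq:det-expression}, $f=\mu^4-2w\mu^2-8d\mu+w^2-4r$, combined with the Gram-matrix expressions \eqref{eq:gram-wdr}. In these the overlaps enter only through $G_s=\begin{pmatrix}1&s\\s&1\end{pmatrix}$ and $G_t$, which are affine in $s$ and $t$ respectively, together with the explicit factor $(1-s^2)(1-t^2)$. This immediately shows that $f$ is a polynomial, since no square roots survive. It remains to track the separate degrees and the coefficient of $s^2t^2$ term by term.

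\textbf{Degree bounds.} We inspect each term.
\begin{itemize}
\item $w$ is at most bilinear in $(s,t)$: the marginal terms are affine in one variable each, and $\tr(C^TG_sCG_t)$ contains one $G_s$ and one $G_t$. Hence $w^2$ and $w\mu^2$ have $s$- and $t$-degree at most $2$.
\item $d=\mathbf a^TG_sCG_t\mathbf b$ is bilinear, so $8d\mu$ is harmless.
\item In $r$, the term $\mathbf a^TG_sCG_tC^TG_s\mathbf a$ has degree $2$ in $s$ and $1$ in $t$. Symmetrically, the $\mathbf b$-term has degree $1$ in $s$ and $2$ in $t$. The product $(\mathbf a^TG_s\mathbf a)(\mathbf b^TG_t\mathbf b)$ is bilinear, and $(\det C)^2(1-s^2)(1-t^2)$ has degree $2$ in each variable.
\end{itemize}
Therefore $\deg_s f,\deg_t f\le2$.

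\textbf{Leading coefficient.} Only three terms can produce $s^2t^2$: $w^2$, the $(\det C)^2$ term of $r$, and the cross terms inside $r$ that are quadratic in one variable. The $\mathbf a$- and $\mathbf b$-parts of $r$ have total degree at most $3$ with degree $\le1$ in one variable, so they do not contribute. The only contributions are therefore from $w^2$ and from $-4(\det C)^2 s^2t^2$.

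To compute the $w^2$ contribution, write $G_s=I+sJ$ with $J=\begin{pmatrix}0&1\\1&0\end{pmatrix}$, and similarly $G_t=I+tJ$. The $st$ coefficient of $w$ is then $\tr(C^TJCJ)=2(c_{00}c_{11}+c_{01}c_{10})$, since the marginal parts of $w$ contribute no $st$ term. Squaring, $[s^2t^2]w^2=4(c_{00}c_{11}+c_{01}c_{10})^2$. Hence
\begin{equation*}
[s^2t^2]f=4(c_{00}c_{11}+c_{01}c_{10})^2-4(c_{00}c_{11}-c_{01}c_{10})^2=16\,c_{00}c_{01}c_{10}c_{11}.
\end{equation*}

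\textbf{Checks.} Two points need care, and both are bookkeeping rather than obstacles.

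First, we must confirm that the Gram identities \eqref{eq:gram-wdr} are correct, in particular that $\det\widetilde C=\det\Phi_s\det C\det\Phi_t$ gives $(\det\widetilde C)^2=(\det C)^2(1-s^2)(1-t^2)$. This holds since $\det\Phi_s=\sqrt{1-s^2}$. We must also confirm that norms such as $\|\widetilde C^T\widetilde{\mathbf a}\|^2$ rewrite with $\Phi^T\Phi=G$. For that one uses $\|\Phi_t C^T\Phi_s^T\Phi_s\mathbf a\|^2=\mathbf a^TG_sCG_tC^TG_s\mathbf a$.

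Second, we must verify that no other $s^2t^2$ contribution hides in $w^2$. The only candidate is a product of two $st$ terms, because every other product in $w^2$ has degree $\le1$ in $s$ or in $t$.
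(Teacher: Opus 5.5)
Your proof is correct and follows essentially the same route as the paper: you get the separate degree bounds from the affine Gram matrices $G_s=I+sJ$, $G_t=I+tJ$ in \eqref{eq:gram-wdr}, and the $s^2t^2$ coefficient from exactly two sources, $([st]w)^2=4(c_{00}c_{11}+c_{01}c_{10})^2$ and $-4(\det C)^2$. Your additional check of the Gram rewriting, including $(\det\widetilde C)^2=(\det C)^2(1-s^2)(1-t^2)$, verifies a step that the paper simply takes as given.
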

\begin{proof}
Each Gram matrix is affine in its own variable. Thus $w$ and $d$ have separate degree at most one, and every term of $r$ has separate degree at most two. Equation~\eqref{eq:det-expression} proves the stated degree bounds.

To compute the coefficient, write $G_s=I+sJ$ and $G_t=I+tJ$, where $J=\begin{psmallmatrix}0&1\\1&0\end{psmallmatrix}$. Then
\begin{equation*}
[st]w=\tr(C^TJCJ)=2(c_{00}c_{11}+c_{01}c_{10}).
\end{equation*}
The first three terms of $r$ in \eqref{eq:gram-wdr} have no $s^2t^2$ term: their bidegrees are at most $(2,1)$, $(1,2)$, and $(1,1)$, respectively. The last term contributes $(\det C)^2$. Consequently,
\begin{equation}\label{eq:highest-coeff}
\begin{aligned}
[s^2t^2]f&=4(c_{00}c_{11}+c_{01}c_{10})^2-4(c_{00}c_{11}-c_{01}c_{10})^2\\
&=16c_{00}c_{01}c_{10}c_{11}.
\end{aligned}
\end{equation}
\end{proof}
Non-negativity of $f$ comes from the global operator bound, and frustration fixes the sign of its highest mixed coefficient. These facts will constrain the zeros without requiring an explicit formula for the largest eigenvalue.

We treat a frustrated Bell functional without assuming a strict Bell gap. The stronger conclusion is needed to include maximizing faces that contain local behaviors; strict-gap uniqueness will follow as a special case.

\subsubsection{Statement and proof strategy}
Fix $\vB$ with $c_{00}c_{01}c_{10}c_{11}<0$, and use the notation $\mu=Q_{\vB}$, $\Bell(s,t)$, $M(s,t)$, and $f(s,t)$ from Section~\ref{app:basic}. Recall that $F_{\vB}$ is the quantum maximizing face in \eqref{eq:exposed-face}.
\begin{theorem}[Structure of non-trivial maximizing face]\label{thm:frustrated-face}
There is at most one optimal overlap pair in $(-1,1)^2$. At that pair, if it exists, the optimal state projector is unique. The face $F_{\vB}$ contains at most one nonlocal exposed extreme point. More precisely, if $F_{\vB}$ contains a nonlocal behavior, there is a unique nonlocal $\vQ_{\vtheta}\in F_{\vB}\cap\cS$, and
\begin{equation}\label{eq:face-structure}
F_{\vB}=\conv\bigl(\{\vQ_{\vtheta}\}\cup(F_{\vB}\cap\cL)\bigr).
\end{equation}
Moreover, $\vQ_{\vtheta}$ is an extreme point in $\cQ$. In particular, if $Q_{\vB}>L_{\vB}$, then $F_{\vB}=\{\vQ_{\vtheta}\}$.
\end{theorem}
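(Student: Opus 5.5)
The argument follows the proof map. First, for a frustrated $\vB$ with $\mu=Q_{\vB}$, Proposition~\ref{prop:polynomial} together with \eqref{eq:det-zero-criterion} gives three facts about $f$: it is nonnegative on $[-1,1]^2$, it has separate degree at most two in $s$ and in $t$, and $[s^2t^2]f=16c_{00}c_{01}c_{10}c_{11}<0$. Suppose there are two distinct optimal pairs $(s,t)\neq(s',t')$ in $(-1,1)^2$.

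\emph{Case $s\ne s'$ and $t\ne t'$.} Restrict $f$ to the line through the two points, $g(\tau)=f\bigl(s+\tau(s'-s),\,t+\tau(t'-t)\bigr)$. This is a polynomial of degree at most four. Its $\tau^4$ coefficient is $[s^2t^2]f\,(s'-s)^2(t'-t)^2<0$, since $s^2$ and $t^2$ are the only top-degree powers available and the only quartic monomial is $s^2t^2$. Both zeros are interior minima of the nonnegative function $f$, so $g$ has double zeros at $\tau=0$ and $\tau=1$, with $\tau\in[0,1]$ staying inside the square. Hence $g=\kappa\,\tau^2(\tau-1)^2$ with $\kappa<0$. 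Then $g<0$ for $\tau\in(0,1)$, contradicting $f\ge0$.

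\emph{Case $s=s'$ (symmetrically $t=t'$).} Now $f(s,\cdot)$ is a quadratic, nonnegative on $[-1,1]$, with two interior zeros. Each interior zero is a double root of the restriction, which forces $f(s,\cdot)\equiv0$, so there is a whole zero segment. I would derive a contradiction from the kernel structure. Along the segment, choose a unit real kernel vector $\psi(\tau)$ of $M(s,\tau)$; this can be done analytically where the kernel dimension is constant. Since $M\succeq0$ globally and $M\psi=0$, first-order perturbation in $\tau$ gives $\langle\psi,\partial_t\Bell\,\psi\rangle=0$, and perturbation in $s$ gives $\langle\psi,\partial_s\Bell\,\psi\rangle=0$ as well. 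Writing $\partial_t\Bell=\sum_x c_{x1}\hat A_x\otimes\partial_t\hat B_1+\beta_1 I\otimes\partial_t\hat B_1$, I would show these stationarity conditions, combined with the eigen-equation, force the correlator rank structure to degenerate. The target is one of two outcomes: $\psi$ is a product state, which contradicts frustration via Lemma~\ref{lem:local-coupling}, or the eigenvector equation forces some $c_{xy}=0$, which contradicts $c_{00}c_{01}c_{10}c_{11}<0$. \textbf{This is the step I expect to be the main obstacle}, because it cannot be settled by degree counting alone. A fallback would be to show that the $s^2$ coefficient of $f$ at fixed $t$, itself quadratic in $t$, must be strictly negative at interior $t$ under frustration, so that $f(s,\cdot)\equiv0$ is impossible. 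This needs a second-order expansion of $f$ in $s$ around the zero line.

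Next I would show that the optimal eigenvector at the unique interior pair is simple, i.e.\ $\dim\ker M=1$. If the kernel were two-dimensional, $\det$ would vanish to second order and the Hessian of $f$ would vanish at that pair. Perturbing $\mu$ downward and using that $M$ is a Hermitian pencil, I would argue that an entangled real state in the kernel can then be rotated into a direction yielding a zero segment, which the previous step excludes. An independent-eigenspace argument for real symmetric $4\times4$ Pauli-structured matrices should suffice.

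Finally, assemble the conclusions. Using $\cQ=\conv\cS$ (Proposition~\ref{prop:convex-hull}), every extreme point of $F_{\vB}$ lies in $F_{\vB}\cap\cS$. Members of $\cS$ with $|s|=1$ or $|t|=1$ have a commuting pair on one side and are therefore local. Nonlocal maximizers come from the unique interior pair with its unique kernel projector, so they give a single behavior $\vQ_{\vtheta}$, up to the unitary freedom which leaves the behavior unchanged. This yields \eqref{eq:face-structure}. Extremality of $\vQ_{\vtheta}$ in $\cQ$ holds because it is the only non-local point of $\Ext(F_{\vB})$ and $F_{\vB}$ is a face. If $Q_{\vB}>L_{\vB}$, then $F_{\vB}\cap\cL=\emptyset$, so $F_{\vB}=\{\vQ_{\vtheta}\}$.
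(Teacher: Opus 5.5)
Your case $s\ne s'$, $t\ne t'$ is the paper's Lemma~\ref{lem:zero-line-poly}. Your final assembly is a valid alternative to the paper's direct extremality argument: commuting pairs give local members of $\cS$, $\Ext(F_{\vB})\subseteq\Ext(\cQ)\subseteq\cS$, and a nonlocal extreme point of $F_{\vB}$ must be the unique nonlocal maximizer. The genuine gap is the exclusion of an axis-parallel zero line. You flag it but do not supply it, and neither of your routes can close it.

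The fallback cannot work. A zero line $s=s_*$ means $f(s,t)=a(t)(s-s_*)^2$ with $a$ quadratic and $[t^2]a=[s^2t^2]f<0$. For example, $f=(2-t^2)(s-s_*)^2$ satisfies $f\ge0$ on the square, the bidegree bound, and the negative corner coefficient all at once. So no argument built only on these polynomial facts, including a second-order expansion in $s$, can give a contradiction; the paper says exactly this after Lemma~\ref{lem:zero-line-poly}.

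Your primary route is empty as stated. The conditions $\langle\psi,\partial_t\Bell\,\psi\rangle=\langle\psi,\partial_s\Bell\,\psi\rangle=0$ hold along any zero line, and you give no mechanism by which they force a product eigenvector or a vanishing $c_{xy}$. Also, a product optimal state does not contradict frustration via Lemma~\ref{lem:local-coupling}. It only yields $\mu=L_{\vB}$, which is permitted here because no strict gap is assumed.

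The missing idea is global, not local. Take the paper's orientation, $f(s,t_*)\equiv0$; yours follows by exchanging the parties.
\begin{itemize}
\item By continuity the line extends to the commuting endpoints $s=\pm1$, where $\Bell$ splits into Bob blocks $\Bell_\eps$.
\item Compare an optimal block value $\boldsymbol\alpha\cdot\eps+\sqrt{e_0^2+e_1^2+2t_*e_0e_1}$ with the classical value $\boldsymbol\alpha\cdot\eps+|e_0|+|e_1|\le L_{\vB}$. This forces $\mu=L_{\vB}$ and exactly one vanishing effective coefficient per selected block, in different columns. That fixes the coefficient relations \eqref{eq:coefficient-relations}.
\item These relations give $M=\Pi+(I-\hat A_0)\otimes D$ with $\Pi\succeq0$.
\item Testing the global bound $M\succeq0$ on $\ker\Pi$ through the isomorphism $\Theta$ yields $D(t)\succeq0$ for every Bob overlap. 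Hence $\alpha_0\ge|c_{00}|+|c_{01}|>r(t_*)$, so $D(t_*)\succ0$ and $M(s,t_*)\succ0$ on the supposed zero line, a contradiction.
\end{itemize}

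Your simplicity step has a similar problem: the ``perturb $\mu$ downward and rotate'' sketch is not an argument. Your own remark rescues it once the zero-line lemma is available. A kernel of dimension at least two at an interior $(s_0,t_0)$ makes $f$ vanish to second order there. Since $\deg_t f\le2$, this forces $f(s_0,\cdot)\equiv0$, which is a zero line. The paper argues differently (Lemma~\ref{lem:simple-eigenvalue}): every two-dimensional subspace of $\mathbb C^2\otimes\mathbb C^2$ contains a product vector, and its behavior under noncommuting measurements forces some $c_{xy}=0$.
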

The proof separates two sources of possible nonuniqueness: different states for the same measurements, and different measurement pairs. Section~\ref{sec:unique} excludes the former. Section~\ref{sec:unique} reduces the latter to a horizontal or vertical line of optimal overlaps. Section~\ref{sec:zero-line} excludes this line using the global Bell bound. Section~\ref{sec:full-face} then passes from qubit realizations to the whole maximizing face.

The inputs already established are
\begin{equation}\label{eq:stage1-inputs}
\begin{gathered}
\cQ=\conv\cS,\quad M(s,t)\succeq0,\quad f(s,t)\ge0,\\
\deg_s f,\deg_t f\le2,\quad[s^2t^2]f=16c_{00}c_{01}c_{10}c_{11}<0.
\end{gathered}
\end{equation}
The negative coefficient is needed only in the polynomial argument. The state-simplicity and zero-line lemmas require only that all $c_{xy}$ be nonzero. Throughout, a zero of $f$ is an optimum by \eqref{eq:det-zero-criterion}.

\subsubsection{Fixed optimal overlaps determine a unique state}
The key observation is that a degenerate two-qubit eigenspace contains a product vector. At an interior optimum, the resulting local behavior would force a correlator coefficient to vanish.
\begin{lemma}[A product vector in a two-dimensional subspace]\label{lem:product-subspace}
Every complex subspace of $\mathbb C^2\otimes\mathbb C^2$ of dimension at least two contains a nonzero product vector.
\end{lemma}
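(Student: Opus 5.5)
The plan is to use the standard dimension argument. Let $V\subseteq\mathbb C^2\otimes\mathbb C^2\cong\mathbb C^{2\times2}$ be a subspace with $\dim V\ge2$. Identify a vector $\sum_{ij}\psi_{ij}|i\rangle|j\rangle$ with the $2\times2$ matrix $\Psi=(\psi_{ij})$. A nonzero vector is a product vector exactly when $\rank\Psi=1$, that is, when $\det\Psi=0$. So it suffices to find a nonzero $\Psi\in V$ with $\det\Psi=0$.

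First, we may replace $V$ by any two-dimensional subspace of it. Pick a basis $\Psi_1,\Psi_2$ of that subspace. If $\det\Psi_1=0$ we are done immediately.

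Otherwise, consider $g(z)=\det(\Psi_2+z\Psi_1)=\det\Psi_1\cdot\det(\Psi_1^{-1}\Psi_2+zI)$. This is a polynomial in $z\in\mathbb C$ of degree exactly two, with leading coefficient $\det\Psi_1\neq0$. By the fundamental theorem of algebra it has a root $z_0$. Concretely, $-z_0$ is an eigenvalue of $\Psi_1^{-1}\Psi_2$.

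Then $\Psi=\Psi_2+z_0\Psi_1$ is nonzero, because $\Psi_1$ and $\Psi_2$ are linearly independent. It is singular by construction, so $\Psi$ has rank one. Writing $\Psi=uv^T$ gives the product vector $u\otimes v\in V$.

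There is essentially no obstacle here. The one point to note is that complex scalars are needed: over $\mathbb R$ the span of $I$ and $\begin{psmallmatrix}0&1\\-1&0\end{psmallmatrix}$ contains no product vector. This is why the lemma is stated over $\mathbb C$. In the later application, the real eigenvector structure must be handled separately, for example by taking real and imaginary parts, or by noting that the relevant eigenspace is invariant under complex conjugation.
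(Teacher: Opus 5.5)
Your proof is correct and is essentially the paper's argument. The paper takes two independent coefficient matrices $W_0,W_1$ and uses a zero of the binary quadratic form $\det(\zeta_0W_0+\zeta_1W_1)$ on the complex projective line; your root of $g(z)$ is the dehomogenized version of this, and your separate case $\det\Psi_1=0$ plays the role of the zero at infinity. Your closing caveat about real structure is not needed downstream: in Lemma~\ref{lem:simple-eigenvalue} a complex product state already yields a local behavior.
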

\begin{proof}
Choose two independent vectors and identify their coefficients with matrices $W_0,W_1\in M_2(\mathbb C)$. A nonzero vector is a product vector exactly when its coefficient matrix has rank one. The homogeneous quadratic
\begin{equation*}
\det(\zeta_0W_0+\zeta_1W_1)
\end{equation*}
either vanishes identically or has a zero on the complex projective line. At any such projective zero, independence of $W_0,W_1$ ensures that the matrix is nonzero. Its determinant is zero, so it has rank one.
\end{proof}
\begin{lemma}[Simplicity at an interior optimum]\label{lem:simple-eigenvalue}
Suppose every $c_{xy}\ne0$. If $(s,t)\in(-1,1)^2$ and $\lambda_{\max}(\Bell(s,t))=\mu$, then the eigenvalue $\mu$ is simple.
\end{lemma}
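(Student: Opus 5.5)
The plan is to argue by contradiction, combining Lemma~\ref{lem:product-subspace} with first-order optimality of the measurements. Suppose the eigenvalue $\mu=\lambda_{\max}(\Bell(s,t))$ has multiplicity at least two at some interior $(s,t)\in(-1,1)^2$. Lemma~\ref{lem:product-subspace} then gives a nonzero product vector $|\phi\rangle\otimes|\chi\rangle$ in the $\mu$-eigenspace. It may be complex, so the Bloch vectors $\mathbf u$ of $\phi$ and $\mathbf v$ of $\chi$ are unit vectors in $\mathbb R^3$, while the measurement directions $\mathbf a_x,\mathbf b_y$ lie in the $ZX$ plane. Because $\mu=Q_{\vB}$, this product state together with the current measurements attains the global quantum bound. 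Its behavior is a product behavior with $m_x:=\langle A_x\rangle=\mathbf a_x\cdot\mathbf u$, $n_y:=\langle B_y\rangle=\mathbf b_y\cdot\mathbf v$ and $E_{xy}=m_xn_y$.

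\emph{Step 1 (first-order conditions in the measurements).} Hold Bob's side and the state fixed. The score is then $\sum_x\gamma_x(\mathbf a_x\cdot\mathbf u)+\text{const}$, with $\gamma_x:=\alpha_x+\sum_yc_{xy}n_y$. Alice may rotate each $\mathbf a_x$ independently in the $ZX$ plane; this changes $s$, but the bound \eqref{eq:global-bound} maximizes over all overlaps. Maximality therefore forces, for each $x$, either $\gamma_x=0$ or $\mathbf a_x$ equal to $\sgn(\gamma_x)$ times the normalized projection of $\mathbf u$, with that projection of unit length. Since $|s|<1$, the vectors $\mathbf a_0,\mathbf a_1$ are not parallel, so at most one index can be of the second type. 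The symmetric statement holds for Bob with $\delta_y:=\beta_y+\sum_xc_{xy}m_x$.

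\emph{Step 2 (the eigen-equation off the product vector).} Write $\hat A_x|\phi\rangle=m_x|\phi\rangle+z_x|\phi^\perp\rangle$ and $\hat B_y|\chi\rangle=n_y|\chi\rangle+w_y|\chi^\perp\rangle$. Projecting $\Bell|\phi\chi\rangle=\mu|\phi\chi\rangle$ onto $|\phi^\perp\chi^\perp\rangle$, $|\phi^\perp\chi\rangle$ and $|\phi\chi^\perp\rangle$ gives three equations:
\begin{equation*}
\sum_{x,y}c_{xy}z_xw_y=0,\qquad\sum_xz_x\gamma_x=0,\qquad\sum_yw_y\delta_y=0.
\end{equation*}
Suppose Step 1 yields an index $x_1$ with $|m_{x_1}|=1$; then $z_{x_1}=0$, and $\gamma_{x_0}=0$ for the other index $x_0$. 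Non-parallelism gives $z_{x_0}\ne0$. If Bob likewise has an index $y_1$ with $w_{y_1}=0$ and $w_{y_0}\ne0$, the first equation collapses to $c_{x_0y_0}z_{x_0}w_{y_0}=0$. Hence $c_{x_0y_0}=0$, contradicting the hypothesis that every $c_{xy}\neq0$.

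\emph{Step 3 (the residual case, the main obstacle).} It remains to treat the case where both $\gamma_0=\gamma_1=0$ (or both $\delta_y$ vanish). Here Alice's marginal coefficients are cancelled exactly, $\alpha_x=-\sum_yc_{xy}n_y$. I would try to use the full two-dimensional eigenspace rather than just one product vector. In this case the restricted operator $\langle\chi|\Bell|\chi\rangle$ acting on Alice's qubit is a multiple of the identity, so every Alice state paired with $\chi$ is optimal. That yields a one-parameter family of product maximizers $|\phi'\rangle\otimes|\chi\rangle$, and for generic $\phi'$ we have $z_0,z_1\neq0$. Applying the $|\phi'^\perp\chi^\perp\rangle$ equation along this family gives $\sum_{x,y}c_{xy}z_x(\phi')w_y=0$ for all $\phi'$. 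Because $(z_0,z_1)$ sweeps out two independent directions as $\phi'$ varies (again using $|s|<1$), this forces $C\mathbf w=0$. Finally I would combine $C\mathbf w=0$ with Bob's conditions from Step 1 (either some $w_y=0$ or $\delta\equiv0$) to conclude that some entry of $C$ vanishes, or that $\det C=0$ with a zero column. Closing this last case cleanly, and checking that the complex Bloch component along $Y$ does not break the non-parallelism arguments, is where I expect the real work to lie.
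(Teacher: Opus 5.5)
Your Steps 1 and 2 are correct as far as they go, but the proof is not complete. Step 3 is left open, and the route you sketch cannot close it. Take the sub-case $\gamma_0=\gamma_1=0$, $\delta_0=\delta_1=0$, with $|m_x|<1$ and $|n_y|<1$ for all $x,y$. Then every $z_x$ and $w_y$ is nonzero. Your family argument and its mirror image for Bob give only $C\mathbf w=0$ and $C^T\mathbf z=0$, i.e.\ $\det C=0$, and that is compatible with every $c_{xy}\neq0$. Frustration would force $\det C\neq0$, but the lemma does not assume it.

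In fact, first-order conditions plus eigen-equation projections cannot finish this case. Take $c_{xy}\equiv1$, and give $|\phi\rangle,|\chi\rangle$ the Bloch vectors $\mathbf u=-(\mathbf a_0+\mathbf a_1)/\|\mathbf a_0+\mathbf a_1\|$ and $\mathbf v=(\mathbf b_0+\mathbf b_1)/\|\mathbf b_0+\mathbf b_1\|$. Then $m_0=m_1=m<0<n=n_0=n_1$. Put $\alpha_x=-2n$ and $\beta_y=-2m$. This gives $\gamma\equiv0\equiv\delta$, $C\mathbf w=C^T\mathbf z=0$, and $\Bell(s,t)=4mn\,(-\sigma_{\mathbf u}\otimes I-I\otimes\sigma_{\mathbf v}+\sigma_{\mathbf u}\otimes\sigma_{\mathbf v})$. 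At that $(s,t)$ its top eigenvalue $-4mn$ is triply degenerate and is attained at $|\phi\rangle\otimes|\chi\rangle$. Every condition you derive holds here. The configuration is excluded only because $-4mn$ is not the global bound: $L_{\vB}\ge 4-4m-4n>-4mn$. That is a second-order fact.

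The missing idea is to use the global bound at second order, which makes Steps 2 and 3 unnecessary. Noncommutativity gives indices $x_0,y_0$ with $|m_{x_0}|<1$ and $|n_{y_0}|<1$, and your Step 1 then gives $\gamma_{x_0}=\delta_{y_0}=0$. Every product behavior with marginals in $[-1,1]$ is local. Hence the bilinear score $\sum_x\alpha_xm_x+\sum_y\beta_yn_y+\sum_{x,y}c_{xy}m_xn_y$ is at most $L_{\vB}\le\mu$ on $[-1,1]^4$. Shifting $(m_{x_0},n_{y_0})$ by small $(\epsilon_1,\epsilon_2)$ changes the score by exactly $c_{x_0y_0}\epsilon_1\epsilon_2$. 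For a suitable choice of signs this is positive, so $c_{x_0y_0}=0$.

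This is the paper's proof in continuous form. The paper notes that the product behavior is local with score $\mu$, so $\mu=L_{\vB}$, and writes it as an independent-Bernoulli mixture of deterministic strategies. The four strategies differing only in the outputs at $(x_0,y_0)$ all carry positive weight, so all must attain $\mu$. Their alternating sum of scores is $4c_{x_0y_0}$, which gives the contradiction.

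Within your own framework, you could also finish Step 3 by reapplying Step 1 to Bob at every member of the family $|\phi'\rangle\otimes|\chi\rangle$. This forces $\beta_{y_0}+\sum_xc_{xy_0}m_x(\phi')=0$ for all $\phi'$. Since $(m_0,m_1)$ sweeps an open set, column $y_0$ of $C$ vanishes.

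One small repair in Step 1: in-plane rotations of $\mathbf a_x$ cannot remove a $Y$-component of $\mathbf u$. The unit-length conclusion should instead come from comparing with the local behavior in which Alice's marginal at $x$ is set to $\sgn(\gamma_x)$. Your non-parallelism claims are unaffected by complex phases, since noncommuting qubit observables share no eigenvector.
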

\begin{proof}
Suppose instead that the top eigenspace has dimension at least two. Lemma~\ref{lem:product-subspace} supplies a normalized product state $|u\rangle\otimes|v\rangle$ in that eigenspace. Its behavior is local and has score $\mu$, so
\begin{equation}\label{eq:local-equal}
\mu\le L_{\vB}\le Q_{\vB}=\mu.
\end{equation}
Thus $\mu=L_{\vB}$. This already contradicts a strict Bell gap; moreover, the noncommutativity of the measurements and the separability of the product state also exclude equality when every $c_{xy}\ne0$. To see this, put $a_x=\langle u|\hat A_x|u\rangle$ and $b_y=\langle v|\hat B_y|v\rangle$. The product behavior is a convex mixture of deterministic behaviors with independent Bernoulli distributions for all four outcomes. Because each pair of observables is noncommuting, there is at least one setting $x$ with $|a_x|<1$ and one setting $y$ with $|b_y|<1$. Fixing all other deterministic assignments, all four choices of the outputs at $x,y$ occur with positive weight. Since their mixture reaches the local bound, each of those four assignments must attain the same maximum. The alternating sum of their four Bell scores is $4c_{xy}$, so $c_{xy}=0$, a contradiction.
\end{proof}
Two noncommuting Hermitian qubit observables cannot share an eigenvector: its orthogonal complement would also be invariant under both, giving a common eigenbasis. An optimal density operator must be supported on the top eigenspace. Lemma~\ref{lem:simple-eigenvalue} therefore makes the state projector unique for fixed optimal interior overlaps. Since the Bell operator is real symmetric, its top eigenvector can be chosen real. Equal overlaps give locally unitarily equivalent ordered measurement pairs, so they also give the same optimal behavior. This is not a claim of uniqueness for arbitrary-dimensional realizations.

\subsubsection{Two distinct optimal overlap pairs would force a zero line}
\begin{lemma}[Polynomial reduction to an axis-parallel zero line]\label{lem:zero-line-poly}
Let $f\ge0$ on $[-1,1]^2$, with $\deg_s f,\deg_t f\le2$ and $[s^2t^2]f<0$. If $f$ has two distinct interior zeros, they share one coordinate, and $f$ vanishes identically along the corresponding horizontal or vertical line.
\end{lemma}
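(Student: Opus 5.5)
The plan is to restrict $f$ to suitable lines through the two zeros and count zeros with multiplicity. The first observation: since $f\ge0$ on $[-1,1]^2$, every zero in the open square $(-1,1)^2$ is an interior minimum. Hence, along any line through such a zero, the restricted one-variable polynomial vanishes there to order at least two. I will use this first along the line joining the two zeros, and then along an axis-parallel line.

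Let the zeros be $(s_1,t_1)\ne(s_2,t_2)$ in $(-1,1)^2$, and suppose first that $s_1\ne s_2$ and $t_1\ne t_2$. Parametrize the segment joining them by $s(u)=s_1+u(s_2-s_1)$, $t(u)=t_1+u(t_2-t_1)$, and set $g(u)=f(s(u),t(u))$.

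\begin{itemize}
\item \emph{Degree and leading term.} Since $\deg_s f,\deg_t f\le2$, every monomial $s^it^j$ of $f$ has $i,j\le2$. Only $s^2t^2$ has total degree four; all others have total degree at most three. So $g$ has degree at most four, and its $u^4$ coefficient is
\begin{equation*}
[s^2t^2]f\,(s_2-s_1)^2(t_2-t_1)^2<0 .
\end{equation*}
In particular $g\not\equiv0$ and $\deg g=4$ exactly.
\item \emph{Double zeros.} The segment $u\in[0,1]$ lies in the convex square, and so does a small neighbourhood of each endpoint, because both points are interior. So $g\ge0$ near $u=0$ and $u=1$ with $g(0)=g(1)=0$. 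Hence $g'(0)=g'(1)=0$, and $u=0,1$ are zeros of multiplicity at least two.
\item \emph{Contradiction.} Counting degree, $g(u)=k\,u^2(u-1)^2$ with $k$ equal to the leading coefficient, so $k<0$. Then $g<0$ on $(0,1)$. These parameter values correspond to points of the square, contradicting $f\ge0$.
\end{itemize}
Therefore the two zeros share a coordinate.

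For the second claim, say $s_1=s_2=:s_0$ and $t_1\ne t_2$; the case $t_1=t_2$ is symmetric with the roles of $s$ and $t$ exchanged. Consider $h(t)=f(s_0,t)$, a polynomial of degree at most two in $t$. It is nonnegative on $[-1,1]$ because $|s_0|<1$. Each of $t_1,t_2$ is an interior zero, so each is a zero of multiplicity at least two. A nonzero polynomial of degree at most two cannot have four zeros counted with multiplicity, so $h\equiv0$. Thus $f$ vanishes identically on the vertical line $s=s_0$, and in particular on its whole segment in $[-1,1]^2$, endpoints included.

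The only point that needs care is the leading-coefficient computation in the first step. It relies on the separate degree bounds of Proposition~\ref{prop:polynomial} to guarantee that the quartic term of $g$ comes solely from $s^2t^2$. With that in hand, the sign hypothesis $[s^2t^2]f<0$ from \eqref{eq:polynomial-coeff} finishes the argument, and the rest is elementary multiplicity counting.
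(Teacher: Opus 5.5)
Your proposal is correct and follows essentially the same route as the paper. You restrict $f$ to the line through the two zeros, use interior minimality to get double zeros at $u=0,1$, and note that the negative quartic coefficient $[s^2t^2]f\,(s_2-s_1)^2(t_2-t_1)^2$ forces $g=\kappa u^2(u-1)^2<0$ on $(0,1)$. In the shared-coordinate case, your count of four zeros with multiplicity for a polynomial of degree at most two is equivalent to the paper's observation that $g'$ is affine, vanishes at two points, and so $g\equiv 0$.
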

\begin{proof}
Let $z_0=(s_0,t_0)$ and $z_1=(s_1,t_1)$ be the two zeros, and restrict $f$ to the line through them:
\begin{equation*}
g(u)=f\bigl(z_0+u(z_1-z_0)\bigr).
\end{equation*}
Because both zeros are interior, $g$ is nonnegative in neighborhoods of $u=0$ and $u=1$. Both are therefore stationary zeros:
\begin{equation}\label{eq:double-zeros}
g(0)=g'(0)=g(1)=g'(1)=0.
\end{equation}
If both coordinates differ, $g$ has degree four with leading coefficient
\begin{equation*}
\kappa=[s^2t^2]f\,(s_1-s_0)^2(t_1-t_0)^2<0.
\end{equation*}
Because the separate degrees are at most two, $s^2t^2$ is the only monomial of total degree four. Every other monomial contributes at most a cubic term in $u$. The two double zeros in \eqref{eq:double-zeros} therefore force
\begin{equation*}
g(u)=\kappa u^2(u-1)^2<0\qquad(0<u<1),
\end{equation*}
contradicting nonnegativity on the segment.

If one coordinate is shared, $g$ has degree at most two. Its derivative has degree at most one and vanishes at both $0$ and $1$, so $g'\equiv0$. Since $g(0)=0$, we obtain $g\equiv0$. The parameter $u$ covers the entire horizontal or vertical line.

An identically zero line is compatible with a nonnegative determinant. Thus Lemma~\ref{lem:zero-line-poly} is a reduction, not yet a contradiction. To exclude that remaining possibility, we must return to the operator inequality \eqref{eq:global-psd}.
\end{proof}

\subsection{Excluding two quantum maximizers in the cases $s=s'$ or $t=t'$.}\label{sec:zero-line}
\begin{lemma}[No axis-parallel line of optima]\label{lem:no-zero-line}
Suppose every $c_{xy}\ne0$. Neither of the following is possible:
\begin{equation*}
f(s,t_*)\equiv0\ \text{with }|t_*|<1,\qquad f(s_*,t)\equiv0\ \text{with }|s_*|<1.
\end{equation*}
\end{lemma}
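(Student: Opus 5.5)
By the symmetry exchanging Alice and Bob, it suffices to exclude $f(s,t_*)\equiv0$ for $s\in[-1,1]$ with $|t_*|<1$ fixed. The idea is to show that a whole line of optimal overlaps forces a degenerate top eigenspace or a vanishing correlator coefficient. In either case Lemma~\ref{lem:simple-eigenvalue} or the hypothesis $c_{xy}\neq0$ gives a contradiction.

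\emph{Step 1 (a smooth kernel vector).} Suppose $f(s,t_*)=0$ for all $s$. By \eqref{eq:det-zero-criterion}, $\lambda_{\max}(\Bell(s,t_*))=\mu$ along the segment. By Lemma~\ref{lem:simple-eigenvalue} this eigenvalue is simple for $|s|<1$. Hence $M(s,t_*)$ has a one-dimensional kernel spanned by a real unit vector $\psi(s)$, which can be chosen analytically in $s$ on $(-1,1)$, since $M$ is analytic in $s$ there.

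\emph{Step 2 (first- and second-order conditions).} Because $M(s,t)\succeq0$ on the whole square and $\langle\psi(s)|M(s,t_*)|\psi(s)\rangle=0$, the function $t\mapsto\langle\psi(s)|M(s,t)|\psi(s)\rangle$ has an interior minimum at $t_*$. So its $t$-derivative vanishes, and likewise its $s$-derivative along the line. In Bell-score language this is stationarity of the behavior $\vQ(s)$ generated by $\psi(s)$ with the canonical measurements. Differentiating $M(s,t_*)\psi(s)=0$ gives $M\psi'=-(\partial_s M)\psi$. Combined with $\langle\psi|\partial_sM|\psi\rangle=0$ and positivity of $M$ on the orthogonal complement, this yields an explicit $\psi'$.

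Second order: the Hessian of $(s,t)\mapsto\min_\phi\langle\phi|M|\phi\rangle$ at points of the line must be positive semidefinite, and its $ss$-entry must vanish identically. Writing this out gives a quadratic identity in the Pauli data $\widetilde{\mathbf a},\widetilde{\mathbf b},\widetilde C$.

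\emph{Step 3 (contradiction).} I expect to show that the vanishing $ss$-curvature forces $\partial_s^2$ of the Bell score to be zero, where only $\hat A_1$ moves. Since $\hat A_1=sZ+\sqrt{1-s^2}X$ carries the factor $\sqrt{1-s^2}$, whose second derivative never vanishes, the Alice-$1$ part of the operator must have zero expectation and zero variance in $\psi(s)$. Then $\psi(s)$ is an eigenvector of $\hat A_1\otimes(\alpha_1 I+c_{10}\hat B_0+c_{11}\hat B_1)$. This forces either a product structure in $\psi$, which contradicts simplicity via the argument of Lemma~\ref{lem:simple-eigenvalue}, or $c_{10}=c_{11}=0$. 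The latter contradicts $c_{xy}\ne0$.

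An alternative route to the same endpoint is the polynomial one. Since $f$ has $s$-degree at most $2$ and $f\ge0$, write $f(s,t)=(t-t_*)h(s,t)$. Nonnegativity on both sides of $t_*$ forces $f=(t-t_*)^2k(s)$, so the $t$-degree is exactly two. Then $[s^2t^2]f=[s^2]k$, while $k\ge0$ on $[-1,1]$. Comparing this with $k$ evaluated at $s=\pm1$, where the Alice pair commutes and $\lambda_{\max}(\Bell)$ is a local value, should contradict the coefficient formula \eqref{eq:polynomial-coeff}.

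The main obstacle is Step 3: turning vanishing curvature along the line into a vanishing coefficient $c_{xy}$. I would try the explicit perturbation computation first, and fall back on the factorization $f=(t-t_*)^2k(s)$ if the algebra becomes unmanageable.
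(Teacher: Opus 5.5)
Your proposal has a genuine gap. Both routes rely only on information that an axis-parallel zero line is actually compatible with, so neither can produce the contradiction.

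\emph{Main route.} Step~3 rests on the vanishing $ss$-curvature. For a simple top eigenvalue, however,
\[
0=\partial_s^2\lambda_{\max}=\langle\psi|\partial_s^2\Bell|\psi\rangle+2\sum_{k\neq0}\frac{|\langle\psi_k|\partial_s\Bell|\psi\rangle|^2}{\mu-\lambda_k},
\]
and the sum is nonnegative. The two terms merely balance, so nothing forces zero expectation or zero variance of the Alice-$1$ term. The $\sqrt{1-s^2}$ observation is a parametrization artifact: with $s=\cos\phi$ one has $\partial_\phi^2\hat A_1=-\hat A_1$.

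More decisively, no argument that uses only the line $t=t_*$ can work. Take any $c_{00}c_{01}\neq0$ and $p,q>0$, impose \eqref{eq:coefficient-relations}, and set $\alpha_0=r(t_*)$. Then $M=\Pi+(I-\hat A_0)\otimes D$ with $\Pi\succeq0$, and $D(t_*)\succeq0$ has rank one. The kernel map $\Theta$ shows that for every $s\in(-1,1)$, $M(s,t_*)$ is positive semidefinite with a one-dimensional kernel. That kernel is spanned by $a|u_{1,+}\rangle|v_{0,+}\rangle+b|u_{1,-}\rangle|v_{1,+}\rangle$ with $ab\neq0$, because $\ker D(t_*)$ is spanned by neither $|v_{0,+}\rangle$ nor $|v_{1,+}\rangle$.

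So $\lambda_{\max}(\Bell(s,t_*))\equiv\mu$ is simple, its eigenvector is entangled, and all $c_{xy}\neq0$. Every $s$-derivative condition holds, yet neither a product eigenvector nor a vanishing coefficient appears. What fails in this example is $M(s,t)\succeq0$ for $t$ on one side of $t_*$. Any proof must therefore use the operator inequality off the line, which your Step~3 never does.

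\emph{Polynomial fallback.} It fails for the same reason, and the paper says so explicitly: a zero line is compatible with a nonnegative determinant. Consider the frustrated case, which is the one Theorem~\ref{thm:frustrated-face} needs. The data you propose to compare are $f\ge0$, the degree bounds, $[s^2t^2]f=16c_{00}c_{01}c_{10}c_{11}<0$, and the endpoint values. All of them are satisfied by $f=\kappa(1-s^2)(t-t_*)^2$ with $\kappa=-16c_{00}c_{01}c_{10}c_{11}>0$. This is exactly what the endpoint analysis forces: the selected blocks have $e_0e_1=0$, so $f(\pm1,t)\equiv0$. Only in the unfrustrated case would $k(\pm1)=0$ together with $[s^2]k>0$ contradict $k\ge0$.

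\emph{The missing idea} is the paper's operator-level argument:
\begin{itemize}
\item At the commuting endpoints $s=\pm1$, compare the optimal Bob block value $\boldsymbol\alpha\cdot\eps+\sqrt{e_0^2+e_1^2+2t_*e_0e_1}$ with $L_{\vB}$. This gives $\mu=L_{\vB}$ and $e_0e_1=0$, which fixes the coefficient normal form \eqref{eq:coefficient-relations}.
\item Write $M=\Pi+(I-\hat A_0)\otimes D$.
\item Use the isomorphism $\Theta:\ker\Pi\to\mathbb C^2$ together with $M(s,t)\succeq0$ for \emph{every} Bob overlap $t$. This gives $D(t)\succeq0$ for all $t$, hence $\alpha_0\ge|c_{00}|+|c_{01}|>r(t_*)$.
\item Then $D(t_*)\succ0$, which forces $M(s,t_*)\succ0$ for all interior $s$, contradicting the zero line.
\end{itemize}
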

\begin{proof}
It suffices to exclude a horizontal line. Suppose $f(s,t_*)\equiv0$ for some $|t_*|<1$. By \eqref{eq:det-zero-criterion},
\begin{equation}\label{eq:zero-line-claim}
\lambda_{\max}(\Bell(s,t_*))=\mu\qquad(-1<s<1).
\end{equation}
We proceed in four steps. The endpoints determine coefficient relations; those relations yield a useful decomposition of $M$; a kernel test then forces a Bob operator to be positive for every overlap; this finally makes $M$ strictly positive on the proposed zero line.

\textbf{Step 1. Use the endpoints to determine the coefficient relations.} Continuity extends \eqref{eq:zero-line-claim} to $s=\pm1$. At these endpoints, $\hat A_1=\pm\hat A_0$, so the Bell operator is block diagonal in Alice's $Z$ basis. For an Alice assignment $\eps=(\eps_0,\eps_1)$, its Bob block is
\begin{equation}\label{eq:bob-block}
\Bell_\eps=(\boldsymbol\alpha\cdot\eps)I+e_0(\eps)\hat B_0+e_1(\eps)\hat B_1,\quad e_y(\eps)=\beta_y+c_{0y}\eps_0+c_{1y}\eps_1,\quad\boldsymbol\alpha=(\alpha_0,\alpha_1).
\end{equation}
At $s=1$, the two blocks correspond to $(+,+)$ and $(-,-)$; at $s=-1$, they correspond to $(+,-)$ and $(-,+)$. Choose an optimal block at each endpoint, with assignments
\begin{equation*}
\eps^+\in\{(+,+),(-,-)\},\qquad \eps^-\in\{(+,-),(-,+)\}.
\end{equation*}
The superscripts label endpoints; the selected block need not be unique.

For either selected assignment, abbreviate $e_y=e_y(\eps)$. Since the Bob observables are traceless and have overlap $t_*$,
\begin{equation*}
(e_0\hat B_0+e_1\hat B_1)^2=(e_0^2+e_1^2+2t_*e_0e_1)I.
\end{equation*}
Comparing the optimal block value with the best classical Bob outputs therefore gives
\begin{equation}\label{eq:block-upper}
\mu=\boldsymbol\alpha\cdot\eps+\sqrt{e_0^2+e_1^2+2t_*e_0e_1}\le\boldsymbol\alpha\cdot\eps+|e_0|+|e_1|\le L_{\vB}\le\mu.
\end{equation}
All inequalities are equalities. Hence $\mu=L_{\vB}$ and $e_0e_1=0$ for each selected block. Indeed, when $e_0e_1\ne0$,
\begin{equation*}
(|e_0|+|e_1|)^2-(e_0^2+e_1^2+2t_*e_0e_1)=2|e_0e_1|\bigl(1-t_*\sgn(e_0e_1)\bigr)>0,
\end{equation*}
which would make the first inequality strict. In the strict-gap case, this already excludes the line. We continue to handle $\mu=L_{\vB}$.

The assignments $\eps^+$ and $\eps^-$ differ at exactly one Alice input $x$. Their effective coefficients satisfy
\begin{equation}\label{eq:effective-difference}
e_y(\eps^-)-e_y(\eps^+)=-2\eps_x^+c_{xy}\ne0\qquad(y=0,1).
\end{equation}
Thus the two rows cannot have a zero in the same column. Neither row can have two zeros, since the other would then have none. Each row therefore has exactly one zero, and these zeros occur in different columns.

Apply one common relabeling of the Bell problem: label the differing Alice input by $1$, flip Alice's outputs to make the two selected assignments $(+,+)$ and $(+,-)$, and then exchange Bob's inputs or flip his outputs as needed. The effective coefficients take the form
\begin{equation}\label{eq:effective-table}
\begin{array}{c|cc}
\text{Alice assignment}&e_0&e_1\\\hline
(+,+)&2p&0\\
(+,-)&0&2q
\end{array}\qquad p,q>0.
\end{equation}
Here $p,q$ are coefficient parameters, not probabilities. The score-preserving relabelings of Section~\ref{app:basic} preserve both bounds, the nonvanishing of the $c_{xy}$, and the interior of the overlap square. We use the same notation for the relabeled coefficients and for the relabeled interior overlap $t_*$.

The zero entries give
\begin{equation*}
\beta_1+c_{01}+c_{11}=0,\qquad\beta_0+c_{00}-c_{10}=0.
\end{equation*}
The nonzero entries then yield $2p=2c_{10}$ and $2q=-2c_{11}$. Since both rows are optimal,
\begin{equation*}
\mu=\alpha_0+\alpha_1+2p=\alpha_0-\alpha_1+2q.
\end{equation*}
We have therefore obtained the fixed coefficient relations
\begin{equation}\label{eq:coefficient-relations}
\begin{gathered}
c_{10}=p>0,\quad c_{11}=-q<0,\quad\alpha_1=q-p,\\
\beta_0=p-c_{00},\quad\beta_1=q-c_{01},\quad\mu=\alpha_0+p+q.
\end{gathered}
\end{equation}
Once derived, these relations hold when either party's measurements are varied.

\textbf{Step 2. Decompose the Bell gap into a positive term and a Bob term.} Substitution of \eqref{eq:coefficient-relations} into the Bell operator and gap matrix, Eqs.~\eqref{eq:bell-op} and \eqref{eq:gap-matrix}, gives
\begin{equation}\label{eq:gap-decomp}
M=\Pi+(I-\hat A_0)\otimes D,
\end{equation}
where
\begin{equation}\label{eq:pi-D}
\begin{aligned}
\Pi&=p(I+\hat A_1)\otimes(I-\hat B_0)+q(I-\hat A_1)\otimes(I-\hat B_1),\\
D&=\alpha_0I+c_{00}\hat B_0+c_{01}\hat B_1.
\end{aligned}
\end{equation}
For verification, expand $\Pi$ as
\begin{equation*}
\begin{split}
\Pi={}&(p+q)I+(p-q)\hat A_1\otimes I-pI\otimes\hat B_0-qI\otimes\hat B_1\\
&-p\hat A_1\otimes\hat B_0+q\hat A_1\otimes\hat B_1.
\end{split}
\end{equation*}
Adding $(I-\hat A_0)\otimes D$ supplies the remaining $\hat A_0$ terms and changes the Bob marginal coefficients to $-\beta_0,-\beta_1$. The scalar coefficient becomes $\alpha_0+p+q=\mu$, verifying \eqref{eq:gap-decomp}.

Introduce the spectral projectors $P^{\hat A}_{x,\pm}=(I\pm\hat A_x)/2$ and $P^{\hat B}_{y,\pm}=(I\pm\hat B_y)/2$. Then
\begin{equation}\label{eq:pi-projectors}
\Pi=4pP^{\hat A}_{1,+}\otimes P^{\hat B}_{0,-}+4qP^{\hat A}_{1,-}\otimes P^{\hat B}_{1,-}\succeq0.
\end{equation}
We do not yet know that $D\succeq0$; this cannot be concluded by subtracting $\Pi\succeq0$ from $M\succeq0$.

\textbf{Step 3. Test the kernel of $\Pi$ to prove $D(t)\succeq0$ for every interior $t$.} Choose arbitrary noncommuting measurement pairs on both sides. Let $|u_{x,\pm}\rangle$ and $|v_{y,\pm}\rangle$ be their normalized local eigenvectors. Writing
\begin{equation*}
|\psi\rangle=|u_{1,+}\rangle\otimes|\xi_+\rangle+|u_{1,-}\rangle\otimes|\xi_-\rangle,
\end{equation*}
we obtain from \eqref{eq:pi-projectors}
\begin{equation*}
\langle\psi|\Pi|\psi\rangle=4p\|P^{\hat B}_{0,-}\xi_+\|^2+4q\|P^{\hat B}_{1,-}\xi_-\|^2.
\end{equation*}
Since $p,q>0$,
\begin{equation}\label{eq:pi-kernel}
\ker\Pi=\operatorname{span}\{\,|u_{1,+}\rangle\otimes|v_{0,+}\rangle,\ |u_{1,-}\rangle\otimes|v_{1,+}\rangle\,\}.
\end{equation}
Consider the map
\begin{equation}\label{eq:theta-map}
\Theta:\ker\Pi\longrightarrow\mathbb C^2,\qquad\Theta\psi=(\langle u_{0,-}|\otimes I)\psi.
\end{equation}
On the two displayed basis vectors, its values are
\begin{equation*}
\langle u_{0,-}|u_{1,+}\rangle|v_{0,+}\rangle,\qquad\langle u_{0,-}|u_{1,-}\rangle|v_{1,+}\rangle.
\end{equation*}
Alice's noncommutation makes both scalar factors nonzero: a zero factor would imply a common eigenvector of her two observables. Bob's noncommutation makes $|v_{0,+}\rangle$ and $|v_{1,+}\rangle$ linearly independent, since otherwise they would define a common eigenvector. Thus the two basis vectors of $\ker\Pi$ have linearly independent images, and $\Theta$ is an isomorphism.

For any $|\chi\rangle\in\mathbb C^2$, choose $|\psi\rangle\in\ker\Pi$ with $\Theta\psi=\chi$. Since $I-\hat A_0=2|u_{0,-}\rangle\langle u_{0,-}|$, the global inequality \eqref{eq:global-psd} gives
\begin{equation}\label{eq:kernel-positive}
0\le\langle\psi|M|\psi\rangle=2\langle\chi|D|\chi\rangle.
\end{equation}
Thus $D\succeq0$. The Bob pair was arbitrary, so this conclusion holds for every $-1<t<1$, not merely for $t=t_*$. This is the essential use of the global quantum bound.

\textbf{Step 4. Return to $t_*$ and obtain strict positivity.} For a Bob pair of overlap $t$, the eigenvalues of $D$ are
\begin{equation*}
\alpha_0\pm r(t),\qquad r(t)=\sqrt{c_{00}^2+c_{01}^2+2c_{00}c_{01}t}.
\end{equation*}
Positivity for all $-1<t<1$ forces
\begin{equation}\label{eq:alpha-bound}
\alpha_0\ge\sup_{-1<t<1}r(t)=|c_{00}|+|c_{01}|.
\end{equation}
Since $c_{00}c_{01}\ne0$ and $|t_*|<1$,
\begin{equation*}
(|c_{00}|+|c_{01}|)^2-r(t_*)^2=2|c_{00}c_{01}|\bigl(1-t_*\sgn(c_{00}c_{01})\bigr)>0.
\end{equation*}
Hence $\alpha_0>r(t_*)$, so $D(t_*)\succ0$.

Fix any interior $s$, and suppose $M(s,t_*)\psi=0$. The two terms in \eqref{eq:gap-decomp} are now positive semidefinite, so
\begin{equation*}
0=\langle\psi|\Pi|\psi\rangle+\langle\psi|(I-\hat A_0)\otimes D(t_*)|\psi\rangle
\end{equation*}
forces each expectation to vanish. The first gives $\psi\in\ker\Pi$. The second gives $2\langle\Theta\psi|D(t_*)|\Theta\psi\rangle=0$. Strict positivity of $D(t_*)$ gives $\Theta\psi=0$, and injectivity of $\Theta$ on $\ker\Pi$ gives $\psi=0$. Therefore
\begin{equation}\label{eq:zero-line-contradiction}
M(s,t_*)\succ0,\qquad f(s,t_*)>0\quad(-1<s<1),
\end{equation}
contradicting the proposed zero line. Exchanging Alice and Bob excludes a vertical line as well.
\end{proof}
\begin{remark}[What the endpoint argument does and does not assert]\label{rem:endpoints}
The normal form \eqref{eq:effective-table} identifies four deterministic strategies attaining $\mu=L_{\vB}$:
\begin{equation*}
(+,+;+,+),\quad(+,+;+,-),\quad(+,-;+,+),\quad(+,-;-,+).
\end{equation*}
They need not be the only local maximizers, and they need not be individually realizable using Bob's fixed noncommuting observables. For example, the first optimal block is attained by the $+1$ eigenstate of $\hat B_0$. Its endpoint behavior is
\begin{equation*}
\vL_+=\lambda \vL_{(+,+;+,+)}+(1-\lambda)\vL_{(+,+;+,-)},\qquad\lambda=\frac{1+t_*}{2}\in(0,1).
\end{equation*}
The second block analogously mixes the other two strategies with the same weights. Bob's unused effective coefficient is zero, so the remaining measurement uncertainty does not alter the score. The commuting endpoints are used only to constrain the fixed coefficients. Steps 3 and 4 concern noncommuting measurement pairs; no interior target is replaced by a commuting realization.
\end{remark}

\subsection{From uniqueness theorem to the full maximizing face}\label{sec:full-face}

This consequence uses one additional cited input, separate from the convex-hull reduction and from the determinant argument: nonlocal exposed points of $\cQ$ have all joint probabilities strictly positive \cite{BarizienBancal2025}. We use this property as an external input; we do not derive it here. It is needed only for Theorem~\ref{thm:no-common-support}, not for the decomposition theorem.
\begin{theorem}[No common support]\label{thm:no-common-support}
If $\vQ_1\ne \vQ_2$ belong to the set $\cE_{\rm es}$ of nonlocal exposed points, then no nonzero Bell functional is maximized at both points. Equivalently,
\begin{equation*}
\scrN(\vQ_1)\cap\scrN(\vQ_2)=\{0\}.
\end{equation*}
\end{theorem}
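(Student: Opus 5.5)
The plan is to argue by contradiction. Suppose $\vQ_1\ne\vQ_2$ are both in $\cE_{\rm es}$, and suppose some nonzero $\vB$ lies in $\scrN(\vQ_1)\cap\scrN(\vQ_2)$. Then $\vB\cdot\vQ_1=\vB\cdot\vQ_2=Q_{\vB}$, so both points lie in the exposed face $F_{\vB}$. The goal is to show that $\vB$ must be frustrated. Theorem~\ref{thm:frustrated-face} then allows at most one nonlocal extreme point in $F_{\vB}$, which is the contradiction.

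\textbf{Step 1 (frustration).} Use the cited input that nonlocal exposed points have all joint probabilities strictly positive. Since $\vQ_1$ is a quantum maximizer of $\vB$ with strictly positive joint probabilities, Proposition~\ref{prop:force-frustration} gives $c_{00}c_{01}c_{10}c_{11}<0$. Case (a) covers $Q_{\vB}>L_{\vB}$ and case (b) covers $Q_{\vB}=L_{\vB}$. In particular every $c_{xy}\ne0$, so Lemmas~\ref{lem:simple-eigenvalue} and \ref{lem:no-zero-line} apply to $\vB$.

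\textbf{Step 2 (applying the face theorem).} Both $\vQ_1$ and $\vQ_2$ are extreme points of $\cQ$ lying in $F_{\vB}$, so they are extreme points of $F_{\vB}$. Both are nonlocal. Since $\cE\subseteq\cS$, each belongs to $F_{\vB}\cap\cS$. Theorem~\ref{thm:frustrated-face} asserts that, when $F_{\vB}$ contains a nonlocal behavior, there is a \emph{unique} nonlocal element of $F_{\vB}\cap\cS$. It follows that $\vQ_1=\vQ_2$, a contradiction.

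As a cross-check, the decomposition \eqref{eq:face-structure} gives $\Ext(F_{\vB})\subseteq\{\vQ_{\vtheta}\}\cup\Ext(F_{\vB}\cap\cL)$. Every extreme point of $F_{\vB}\cap\cL$ is a deterministic vertex, because $F_{\vB}\cap\cL$ is a face of the polytope $\cL$. So $F_{\vB}$ has only one nonlocal extreme point.

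\textbf{Main obstacle.} The main subtlety is making sure the uniqueness in Theorem~\ref{thm:frustrated-face} really covers two points that might come from distinct overlap pairs, including pairs on the boundary of the overlap square. A nonlocal element of $\cS$ needs noncommuting measurements on both sides. Its overlaps therefore lie in $(-1,1)^2$, or at $|s|=1$ or $|t|=1$, where the behavior becomes local. So both points correspond to interior zeros of $f$. Lemmas~\ref{lem:zero-line-poly} and \ref{lem:no-zero-line} force these zeros to coincide. Lemma~\ref{lem:simple-eigenvalue} then fixes the state, up to local unitaries that preserve the behavior. If that case analysis in the theorem is accepted, the argument above is complete.

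The equivalent statement $\scrN(\vQ_1)\cap\scrN(\vQ_2)=\{0\}$ is a restatement of the claim, since the nonzero elements of the normal cone are exactly the Bell functionals maximized at the point.
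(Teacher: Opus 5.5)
Your proposal is correct and follows essentially the same route as the paper. The paper also uses the cited strict positivity of joint probabilities at nonlocal exposed points, via Proposition~\ref{prop:force-frustration}, to force frustrated signs; Theorem~\ref{thm:frustrated-face} then admits at most one nonlocal extreme point in $F_{\vB}$, so $\vQ_1=\vQ_2$, and your remarks on interior overlaps and the face-structure cross-check only unpack what that theorem's proof already contains.
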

\begin{proof}
Suppose a nonzero $\vB$ is maximized at both $\vQ_1$ and $\vQ_2$, where $\vQ_1,\vQ_2\in\cE_{\rm es}$. By the cited input, $\vQ_1$ has strictly positive joint probabilities. Therefore Proposition~\ref{prop:force-frustration}(b) forces frustrated signs, even if $Q_{\vB}=L_{\vB}$. Each $\vQ_i$ is a nonlocal extreme point of $\cQ$ and hence a nonlocal extreme point of the face $F_{\vB}$. Theorem~\ref{thm:frustrated-face} allows at most one such point, so $\vQ_1=\vQ_2$. This proves the supporting-hyperplane statement. Equation~\eqref{eq:normal-app} gives its equivalent normal-cone formulation.
\end{proof}

\begin{proof}[Proof of Theorem~\ref{thm:frustrated-face}]
We first obtain uniqueness at the qubit level. By \eqref{eq:stage1-inputs} and Lemma~\ref{lem:zero-line-poly}, two distinct optimal interior overlap pairs would force a zero line. Lemma~\ref{lem:no-zero-line} excludes that line. Thus there is at most one optimal pair in $(-1,1)^2$. At that pair, if it exists, Lemma~\ref{lem:simple-eigenvalue} gives a unique optimal state projector. Local unitary equivalence of ordered pairs with the same overlaps then gives a unique optimal behavior at that pair.

By Lemma 7.1 of Ref.~\cite{Goh2018}, every nonlocal member of $\cS$ has interior overlaps. Thus the preceding qubit argument covers every nonlocal maximizing member of $\cS$, and all such members coincide.

Now take $\vQ\in F_{\vB}$. The convex-hull input supplies a finite decomposition
\begin{equation*}
\vQ=\sum_j\lambda_j\vQ_{\vtheta_j},\qquad \vQ_{\vtheta_j}\in\cS,\quad \lambda_j>0,\quad\sum_j\lambda_j=1.
\end{equation*}
Saturation of the Bell bound gives
\begin{equation}\label{eq:saturation-sum}
0=\mu-\beta_{\vB}(\vQ)=\sum_j\lambda_j\bigl[\mu-\beta_{\vB}(\vQ_{\vtheta_j})\bigr].
\end{equation}
Every summand is nonnegative, so each $\vQ_{\vtheta_j}$ also belongs to $F_{\vB}$. If $\vQ$ is nonlocal, at least one component is nonlocal; call it $\vQ_{\vtheta}$. All nonlocal components equal $\vQ_{\vtheta}$, and all remaining components lie in $F_{\vB}\cap\cL$. This proves the inclusion of $F_{\vB}$ in the right-hand side of \eqref{eq:face-structure}; the reverse inclusion follows from convexity.

It remains to show that $\vQ_{\vtheta}$ is extreme in $\cQ$. Suppose
\begin{equation*}
\vQ_{\vtheta}=\lambda_0 \vQ_1+(1-\lambda_0)\vQ_2,\qquad \vQ_1,\vQ_2\in\cQ,\quad0<\lambda_0<1.
\end{equation*}
Saturation forces $\vQ_1,\vQ_2\in F_{\vB}$. If $F_{\vB}\cap\cL=\varnothing$, Eq.~\eqref{eq:face-structure} already makes the face a singleton. Otherwise, write
\begin{equation*}
\vQ_i=\lambda_i\vQ_{\vtheta}+(1-\lambda_i)\vL^{(i)},\qquad \vL^{(i)}\in F_{\vB}\cap\cL,\quad0\le\lambda_i\le1.
\end{equation*}
Set $\lambda=\lambda_0\lambda_1+(1-\lambda_0)\lambda_2$. If $\lambda<1$, substitution yields $\vQ_{\vtheta}=\lambda \vQ_{\vtheta}+(1-\lambda)\vL$ for a local behavior $\vL$, which would imply $\vQ_{\vtheta}=\vL$. This contradicts nonlocality. Hence $\lambda=1$, forcing $\lambda_1=\lambda_2=1$ and $\vQ_1=\vQ_2=\vQ_{\vtheta}$. Thus $\vQ_{\vtheta}$ is extreme. Every other nonlocal point in the face is a nontrivial mixture of $\vQ_{\vtheta}$ and a local point, and is therefore not extreme.

Finally, if $\mu>L_{\vB}$, no local behavior belongs to $F_{\vB}$. Since this face is nonempty, Eq.~\eqref{eq:face-structure} reduces to $F_{\vB}=\{\vQ_{\vtheta}\}$.
\end{proof}
The distinction between the two regimes is therefore explicit: a strict Bell gap gives a unique quantum maximizer, whereas equality of the two bounds allows local maximizing components and their mixtures with the unique nonlocal extreme point, when such a point exists.

\begin{theorem}[Strict-gap uniqueness]\label{thm:strict-gap}
Every Bell functional with $Q_{\vB}>L_{\vB}$ has a unique quantum maximizer.
\end{theorem}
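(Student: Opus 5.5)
Theorem~\ref{thm:strict-gap} follows almost directly from Theorem~\ref{thm:frustrated-face}; the plan is to check its hypotheses and then read off the conclusion.

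\emph{Frustration.} Let $\vB$ satisfy $Q_{\vB}>L_{\vB}$. Proposition~\ref{prop:force-frustration}(a) gives $c_{00}c_{01}c_{10}c_{11}<0$. In particular every $c_{xy}\ne0$, which is the hypothesis needed by Lemmas~\ref{lem:simple-eigenvalue} and~\ref{lem:no-zero-line}, and $[s^2t^2]f<0$, which is the hypothesis needed by Lemma~\ref{lem:zero-line-poly}. Hence $\vB$ falls under Theorem~\ref{thm:frustrated-face}.

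\emph{Nonemptiness and nonlocality of the face.} Since $\cQ$ is compact, $F_{\vB}$ is nonempty. Any $\vQ\in F_{\vB}$ has $\beta_{\vB}(\vQ)=Q_{\vB}>L_{\vB}$, so it cannot lie in $\cL$. Thus $F_{\vB}\cap\cL=\varnothing$ and $F_{\vB}$ contains a nonlocal behavior. The convex-hull decomposition from Proposition~\ref{prop:convex-hull}, together with the saturation identity \eqref{eq:saturation-sum}, then yields a nonlocal member $\vQ_{\vtheta}\in F_{\vB}\cap\cS$.

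\emph{Uniqueness.} Theorem~\ref{thm:frustrated-face} says $\vQ_{\vtheta}$ is the unique nonlocal element of $F_{\vB}\cap\cS$. Inserting $F_{\vB}\cap\cL=\varnothing$ into \eqref{eq:face-structure} gives $F_{\vB}=\{\vQ_{\vtheta}\}$. So the maximizer is unique, and it is moreover an exposed extreme point generated by a pure two-qubit state.

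I do not expect a genuine obstacle, because all the analytic work sits inside Theorem~\ref{thm:frustrated-face}. The one point to verify is that the strict gap itself excludes the degenerate alternatives used there. A two-dimensional top eigenspace would, via the product vector of Lemma~\ref{lem:product-subspace}, force $\mu=L_{\vB}$ as in \eqref{eq:local-equal}. A zero line of optima would also force $\mu=L_{\vB}$, by \eqref{eq:block-upper}. Both contradict $Q_{\vB}>L_{\vB}$, so the strict-gap case needs no further argument.
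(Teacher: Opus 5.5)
Your proposal is correct and follows the paper's own proof. A strict gap gives frustrated signs via Proposition~\ref{prop:force-frustration}(a), so Theorem~\ref{thm:frustrated-face} applies, and since no local behavior can attain $Q_{\vB}>L_{\vB}$, the nonempty face $F_{\vB}$ reduces to $\{\vQ_{\vtheta}\}$ (your closing remark about the degenerate eigenspace and the zero line is also made in the paper, while the quartic case still genuinely uses the frustration you established).
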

\begin{proof}
A strict Bell gap forces frustrated signs by Proposition~\ref{prop:force-frustration}(a). Theorem~\ref{thm:frustrated-face} therefore applies. No local behavior can attain $Q_{\vB}$ when $Q_{\vB}>L_{\vB}$, so the nonempty maximizing face is the singleton $\{\vQ_{\vtheta}\}$.
\end{proof}

\section{From the uniqueness theorem to the decomposition of a general quantum correlation}\label{app:decomposition}
We write $\overline{\vQ}$ for an unrestricted quantum component, to distinguish it from the input $\vQ$.
Stage I identifies a unique behavior once a Bell functional has a strict quantum advantage. The task now is different: given an arbitrary quantum behavior $\vQ$, we must construct a functional to which that result applies. Minimizing the quantum weight will produce it.

The logic is to start with an interior nonlocal $\vQ$, put it on the boundary of the smallest admissible mixture set, and choose a supporting functional there. Interiority of $\vQ$ in $\cQ$ then forces a strict Bell gap. Theorem~\ref{thm:strict-gap} fixes the quantum component, and a final compactness argument reaches the boundary of $\cQ$. This stage uses Proposition~\ref{prop:convex-hull} and Theorem~\ref{thm:strict-gap}, but not the shared-support consequence.

\begin{center}
\begin{minipage}{0.94\textwidth}
\textbf{Proof map for Stage II}\par\smallskip
Fix $\vQ\in\intt\cQ\setminus\cL$, $K_\lambda=\lambda\cQ+(1-\lambda)\cL$. Compactness + nonlocality + interiority $\Rightarrow$ an attained minimum $\lambda_*\in(0,1)$. Minimality $\Rightarrow \vQ\in\partial K_{\lambda_*}\Rightarrow$ a nonzero supporting functional $\beta_{\vB}$ with $\beta_{\vB}(\vQ)=\lambda_*Q_{\vB}+(1-\lambda_*)L_{\vB}$. $\vQ\in\intt\cQ$ and $\vB\ne0\Rightarrow\beta_{\vB}(\vQ)<Q_{\vB}$; together with the support equality $\Rightarrow Q_{\vB}>L_{\vB}$. An attained decomposition: $\vQ=\lambda_*\overline{\vQ}+(1-\lambda_*)\vL_* $; support equality + positive weights $\Rightarrow\beta_{\vB}(\overline{\vQ})=Q_{\vB}$. Stage I $\Rightarrow \overline{\vQ}=\vQ_{\vtheta_*}\in\cS\Rightarrow \vQ=\lambda_*\vQ_{\vtheta_*}+(1-\lambda_*)\vL_*$ for interior nonlocal $\vQ$. Interior approximation + compactness of $[0,1]\times\cS\times\cL$ $\Rightarrow$ the decomposition for every $\vQ\in\cQ$.
\end{minipage}
\end{center}

\subsection{A full-dimensional local reference set}\label{appcontents:C:1}
We first need a local point with a full neighborhood inside $\cQ$. It will allow us to move boundary behaviors into the interior and will also ensure that the mixture sets below are full-dimensional.
\begin{lemma}[The local origin is an interior point]\label{lem:local-origin}
The local polytope $\cL$ is full-dimensional in $\mathbb R^8$, and its origin is an interior point.
\end{lemma}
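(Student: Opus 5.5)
To prove Lemma~\ref{lem:local-origin}, I will exhibit the origin as an explicit convex combination of the sixteen deterministic vertices $\vL_{\eps;\eta}$ and then show that a full neighbourhood of it lies in $\cL$. For the first part, take the uniform mixture of all sixteen vertices. The marginals average to $\frac{1}{16}\sum\eps_x=0$ and $\frac{1}{16}\sum\eta_y=0$. Each correlator averages to $\frac{1}{16}\sum\eps_x\eta_y=\bigl(\frac14\sum\eps_x\bigr)\bigl(\frac14\sum\eta_y\bigr)\cdot 4=0$. Hence $\vQ_0\in\cL$.

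For full-dimensionality, I will show that the sixteen vectors $\vL_{\eps;\eta}$ span $\mathbb R^8$. Their coordinates are exactly the non-constant character functions $\eps_x$, $\eta_y$, $\eps_x\eta_y$ on $\{\pm1\}^4$. These eight functions are orthogonal, hence linearly independent, under the uniform inner product, since $\sum_{\eps,\eta}\chi\chi'=16\,\delta_{\chi\chi'}$. So the $16\times8$ matrix whose rows are the vertices has rank eight. Together with the constant function, they also show that the affine hull of the vertices is $\mathbb R^8$. A convex polytope whose affine hull is $\mathbb R^8$ has nonempty interior, so $\cL$ is full-dimensional.

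It remains to place $\vQ_0$ in the interior, not merely in $\cL$. Because $\vQ_0$ is the uniform mixture, every vertex enters with strictly positive weight $1/16$. I will use the standard fact that a point expressible as a convex combination, with all weights strictly positive, of a set of points whose affine hull is the whole space lies in the interior of their convex hull. Concretely, for any small displacement $\vec v$, write $\vec v=\sum_l\mu_l\vL_l$ with $\sum_l\mu_l=0$. Such a representation exists by the spanning argument above: the vertices together with $\vQ_0=\frac1{16}\sum\vL_l$ span directions with zero-sum coefficients, and one can choose the $\mu_l$ to depend linearly on $\vec v$. Then
\[
\vQ_0+\vec v=\sum_l\bigl(\tfrac1{16}+\mu_l\bigr)\vL_l,
\]
and the weights stay nonnegative for $\|\vec v\|$ small. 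Alternatively, and more directly, I can check that $\vQ_0$ strictly satisfies every facet inequality of $\cL$. The CHSH facets give $S_\eps(\vQ_0)=0<2$, and the positivity facets give $p(ab|xy)=1/4>0$. Since these inequalities are finitely many and strict, a neighbourhood survives. I expect no real obstacle; the only care needed is to justify that the positivity and CHSH facets exhaust the facets of $\cL$, and this is the cited complete description \cite{Fine1982,Froissart1981}. I would present the weight-perturbation argument as the primary proof, since it avoids that citation.
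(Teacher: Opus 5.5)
Your proof is correct, and its first half matches the paper's: orthogonality of the Walsh characters $1,\eps_x,\eta_y,\eps_x\eta_y$ under the uniform average shows that the deterministic vertices affinely span $\mathbb R^8$, and their uniform mixture is the origin. There is a minor slip: $\tfrac1{16}\sum\eps_x\eta_y=(\tfrac14\sum\eps_x)(\tfrac14\sum\eta_y)$ carries no extra factor $4$, but this is harmless since both factors vanish.

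For interiority you take a different route. The paper argues by contradiction. If the origin were a boundary point, a supporting hyperplane would pass through it. Since the origin is a strictly positive average of all vertices, all of which lie on one side, every vertex would lie in that hyperplane, contradicting full dimensionality.

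You instead perturb the mixture weights directly. Your justification for a zero-sum representation $\vec v=\sum_l\mu_l\vL_l$ is a little loose, but it is easily made exact. Because $\sum_l\vL_l=0$, adding a constant to every coefficient of any linear representation of $\vec v$ leaves it unchanged. Alternatively, by orthogonality, $\mu_l=\tfrac1{16}\sum_k v_k\chi_k(l)$ works outright, where $\chi_k$ runs over the eight non-constant characters and $v_k$ are the matching coordinates of $\vec v$.

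The paper's argument is shorter and needs only the supporting hyperplane theorem. Yours is constructive and gives something the paper's does not: the weights $\tfrac1{16}\bigl(1+\sum_k v_k\chi_k(l)\bigr)$ are nonnegative whenever $\|\vec v\|_1\le1$. So the whole unit $\ell_1$-ball lies in $\cL$, which is an explicit interior radius.

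Your fallback, checking strict satisfaction of the CHSH and positivity inequalities, is also valid. However, it relies on the complete facet description of the local polytope \cite{Fine1982}, so you are right to keep it secondary.
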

\begin{proof}
On the sixteen assignments $(\eps_0,\eps_1,\eta_0,\eta_1)\in\{\pm1\}^4$, the constant function and the eight behavior coordinates are the distinct Walsh characters
\begin{equation*}
1,\quad\eps_0,\quad\eps_1,\quad\eta_0,\quad\eta_1,\quad\eps_0\eta_0,\quad\eps_0\eta_1,\quad\eps_1\eta_0,\quad\eps_1\eta_1.
\end{equation*}
They are mutually orthogonal under the uniform average. Hence no nonzero affine functional vanishes on all deterministic vertices, so their affine hull is $\mathbb R^8$.

Their uniform average is the origin. If a supporting hyperplane passed through this strictly positive average of all vertices, linearity would force it to contain every vertex. That would contradict full dimensionality. Thus the origin is interior to $\cL$.
\end{proof}
For $0\le\lambda\le1$, define
\begin{equation}\label{eq:Ktau}
K_\lambda=\lambda\cQ+(1-\lambda)\cL=\{\lambda \overline{\vQ}+(1-\lambda)\vL:\overline{\vQ}\in\cQ,\vL\in\cL\}.
\end{equation}
These are compact convex sets, with $K_0=\cL$ and $K_1=\cQ$. Each contains $\cL$, and hence is full-dimensional. They are nested: if $0\le\lambda_1\le\lambda_2$ with $\lambda_2>0$, then
\begin{equation*}
\lambda_1\overline{\vQ}+(1-\lambda_1)\vL=\lambda_2\left(\frac{\lambda_1}{\lambda_2}\overline{\vQ}+\left(1-\frac{\lambda_1}{\lambda_2}\right)\vL\right)+(1-\lambda_2)\vL\in K_{\lambda_2},
\end{equation*}
because the parenthesized behavior belongs to $\cQ$. The parameter $\lambda$ is the weight assigned to an unrestricted quantum component.

\subsection{An interior nonlocal behavior has a minimum quantum weight}\label{appcontents:C:2}
Fix $\vQ\in\intt\cQ\setminus\cL$, and choose a local reference point $\vL_{\mathrm{ref}}\in\intt\cL$, for example the origin. An arbitrary decomposition of $\vQ$ need not select a Bell maximizer. We therefore minimize its quantum weight:
\begin{equation}\label{eq:tau-min}
\lambda_*:=\min\{\lambda\in[0,1]:\vQ\in K_\lambda\}.
\end{equation}
The minimum is attained. Indeed, the feasible triples $(\lambda,\overline{\vQ},\vL)$ form a nonempty closed subset of the compact product $[0,1]\times\cQ\times\cL$, and their $\lambda$-coordinates form a compact set. Nonlocality gives $\lambda_*>0$. Also $\lambda_*<1$: since $\vQ$ is interior to $\cQ$, there is a small $\eps>0$ such that $\overline{\vQ}'=\vQ+\eps(\vQ-\vL_{\mathrm{ref}})\in\cQ$. Then
\begin{equation*}
\vQ=\frac{1}{1+\eps}\overline{\vQ}'+\frac{\eps}{1+\eps}\vL_{\mathrm{ref}},
\end{equation*}
which uses quantum weight less than one. Thus we may choose an attained decomposition
\begin{equation}\label{eq:tau-attained}
\vQ=\lambda_*\overline{\vQ}+(1-\lambda_*)\vL_*,\qquad \overline{\vQ}\in\cQ,\quad \vL_*\in\cL,\quad0<\lambda_*<1.
\end{equation}
\begin{lemma}[Minimality puts the behavior on the boundary]\label{lem:minimum-boundary}
The point $\vQ$ belongs to $\partial K_{\lambda_*}$.
\end{lemma}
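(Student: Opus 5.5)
We argue by contradiction. Suppose $\vQ\in\intt K_{\lambda_*}$. Then there is $\delta>0$ with $B_\delta(\vQ)\subseteq K_{\lambda_*}$. The aim is to push $\vQ$ slightly toward the local reference point and absorb the displacement into the local component. This produces a decomposition of $\vQ$ with quantum weight strictly below $\lambda_*$, contradicting the minimality in \eqref{eq:tau-min}.

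\textbf{Construction.} For small $\eps>0$, set
\begin{equation*}
\vQ'=\vQ+\eps(\vQ-\vL_{\mathrm{ref}}).
\end{equation*}
We have $\|\vQ'-\vQ\|=\eps\|\vQ-\vL_{\mathrm{ref}}\|<\delta$ once $\eps$ is small, so $\vQ'\in K_{\lambda_*}$. Hence we can write $\vQ'=\lambda_*\overline{\vQ}'+(1-\lambda_*)\vL'$ with $\overline{\vQ}'\in\cQ$ and $\vL'\in\cL$. Inverting the definition of $\vQ'$ gives
\begin{equation*}
\vQ=\frac{1}{1+\eps}\vQ'+\frac{\eps}{1+\eps}\vL_{\mathrm{ref}}
=\frac{\lambda_*}{1+\eps}\overline{\vQ}'+\Bigl(1-\frac{\lambda_*}{1+\eps}\Bigr)\vL'',
\end{equation*}
where
\begin{equation*}
\vL''=\frac{(1-\lambda_*)\vL'+\eps\vL_{\mathrm{ref}}}{1-\lambda_*+\eps}.
\end{equation*}

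\textbf{Verification.} Three routine checks remain.
\begin{itemize}
\item The two weights sum to one: $\lambda_*+(1-\lambda_*+\eps)=1+\eps$.
\item $\vL''\in\cL$, because it is a convex combination of $\vL'$ and $\vL_{\mathrm{ref}}$ and $\cL$ is convex.
\item The new quantum weight is $\lambda_*/(1+\eps)<\lambda_*$. This is strict because $\lambda_*>0$, which holds by nonlocality of $\vQ$.
\end{itemize}
Thus $\vQ\in K_{\lambda_*/(1+\eps)}$, contradicting minimality, and so $\vQ\in\partial K_{\lambda_*}$.

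The only delicate point is bookkeeping: the shift must go into the local part rather than the quantum part, and the reweighting must leave the quantum component unchanged. The computation above handles both. An alternative route is a continuity argument on the nested family $K_\lambda$, but the explicit rescaling is cleaner and uses only the convexity of $\cL$.
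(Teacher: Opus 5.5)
Your proposal is correct and follows essentially the same argument as the paper: push $\vQ$ to $\vQ+\eps(\vQ-\vL_{\mathrm{ref}})\in K_{\lambda_*}$, solve back for $\vQ$, and absorb the $\vL_{\mathrm{ref}}$ term into the local component, which gives quantum weight $\lambda_*/(1+\eps)<\lambda_*$ and contradicts minimality. You also state explicitly that the strict inequality needs $\lambda_*>0$; the paper leaves this implicit, although it establishes $\lambda_*>0$ just before the lemma.
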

\begin{proof}
Suppose instead that $\vQ\in\intt K_{\lambda_*}$. For some small $\eps>0$, the point $\vQ'=\vQ+\eps(\vQ-\vL_{\mathrm{ref}})$ would also belong to $K_{\lambda_*}$. Write $\vQ'=\lambda_*\overline{\vQ}'+(1-\lambda_*)\vL'$ with $\overline{\vQ}'\in\cQ$ and $\vL'\in\cL$. Solving for $\vQ$ gives
\begin{equation}\label{eq:smaller-weight}
\vQ=\frac{\lambda_*}{1+\eps}\overline{\vQ}'+\frac{1-\lambda_*+\eps}{1+\eps}\left(\frac{(1-\lambda_*)\vL'+\eps \vL_{\mathrm{ref}}}{1-\lambda_*+\eps}\right).
\end{equation}
The parenthesized behavior is local. Hence this is a valid decomposition with quantum weight $\lambda_* /(1+\eps)<\lambda_*$, contradicting minimality.
\end{proof}

\subsection{The supporting functional has a strict quantum gap}\label{appcontents:C:3}
The boundary just obtained belongs to $K_{\lambda_*}$, not initially to $\cQ$. The support function of $K_{\lambda_*}$ connects the two sets. Because $K_{\lambda_*}$ is full-dimensional, the boundary point $\vQ$ admits a nonzero supporting functional $\beta_{\vB}$. Maximizing its two contributions in \eqref{eq:Ktau} independently gives
\begin{equation}\label{eq:tau-support}
\beta_{\vB}(\vQ)=h_{K_{\lambda_*}}(\vB)=\lambda_*Q_{\vB}+(1-\lambda_*)L_{\vB}.
\end{equation}
Since $\vQ\in\intt\cQ$ and $\vB\ne0$, we also have $\beta_{\vB}(\vQ)<Q_{\vB}$. Indeed, a sufficiently small move from $\vQ$ in the direction of $\vB$ stays in $\cQ$ and strictly increases the functional. Subtracting \eqref{eq:tau-support} from $Q_{\vB}$ therefore yields
\begin{equation}\label{eq:strict-gap-step}
0<Q_{\vB}-\beta_{\vB}(\vQ)=(1-\lambda_*)(Q_{\vB}-L_{\vB}).
\end{equation}
As $\lambda_*<1$, it follows that $Q_{\vB}>L_{\vB}$. This is the central geometric step: minimizing the weight produces a supporting functional with exactly the strict gap required by Stage I.

Apply this functional to the attained decomposition \eqref{eq:tau-attained}. We have
\begin{equation*}
\beta_{\vB}(\overline{\vQ})\le Q_{\vB},\qquad\beta_{\vB}(\vL_*)\le L_{\vB}.
\end{equation*}
The weighted sum of these two values equals \eqref{eq:tau-support}. Equivalently,
\begin{equation}\label{eq:tau-saturation}
0=\lambda_*[Q_{\vB}-\beta_{\vB}(\overline{\vQ})]+(1-\lambda_*)[L_{\vB}-\beta_{\vB}(\vL_*)].
\end{equation}
Both weights are positive, and both bracketed terms are nonnegative. Thus both terms vanish: $\overline{\vQ}$ is a quantum maximizer and $\vL_*$ is a local maximizer of the supporting functional.

Theorem~\ref{thm:strict-gap} gives a unique quantum maximizer $\vQ_{\vtheta_*}$, so $\overline{\vQ}=\vQ_{\vtheta_*}$. Moreover, $\vQ_{\vtheta_*}\in\cS$: the maximum over the compact family $\cS$ is attained and equals the maximum over $\cQ=\conv\cS$, and uniqueness identifies that maximizer with $\vQ_{\vtheta_*}$. We have obtained
\begin{equation}\label{eq:interior-decomp}
\vQ=\lambda_*\vQ_{\vtheta_*}+(1-\lambda_*)\vL_*,\qquad \vQ_{\vtheta_*}\in\cS,\quad \vL_*\in\cL.
\end{equation}
The same saturation argument applies to every minimum-weight decomposition of this $\vQ$: each of its quantum components must be the unique maximizer of the chosen supporting functional.

\subsection{Extending the decomposition to the boundary}\label{appcontents:C:4}
The strict-gap argument used interiority, so it cannot be applied directly to an arbitrary boundary behavior. Instead, approximate such a behavior by interior points and pass to a limit.

\begin{proof}[Completion of the proof of Theorem~\ref{thm:decomp}]
For $\vQ\in\cL$, use $\lambda=0$ in \eqref{eq:decomp-main}. Otherwise choose $\vL_{\mathrm{ref}}\in\intt\cL\subseteq\intt\cQ$ and define
\begin{equation}\label{eq:boundary-approx}
\vQ_\eps=(1-\eps)\vQ+\eps \vL_{\mathrm{ref}},\qquad0<\eps<1.
\end{equation}
Each $\vQ_\eps$ is interior to $\cQ$. Indeed, if the Euclidean ball $B(\vL_{\mathrm{ref}},\delta)$ lies in $\cQ$, convexity gives $B(\vQ_\eps,\eps\delta)\subseteq\cQ$.

For every $\eps$, there is a decomposition
\begin{equation}\label{eq:epsilon-decomp}
\vQ_\eps=\lambda_\eps \vQ_{\vtheta_{\eps}}+(1-\lambda_\eps)\vL_\eps,\quad\lambda_\eps\in[0,1],\quad \vQ_{\vtheta_{\eps}}\in\cS,\quad \vL_\eps\in\cL.
\end{equation}
Use \eqref{eq:interior-decomp} when $\vQ_\eps$ is nonlocal, and the trivial local decomposition otherwise.

The product $[0,1]\times\cS\times\cL$ is compact. Along a sequence $\eps\downarrow0$, choose a convergent subsequence of the triples, with limit $(\lambda,\vQ_{\vtheta},\vL)$. Taking the limit in \eqref{eq:epsilon-decomp} gives
\begin{equation*}
\vQ=\lambda \vQ_{\vtheta}+(1-\lambda)\vL,\qquad\lambda\in[0,1],\quad \vQ_{\vtheta}\in\cS,\quad \vL\in\cL.
\end{equation*}
This proves the nontrivial inclusion in \eqref{eq:decomp-main}. The reverse inclusion follows from $\cS,\cL\subseteq\cQ$ and the convexity of $\cQ$.
\end{proof}
The two stages are now joined: minimum weight supplies a strict-gap supporting functional, uniqueness fixes its quantum component, and compactness preserves the one-point-plus-local form at the boundary.

\section{Normal-cone construction}\label{app:normal-cone}
Let $\vQ_{\vtheta}$ be a nonlocal exposed point of $\cQ$. We classify its exposed faces through supporting Bell functionals $\vB=(\alpha_0,\alpha_1,\beta_0,\beta_1,c_{00},c_{01},c_{10},c_{11})$, where $\alpha_x$ and $\beta_y$ multiply $A_x$ and $B_y$. Their normals belong to
\begin{equation}\label{eq:normal-D}
\scrN(\vQ_{\vtheta})=\{\vB\in\mathbb R^8:\vB\cdot(\vQ-\vQ_{\vtheta})\le0\ \text{for every }\vQ\in\cQ\}.
\end{equation}
Since the origin lies in $\intt\cQ$, every nonzero $\vB\in\scrN(\vQ_{\vtheta})$ satisfies $\vB\cdot \vQ_{\vtheta}>0$. Thus we may normalize it by $\vB\cdot \vQ_{\vtheta}=1$. For an interior, differentiable five-parameter realization ($0<\theta<\pi/2$), a normalized supporting functional must satisfy
\begin{equation}\label{eq:tangent-normalization}
\underbrace{\vB\cdot \vQ_{\vtheta}=1}_{\text{normalization}},\qquad\underbrace{T_{\vtheta}^{T}\vB=0}_{\text{tangency}},
\end{equation}
where $T_{\vtheta}=\partial\vQ_{\vtheta}/\partial(\theta,a_0,a_1,b_0,b_1)$ is the $8\times5$ tangent matrix. These linear conditions are necessary but not sufficient: the candidate is a genuine supporting normal precisely when
\begin{equation}\label{eq:certify-normal}
\max_{\vQ\in\cQ}\vB\cdot \vQ=\max_{\vtheta'}\vB\cdot\vQ_{\vtheta'}=1.
\end{equation}
Here the second equality uses $\cQ=\conv\cS$.

\subsection{Rank of the tangent matrix}\label{appcontents:D:1}
\begin{lemma}[Full column rank]\label{lem:tangent-rank}
Consider the parametrization
\begin{equation}\label{eq:param-D}
A_x=\cos\theta\cos a_x,\qquad B_y=\cos\theta\cos b_y,\qquad E_{xy}=\cos a_x\cos b_y+\sin\theta\sin a_x\sin b_y.
\end{equation}
If $0<\theta<\pi/2$ and both local measurement pairs are noncommuting,
\begin{equation}\label{eq:noncommuting-D}
\sin(a_0-a_1)\ne0,\qquad\sin(b_0-b_1)\ne0,
\end{equation}
then $\rank T_{\vtheta}=5$.
\end{lemma}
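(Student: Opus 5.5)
We compute the $8\times5$ tangent matrix explicitly and exhibit a nonvanishing $5\times5$ minor, or equivalently show that its kernel is trivial. Order the parameters as $(\theta,a_0,a_1,b_0,b_1)$ and write $c_x=\cos a_x$, $s_x=\sin a_x$, $\gamma_y=\cos b_y$, $\sigma_y=\sin b_y$. Differentiating \eqref{eq:param-D} gives the following entries.
\begin{equation*}
\begin{gathered}
\partial_\theta A_x=-\sin\theta\,c_x,\quad \partial_{a_x}A_x=-\cos\theta\,s_x,\quad \partial_\theta B_y=-\sin\theta\,\gamma_y,\quad \partial_{b_y}B_y=-\cos\theta\,\sigma_y,\\
\partial_\theta E_{xy}=\cos\theta\,s_x\sigma_y,\quad \partial_{a_x}E_{xy}=-s_x\gamma_y+\sin\theta\,c_x\sigma_y,\quad \partial_{b_y}E_{xy}=-c_x\sigma_y+\sin\theta\,s_x\gamma_y .
\end{gathered}
\end{equation*}
All other partial derivatives vanish. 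The marginal rows have a sparse structure: row $A_x$ involves only $\theta$ and $a_x$, and row $B_y$ involves only $\theta$ and $b_y$.

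Suppose $T_{\vtheta}v=0$ with $v=(v_\theta,v_{a_0},v_{a_1},v_{b_0},v_{b_1})$. Since $\cos\theta\neq0$, the four marginal equations give
\begin{equation*}
\cos\theta\, s_x v_{a_x}=-\sin\theta\, c_x v_\theta,\qquad \cos\theta\,\sigma_y v_{b_y}=-\sin\theta\,\gamma_y v_\theta .
\end{equation*}
These equations constrain the angular components wherever $s_x\ne0$ or $\sigma_y\ne0$. The remaining four correlator equations must then be shown to force $v=0$. We would handle this in two cases.

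\textbf{Generic case.} Here all $s_x,\sigma_y\ne0$. Substitute $v_{a_x}=-\tan\theta\,(c_x/s_x)v_\theta$ and $v_{b_y}=-\tan\theta\,(\gamma_y/\sigma_y)v_\theta$ into the correlator equations. Each equation then becomes a scalar multiple of $v_\theta$, so it suffices that one of these multiples is nonzero. Multiplying the $(x,y)$ equation by $\cos\theta\,s_x\sigma_y$ and simplifying with $\cos^2+\sin^2=1$, we expect a coefficient proportional to
\begin{equation*}
s_x^2\sigma_y^2\cos^2\theta+\sin\theta\,(c_x^2\sigma_y^2+s_x^2\gamma_y^2)-2\sin^2\theta\, c_xs_x\gamma_y\sigma_y .
\end{equation*}
If this vanished for all four pairs $(x,y)$, we would aim to derive $\sin(a_0-a_1)=0$ or $\sin(b_0-b_1)=0$. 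The plan is to compare the equations for $(0,y)$ and $(1,y)$: viewed as quadratic forms in $(\gamma_y,\sigma_y)$ they would have a common nonzero root for both values of $y$, and since $b_0\ne b_1$ modulo $\pi$ this should force the forms to be proportional. That proportionality in turn yields $a_0\equiv a_1$ modulo $\pi$.

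\textbf{Degenerate case.} Here some $s_x=0$ or $\sigma_y=0$, and then the corresponding marginal equation forces $v_\theta=0$ directly. Once $v_\theta=0$, the marginal equations give $v_{a_x}=0$ whenever $s_x\ne0$, and likewise for $b_y$. Noncommutativity ensures that at most one $a_x$ and at most one $b_y$ can have vanishing sine. For the leftover components, we would use the correlator rows $E_{xy}$ with $s_x=0$, which read $\sin\theta\, c_x\sigma_y v_{a_x}=0$. Here $c_x=\pm1$, and $\sigma_y\ne0$ for some $y$, so $v_{a_x}=0$; the argument for the $b$ components is symmetric.

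The main obstacle is the generic case: showing that the four scalar coefficients cannot vanish simultaneously under \eqref{eq:noncommuting-D}. If the quadratic-form argument becomes messy, the fallback is to choose one explicit $5\times5$ minor, for instance the rows $A_0,A_1,B_0,B_1,E_{00}$, and compute its determinant symbolically. Where that minor degenerates, we would switch to the minor using $E_{11}$ in place of $E_{00}$, and check that the two cannot vanish together.
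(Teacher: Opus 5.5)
Your route is sound and differs from the paper's, but the coefficient you display in the generic case is miscomputed. Multiplying the $(x,y)$ correlator equation by $\cos\theta\,s_x\sigma_y$ actually gives
\[
\cos^2\theta\,s_x^2\sigma_y^2+2\sin\theta\,c_xs_x\gamma_y\sigma_y-\sin^2\theta\,(c_x^2\sigma_y^2+s_x^2\gamma_y^2)=(s_x\sigma_y+\sin\theta\,c_x\gamma_y)^2-\sin^2\theta .
\]
Your version is positive definite in $(\gamma_y,\sigma_y)$, so a single correlator row would already force $v_\theta=0$; that should have signalled an error. The true coefficient does vanish, e.g.\ at $a_x=\pi/2$, $\sin b_y=\sin\theta$.

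Your proportionality argument still works for the correct form. Write
\[
Q_x(\gamma,\sigma)=-\sin^2\theta\,s_x^2\gamma^2+2\sin\theta\,c_xs_x\gamma\sigma+(s_x^2-\sin^2\theta)\sigma^2 .
\]
This is a nonzero binary quadratic. If all four coefficients vanish, $Q_0$ and $Q_1$ both vanish at $(\gamma_0,\sigma_0)$ and $(\gamma_1,\sigma_1)$, which are projectively distinct because $\sin(b_0-b_1)\ne0$. Hence $Q_0=\lambda Q_1$. Comparing the $\gamma^2$ and $\sigma^2$ coefficients forces $\lambda=1$, using $\sin\theta\ne0$. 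Then $s_0^2=s_1^2$ and $c_0s_0=c_1s_1$, so $(c_0,s_0)=\pm(c_1,s_1)$, contradicting $\sin(a_0-a_1)\ne0$.

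Your fallback would fail as stated. The point $(a_0,a_1,b_0,b_1)=(\pi/2,\theta,\theta,\pi/2)$ satisfies the hypotheses, and both the $(0,0)$ and $(1,1)$ coefficients vanish there. So the minors built from $A_0,A_1,B_0,B_1$ plus $E_{00}$ or $E_{11}$ vanish together, and all four correlator rows are needed.

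In the degenerate case, the $E_{xy}$ row with $s_x=0$ is really $\sin\theta\,c_x\sigma_yv_{a_x}-c_x\sigma_yv_{b_y}=0$. It reduces to your form only because you pick $y$ with $\sigma_y\ne0$, so that $v_{b_y}=0$ already.

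Compared with the paper: in the generic case the paper factors the residual numerator as a difference of squares. It uses output relabelings to put all angles in $(0,\pi)$, so that only the factor $\sin a_x\sin b_y+\sin\theta\cos a_x\cos b_y=\sin\theta$ can vanish. Subtracting the $x=0,1$ equations then gives a $2\times2$ linear system with determinant $\sin\theta\sin(b_0-b_1)$. Your binary-form argument needs no relabeling or sign selection and lands directly on $a_0\equiv a_1\pmod\pi$. Its cost is the elementary fact that two nonzero binary quadratics with two common projective roots are proportional. In the degenerate case the paper exhibits three explicit $5\times5$ minors with computed determinants, one per sub-case. Your kernel chase handles all sub-cases uniformly without evaluating any determinant: $v_\theta=0$ from the degenerate marginal row, then the nondegenerate marginal rows, then one correlator row per leftover component.
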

\begin{proof}
Put $s=\sin\theta$ and $c=\cos\theta$, with $s,c>0$. Differentiating Eq.~\eqref{eq:param-D}, we obtain
\begin{equation}\label{eq:tangent-matrix}
T_{\vtheta}=\begin{pmatrix}
-s\cos a_0&-c\sin a_0&0&0&0\\
-s\cos a_1&0&-c\sin a_1&0&0\\
-s\cos b_0&0&0&-c\sin b_0&0\\
-s\cos b_1&0&0&0&-c\sin b_1\\
c\sin a_0\sin b_0&\tau^{\hat A}_{00}&0&\tau^{\hat B}_{00}&0\\
c\sin a_0\sin b_1&\tau^{\hat A}_{01}&0&0&\tau^{\hat B}_{01}\\
c\sin a_1\sin b_0&0&\tau^{\hat A}_{10}&\tau^{\hat B}_{10}&0\\
c\sin a_1\sin b_1&0&\tau^{\hat A}_{11}&0&\tau^{\hat B}_{11}
\end{pmatrix},
\end{equation}
where
\begin{equation}\label{eq:tau-D}
\tau^{\hat A}_{xy}=-\sin a_x\cos b_y+s\cos a_x\sin b_y,\qquad\tau^{\hat B}_{xy}=-\cos a_x\sin b_y+s\sin a_x\cos b_y.
\end{equation}
The four marginal derivatives give a nonzero diagonal block in the four angular columns whenever all angle sines are nonzero. Eliminating these columns from the correlator rows leaves the four entries
\begin{equation}\label{eq:residual-D}
R_{xy}=\frac{(\sin a_x\sin b_y+s\cos a_x\cos b_y)^2-s^2}{c\sin a_x\sin b_y}.
\end{equation}
Thus $\rank T_{\vtheta}=5$ if at least one $R_{xy}$ is nonzero.

Suppose all four residuals vanish. By output relabeling, which preserves the rank, we may take $0<a_x,b_y<\pi$. Since $0<s<1$ and all sines are then positive,
\begin{equation}\label{eq:residual-ineq}
\sin a_x\sin b_y+s\cos a_x\cos b_y>-s.
\end{equation}
The equations $R_{xy}=0$ therefore imply
\begin{equation}\label{eq:residual-zero}
\sin a_x\sin b_y+s\cos a_x\cos b_y=s\qquad(x,y\in\{0,1\}).
\end{equation}
Subtracting the cases $x=0,1$ gives
\begin{equation}\label{eq:linear-D}
\begin{pmatrix}\sin b_0&s\cos b_0\\\sin b_1&s\cos b_1\end{pmatrix}\begin{pmatrix}\sin a_0-\sin a_1\\\cos a_0-\cos a_1\end{pmatrix}=0.
\end{equation}
The determinant of the coefficient matrix is $s\sin(b_0-b_1)\ne0$. Therefore $a_0\equiv a_1\pmod {2\pi}$, contrary to Alice's noncommutativity. Hence some $R_{xy}\ne0$, and the rank is five in this case.

It remains to consider vanishing angle sines. By noncommutativity, at most one vanishes on each side. Since $T_{\vtheta}$ has five columns, one nonzero $5\times5$ minor suffices. Up to input relabeling, three cases remain.

(i) $\sin a_0=0$ and $\sin a_1\sin b_0\sin b_1\ne0$. Selecting the rows $(A_0,A_1,B_0,B_1,E_{00})$ yields
\begin{equation}\label{eq:minor-a}
M_{\hat A}=\begin{pmatrix}
-s\cos a_0&0&0&0&0\\
-s\cos a_1&0&-c\sin a_1&0&0\\
-s\cos b_0&0&0&-c\sin b_0&0\\
-s\cos b_1&0&0&0&-c\sin b_1\\
0&s\cos a_0\sin b_0&0&-\cos a_0\sin b_0&0
\end{pmatrix}.
\end{equation}
Since $\cos^2a_0=1$,
\begin{equation}\label{eq:minor-a-det}
|\det M_{\hat A}|=s^2c^3|\sin a_1|\sin^2b_0|\sin b_1|>0.
\end{equation}
(ii) $\sin b_0=0$ and $\sin a_0\sin a_1\sin b_1\ne0$. Selecting $(B_0,A_0,A_1,B_1,E_{00})$ yields
\begin{equation}\label{eq:minor-b}
M_{\hat B}=\begin{pmatrix}
-s\cos b_0&0&0&0&0\\
-s\cos a_0&-c\sin a_0&0&0&0\\
-s\cos a_1&0&-c\sin a_1&0&0\\
-s\cos b_1&0&0&0&-c\sin b_1\\
0&-\sin a_0\cos b_0&0&s\sin a_0\cos b_0&0
\end{pmatrix}.
\end{equation}
Since $\cos^2b_0=1$,
\begin{equation}\label{eq:minor-b-det}
|\det M_{\hat B}|=s^2c^3\sin^2a_0|\sin a_1\sin b_1|>0.
\end{equation}
(iii) $\sin a_0=\sin b_0=0$. Here $\sin a_1\sin b_1\ne0$ by noncommutativity. Selecting $(A_0,A_1,B_1,E_{01},E_{10})$ yields
\begin{equation}\label{eq:minor-ab}
M_{\hat A\hat B}=\begin{pmatrix}
-s\cos a_0&0&0&0&0\\
-s\cos a_1&0&-c\sin a_1&0&0\\
-s\cos b_1&0&0&0&-c\sin b_1\\
0&s\cos a_0\sin b_1&0&0&-\cos a_0\sin b_1\\
0&0&-\sin a_1\cos b_0&s\sin a_1\cos b_0&0
\end{pmatrix}.
\end{equation}
Since $\cos^2a_0=\cos^2b_0=1$,
\begin{equation}\label{eq:minor-ab-det}
|\det M_{\hat A\hat B}|=s^3c^2\sin^2a_1\sin^2b_1>0.
\end{equation}
Thus the tangent matrix has full column rank in every case.
\end{proof}

\subsection{Reduction to two parameters}\label{appcontents:D:2}
For a nonlocal extreme point with $0<\theta<\pi/2$, Lemma~\ref{lem:tangent-rank} provides five independent tangency constraints. Together with the independent normalization $\vB\cdot \vQ_{\vtheta}=1$, they define a two-dimensional affine family of candidate Bell vectors,
\begin{equation}\label{eq:affine-family}
\vB(c_1,c_2)=\vB^{(0)}+c_1\vB^{(1)}+c_2\vB^{(2)}.
\end{equation}
This dimension count concerns the linear candidate space; the normalized normal cone is its subset satisfying the global quantum bound.

\paragraph*{Maximally entangled case.} At $\theta=\pi/2$ the marginals vanish, and the positive-sign parametrization in Eq.~\eqref{eq:param-D} gives $E_{xy}=\cos(a_x-b_y)$. It is invariant under
\begin{equation}\label{eq:common-shift}
a_x\mapsto a_x+t,\qquad b_y\mapsto b_y+t.
\end{equation}
At regular nonlocal points the tangent rank is four, because the simultaneous angular shift is redundant. Nevertheless, for every shifted realization of a quantum maximizer, one-sided optimality at $\theta=\pi/2$ implies
\begin{equation}\label{eq:g-shift}
g(t):=\sum_{x=0}^{1}\alpha_x\cos(a_x+t)+\sum_{y=0}^{1}\beta_y\cos(b_y+t)\le0.
\end{equation}
Since $g(t+\pi)=-g(t)$, one has $g\equiv0$, hence
\begin{equation}\label{eq:shift-constraints}
\begin{gathered}
\sum_x\alpha_x\cos a_x+\sum_y\beta_y\cos b_y=0,\\
\sum_x\alpha_x\sin a_x+\sum_y\beta_y\sin b_y=0.
\end{gathered}
\end{equation}
The second equation supplies the additional linear constraint needed to recover a two-dimensional candidate family at a regular maximally entangled nonlocal point.

\paragraph*{Finite face enumeration.}
For a fixed nonlocal exposed point $\vQ_{\vtheta}$, the quantum supporting condition is
\begin{equation}\label{eq:A-qtheta}
\vB\cdot \vQ_{\vtheta}=\max_{\vtheta'}\vB\cdot\vQ_{\vtheta'}=1.
\end{equation}
For each subset $I\subseteq\{1,\ldots,16\}$, impose the local equalities and strict inequalities in Theorem~\ref{thm:finite-cone}. These conditions identify exactly the classical vertices in the exposed face. In the regular two-parameter case, the equations
\begin{equation}\label{eq:local-candidates}
\vB(c_1,c_2)\cdot\vL_i=\vB(c_1,c_2)\cdot\vL_j=1
\end{equation}
produce candidate normals when two independent local constraints are active; they need not exhaust the feasible sets $I$.

\begin{proof}[Proof of Theorem~\ref{thm:finite-cone}]
Every nonzero supporting $\vB$ at $\vQ_{\vtheta}$ has frustrated signs: Proposition~\ref{prop:force-frustration}(a) applies when $Q_{\vB}>L_{\vB}$, and Proposition~\ref{prop:force-frustration}(b) applies otherwise, since a nonlocal exposed point has strictly positive joint probabilities \cite{BarizienBancal2025}. Theorem~\ref{thm:frustrated-face} then gives
$F_{\vB}=\conv(\{\vQ_{\vtheta}\}\cup(F_{\vB}\cap\cL))$.
The latter intersection is either empty or a face of the local polytope, generated by its saturated deterministic vertices. This proves Eq.~\eqref{eq:finite-cone}. Because there are only sixteen such vertices, only finitely many exposed faces contain $\vQ_{\vtheta}$.

The displayed feasibility conditions select exactly the vertices indexed by $I$; conversely, every exposed face containing $\vQ_{\vtheta}$ supplies such a functional after normalization. For fixed $\vQ_{\vtheta}$, inclusion of these faces is equivalent to inclusion of their index sets, since the $\vL_i$ are extreme points of $\cQ$. Finally, every maximal proper face is exposed by a supporting functional (Proposition~\ref{prop:extreme-normals}). Hence the inclusion-maximal feasible sets give precisely all maximal proper faces containing $\vQ_{\vtheta}$.
\end{proof}
The finite classification concerns faces, not the number of extreme rays of $\scrN(\vQ_{\vtheta})$. Global quantum feasibility remains essential, while pairwise local saturation is only a computational simplification.

Certain specified patterns of vanishing probabilities force a quantum correlation to be local \cite{Goh2018}, but vanishing probabilities alone do not imply locality, as demonstrated by Hardy-type nonlocal correlations \cite{Hardy1993}. Such zero constraints locate correlations on positivity faces or their intersections; determining whether a particular pattern forces locality requires a separate argument. They do not by themselves establish a complete classification of quantum faces.


\subsection{Certified example: four maximal triangular faces}\label{sec:four-face-example}
Consider the five angles
\begin{equation}\label{eq:face-example-angles}
\vtheta=\left(\frac{11\pi}{6},\frac\pi2,\frac{5\pi}{3},
\frac{2\pi}{5},\frac{9\pi}{20}\right).
\end{equation}
In the coordinate order $(A_0,A_1,B_0,B_1,E_{00},E_{01},E_{10},E_{11})$, the corresponding behavior is
\begin{equation}\label{eq:face-example-point}
\begin{split}
\vQ_{\vtheta}\simeq(&0.135476221,\,0,\,0.078217233,\,0.048340908,\\
 &0.860694299,\,-0.202057149,\,-0.855363194,\,0.939347432).
\end{split}
\end{equation}
Here and below, displayed decimal coordinates are approximations; the exact behavior is defined by Eq.~\eqref{eq:family-main} at the angles in Eq.~\eqref{eq:face-example-angles}.
For $u,v\in\{\pm1\}$, write
\begin{equation}\label{eq:face-example-local}
\vL_{uv}=(u,v,u,v,1,uv,uv,1).
\end{equation}
These four deterministic vertices all saturate the CHSH inequality
$E_{00}+E_{01}-E_{10}+E_{11}\le2$, which is violated by $\vQ_{\vtheta}$.

\paragraph*{Exhausting the local candidates.}
The tangent matrix $T_{\vtheta}$ has rank five, and normalization is independent of its five tangency constraints. Hence
\[
\mathcal T_{\vtheta}:=
\{\vB\in\mathbb R^8:T_{\vtheta}^{T}\vB=0,\ \vB\cdot\vQ_{\vtheta}=1\}
\]
is a two-dimensional affine plane. Let
\[
\mathcal P_{\vtheta}:=
\{\vB\in\mathcal T_{\vtheta}:\vB\cdot\vL\le1
\text{ for every }\vL\in\Ext(\cL)\}.
\]
This set is bounded because the origin is an interior point of $\cL$.
We test all $\binom{16}{2}=120$ pairs of deterministic vertices, not only pairs involving the four vertices in Eq.~\eqref{eq:face-example-local}.
Each pair gives a nonsingular eight-equation system. Exactly four solutions satisfy every local inequality; the other 116 violate at least one. The four feasible pairs are
\begin{equation}\label{eq:face-example-active-pairs}
\begin{aligned}
&k=1:\ (\vL_{++},\vL_{+-}),\qquad
 k=2:\ (\vL_{++},\vL_{-+}),\\
&k=3:\ (\vL_{+-},\vL_{--}),\qquad
 k=4:\ (\vL_{-+},\vL_{--}).
\end{aligned}
\end{equation}
For each pair $(i,j)$, define $\vB^{(k)}$ as the \emph{exact} solution of
\begin{equation}\label{eq:face-example-B-def}
\vB^{(k)}\cdot\vQ_{\vtheta}=1,\qquad
T_{\vtheta}^{T}\vB^{(k)}=0,\qquad
\vB^{(k)}\cdot\vL_i=\vB^{(k)}\cdot\vL_j=1.
\end{equation}
In the coefficient order
$(\alpha_0,\alpha_1,\beta_0,\beta_1,c_{00},c_{01},c_{10},c_{11})$, these solutions are
\begin{align*}
\vB^{(1)}\simeq{}&
(0.014765939568,\,0.159989003728,\,0.207277270078,\,0.046257971966,\\
&\hspace{5mm}0.289599682625,\,0.145629903950,\,-0.351876879644,\,0.488357107728),\\
\vB^{(2)}\simeq{}&
(-0.049515158486,\,0.048283413796,\,0.249242244543,\,0.165025140555,\\
&\hspace{5mm}0.264511183225,\,0.138104444391,\,-0.337831530447,\,0.522180262424),\\
\vB^{(3)}\simeq{}&
(0.156429614678,\,-0.494154919194,\,-0.246726705457,\,0.386991017253,\\
&\hspace{5mm}0.284030578998,\,0.131359691677,\,-0.221656782456,\,0.608805519060),\\
\vB^{(4)}\simeq{}&
(0.082582532173,\,-0.588326035045,\,-0.184800736352,\,0.500574849943,\\
&\hspace{5mm}0.257517527215,\,0.123884957556,\,-0.211636142658,\,0.640264268606).
\end{align*}
Each solution saturates precisely its stated pair. The minimum slack
$1-\vB^{(k)}\cdot\vL$ over the other fourteen vertices is, respectively,
\[
0.031592467906,\quad 0.027162936590,\quad
0.026216167612,\quad 0.028934038155
\]
to the displayed precision. Thus there is no third saturated local vertex.
Since every vertex of a bounded two-dimensional local-constraint polytope has two independent active constraints, the exhaustive pair test gives
\[
\mathcal P_{\vtheta}
=\conv\{\vB^{(1)},\vB^{(2)},\vB^{(3)},\vB^{(4)}\}.
\]
This is a quadrilateral; local feasibility alone does not yet identify it with the quantum normal-cone section.

\paragraph*{Global quantum certification.}
Set $s_0=\cos(a_0-a_1)=-1/2$ and
$t_0=\cos(b_0-b_1)=\cos(19\pi/15)$. For each $k$, define
\[
M_k(s,t)=I-\widehat B^{(k)}(s,t),\qquad
f_k(s,t)=\det M_k(s,t)=a_k(t)s^2+b_k(t)s+c_k(t).
\]
The coefficients are computed from the exact linear systems, using the trace identities of Appendix~\ref{app:faces}.
The five stationarity equations in Eq.~\eqref{eq:face-example-B-def} include variations of the Schmidt angle and collective local measurement rotations. Since $0<\theta<\pi/2$, the corresponding real-state variations span the three-dimensional tangent space orthogonal to $|\psi_\theta\rangle$. Normalization and stationarity therefore make $|\psi_\theta\rangle$ an eigenvector of the Bell operator with eigenvalue one. Measurement stationarity also gives
\[
f_k(s_0,t_0)=\partial_s f_k(s_0,t_0)
=\partial_t f_k(s_0,t_0)=0.
\]
Consequently,
\begin{equation}\label{eq:face-example-discriminant}
4a_k(t)c_k(t)-b_k(t)^2=(t-t_0)^2q_k(t),
\end{equation}
where $q_k$ is a quadratic polynomial.

The two saturated deterministic strategies give endpoint zeros at
$(s,t)=(-1,-1)$ and $(1,1)$. Accordingly,
\[
f_k(-1,t)=(1+t)\ell_{k,-}(t),\qquad
f_k(1,t)=(1-t)\ell_{k,+}(t),
\]
where both $\ell_{k,\pm}$ are affine and satisfy
$\ell_{k,\pm}(t)>0.005$ throughout $[-1,1]$.
To check interior minima in $s$, it suffices to consider $a_k(t)>0$ and
$|b_k(t)|<2a_k(t)$; otherwise an endpoint gives the minimum.
For $k=1,2,3,4$, respectively, the $t$ values satisfying these conditions are contained in
\[
[-0.963,0.557],\quad[-0.957,0.642],\quad
[-0.893,0.505],\quad[-0.9833,0.40030].
\]
On each interval, $q_k$ is concave and its endpoint values are greater than $9\times10^{-6}$. Equation~\eqref{eq:face-example-discriminant} then makes every interior minimum nonnegative. Together with the endpoint checks, this proves $f_k(s,t)\ge0$ on the entire overlap square.

For completeness, a nonnegative determinant alone is not a semidefiniteness certificate. The remaining elementary symmetric coefficients of $M_k$ are
\[
e_1=4,\qquad e_2=6-2w_k,\qquad
 e_3=4-4w_k-8d_k,\qquad e_4=f_k,
\]
where $w_k,d_k$ are the Gram-matrix quantities of Appendix~\ref{app:faces}. Both $e_2$ and $e_3$ are bilinear in $(s,t)$; their corner values certify
$e_2>2.21$ and $e_3>0.06$ throughout $[-1,1]^2$.
All coefficients of
$\det(xI+M_k)=x^4+e_1x^3+e_2x^2+e_3x+e_4$
are therefore nonnegative. A negative eigenvalue of the real symmetric matrix $M_k$ would give a positive root of this polynomial, which is impossible. Thus
\[
M_k(s,t)\succeq0\quad\text{for every }(s,t)\in[-1,1]^2,
\qquad Q_{\vB^{(k)}}=L_{\vB^{(k)}}=1.
\]
All strict numerical inequalities above, and the classification of the 120 candidate pairs, were checked with 75-digit outward-rounded interval arithmetic. The exact saturation and factorization identities follow from the defining equations, rather than from rounding a numerical residual to zero.

\paragraph*{The complete normal cone and its maximal faces.}
Every nonzero normal at $\vQ_{\vtheta}$ can be normalized to have value one there. Its normalized section satisfies
\[
\mathcal A_{\vtheta}:=
\{\vB\in\scrN(\vQ_{\vtheta}):\vB\cdot\vQ_{\vtheta}=1\}
\subseteq\mathcal P_{\vtheta}.
\]
Conversely, all four vertices of $\mathcal P_{\vtheta}$ have just been certified as quantum supporting functionals. Convexity gives
$\mathcal P_{\vtheta}\subseteq\mathcal A_{\vtheta}$, and therefore the normalized section is exactly the quadrilateral. In particular,
\begin{equation}\label{eq:face-example-normal-cone}
\scrN(\vQ_{\vtheta})=
\cone\{\vB^{(1)},\vB^{(2)},\vB^{(3)},\vB^{(4)}\}.
\end{equation}
These four vectors generate the four extreme rays of the cone.
Their mean $\bar{\vB}=\tfrac14\sum_{k=1}^4\vB^{(k)}$ satisfies
\[
Q_{\bar{\vB}}=1,\qquad
L_{\bar{\vB}}\simeq0.985975756313881<1.
\]
Strict-gap uniqueness therefore confirms that $\vQ_{\vtheta}$ is a nonlocal exposed point.
Using the maximizing-face structure and the exact local saturation sets, we obtain
\begin{equation}\label{eq:face-example-four-faces}
\begin{aligned}
F_1&=\conv\{\vQ_{\vtheta},\vL_{++},\vL_{+-}\},\\
F_2&=\conv\{\vQ_{\vtheta},\vL_{++},\vL_{-+}\},\\
F_3&=\conv\{\vQ_{\vtheta},\vL_{+-},\vL_{--}\},\\
F_4&=\conv\{\vQ_{\vtheta},\vL_{-+},\vL_{--}\}.
\end{aligned}
\end{equation}
Each is a two-dimensional triangle: the two local vertices are distinct and their affine line has CHSH value two, whereas $\vQ_{\vtheta}$ has a larger value.
Each face is maximal. Indeed, a larger proper face containing $F_k$ would lie in a supporting hyperplane whose normalized functional satisfies the same eight equations as $\vB^{(k)}$, including saturation at the same two local vertices. Uniqueness of that linear-system solution forces the same functional and the same exposed face, a contradiction.
Conversely, every normalized normal is a convex combination of the four $\vB^{(k)}$, and its exposed face is the intersection of the faces with positive coefficients. Hence the four triangles in Eq.~\eqref{eq:face-example-four-faces} are all the maximal proper faces containing $\vQ_{\vtheta}$.

\section{Weight-free algorithms for quantum decompositions}\label{app:weight-free}

The decomposition theorem is an existence statement. We now convert it into a membership test without unknown mixture weights. The starting point is the five-parameter behavior $\vQ_{\vtheta}$ in Eq.~\eqref{eq:family-app}. Throughout, the input $\vQ$ is assumed to be normalized, no-signaling, and to have nonnegative conditional probabilities. Inputs satisfying all CHSH inequalities are local and hence quantum \cite{Fine1982}; we focus on $S(\vQ)>2$. We first explain the sparse two-vertex elimination and then give a complete eight-vertex formulation that applies to every nonlocal quantum behavior.

\subsection{Method 1: Elimination for a fixed pair of local vertices}\label{sec:weight-free-pair}

Choose two distinct deterministic behaviors $\vL_i,\vL_j$ saturating a CHSH inequality. A useful sparse ansatz is
\begin{equation}\label{eq:F-pair-mixture}
\begin{gathered}
\vQ=\lambda_0\vQ_{\vtheta}+\lambda_i\vL_i+\lambda_j\vL_j,\\
\lambda_0,\lambda_i,\lambda_j\ge0,\qquad
\lambda_0+\lambda_i+\lambda_j\le1.
\end{gathered}
\end{equation}
The remaining coefficient multiplies the zero behavior $\vQ_0=\vec0$, for which $p_{\vQ_0}(ab|xy)=1/4$. Thus the right-hand side represents a normalized convex mixture even when the displayed weights sum to less than one. If $\vQ$ is nonlocal, necessarily $\lambda_0>0$; otherwise it would be a mixture of local behaviors. Dividing by this positive coefficient and setting
\begin{equation*}
z_0=\frac1{\lambda_0},\qquad
z_i=\frac{\lambda_i}{\lambda_0},\qquad
z_j=\frac{\lambda_j}{\lambda_0},
\end{equation*}
gives
\begin{equation*}
\vQ_{\vtheta}=z_0\vQ-z_i\vL_i-z_j\vL_j.
\end{equation*}
The three $z$-coordinates are useful intermediate quantities, but they are not additional nonlinear unknowns: once $\vtheta$ is fixed, they will be determined by a linear system.

\paragraph*{Why the elimination has exactly five equations.}
Define the fixed $8\times3$ matrix
\begin{equation*}
C_{ij}=[\,\vQ,-\vL_i,-\vL_j\,].
\end{equation*}
For a valid input strictly violating the same CHSH facet as the selected pair, $C_{ij}$ has column rank three.  Pairs taken from unrelated facets can instead be checked by a rank-revealing factorization.

Select $D_{ij}\in\mathbb R^{5\times8}$ and $K_{ij}\in\mathbb R^{3\times8}$ satisfying
\begin{equation}\label{eq:F-pair-projectors}
D_{ij}C_{ij}=0,\quad \operatorname{rank}D_{ij}=5,
\quad K_{ij}C_{ij}=I_3.
\end{equation}
The five rows of $D_{ij}$ span the orthogonal complement of the three-dimensional column space of $C_{ij}$; hence
\begin{equation*}
D_{ij}\vQ_{\vtheta}=0
\iff
\vQ_{\vtheta}\in\operatorname{Col}(C_{ij}).
\end{equation*}
This gives five scalar nonlinear equations in the five angles. Once they hold, $K_{ij}$, a left inverse of $C_{ij}$, extracts the unique linear coefficients:
\begin{equation}\label{eq:F-z-coordinates}
\begin{pmatrix}z_0(\vtheta)\\ z_i(\vtheta)\\ z_j(\vtheta)\end{pmatrix}
=K_{ij}\vQ_{\vtheta}.
\end{equation}
Because $C_{ij}K_{ij}$ acts as the identity on $\operatorname{Col}(C_{ij})$, the projected equalities imply the original vector identity
\begin{equation*}
\vQ_{\vtheta}=z_0\vQ-z_i\vL_i-z_j\vL_j.
\end{equation*}

\paragraph*{Recovering and checking the weights.}
The physical requirements are $z_i,z_j\ge0$ and
\begin{equation*}
\lambda_0+\lambda_i+\lambda_j
=\frac{1+z_i+z_j}{z_0}\le1.
\end{equation*}
For $z_0>0$ this is equivalent to $z_0-z_i-z_j\ge1$. Conversely, these three inequalities already imply $z_0\ge1>0$, so there is no need to impose a separate denominator condition. We have proved the equivalence between Eq.~\eqref{eq:F-pair-mixture} and
\begin{equation}\label{eq:F-pair-test}
\begin{gathered}
D_{ij}\vQ_{\vtheta}=0,\qquad
z_i(\vtheta),z_j(\vtheta)\ge0,\\
z_0(\vtheta)-z_i(\vtheta)-z_j(\vtheta)\ge1.
\end{gathered}
\end{equation}
The weights are obtained without another nonlinear solve:
\begin{equation}\label{eq:F-pair-recovery}
\lambda_0=\frac1{z_0},\qquad
\lambda_i=\frac{z_i}{z_0},\qquad
\lambda_j=\frac{z_j}{z_0}.
\end{equation}
The residual weight on $\vQ_0$ equals $(z_0-1-z_i-z_j)/z_0\ge0$. Conversely, every valid pairwise decomposition produces precisely these $z$-coordinates and satisfies Eq.~\eqref{eq:F-pair-test}.

\paragraph*{Explicit construction.}
For numerical calculations, take a full QR decomposition $C_{ij}=U_1H$, where $U=(U_1,U_2)$ is orthogonal, $U_1\in\mathbb R^{8\times3}$, $U_2\in\mathbb R^{8\times5}$ and $H\in\mathbb R^{3\times3}$ is invertible. Then
\begin{equation*}
D_{ij}=U_2^T,\qquad K_{ij}=H^{-1}U_1^T.
\end{equation*}
For exact rational inputs, one can avoid floating-point QR. Choose three row indices $\mathsf p$ for which the submatrix $C_{\mathsf p}$ is invertible and let $\mathsf r$ consist of the remaining five rows. The equations and coefficient recovery become
\begin{equation*}
\begin{gathered}
\vQ_{\vtheta,\mathsf r}
-C_{\mathsf r}C_{\mathsf p}^{-1}
\vQ_{\vtheta,\mathsf p}=0,\\
(z_0,z_i,z_j)^T=C_{\mathsf p}^{-1}\vQ_{\vtheta,\mathsf p}.
\end{gathered}
\end{equation*}
All five displayed residual equations are required, together with the three weight inequalities. The pairwise criterion is exact for the \emph{chosen} pair but does not cover arbitrary local mixtures; even failure for all $\binom82=28$ pairs cannot exclude decompositions involving three or more local vertices. For a prescribed pair, thin QR factorization of the $8\times3$ matrix $C_{ij}$ costs $O(8\cdot3^2)$ arithmetic operations; completing the orthogonal basis is also a fixed-size operation. 

\paragraph*{Exact semialgebraic complexity.}
Introduce $u_x=\cos a_x$, $v_x=\sin a_x$, $w_y=\cos b_y$, $r_y=\sin b_y$, and $c=\cos\theta$, $s=\sin\theta$. The ten real variables obey
\[
u_x^2+v_x^2=1,\quad w_y^2+r_y^2=1\quad(x,y=0,1),
\qquad c^2+s^2=1,\qquad c,s\ge0.
\]
The quantum coordinates are
\[
\widetilde A_x=cu_x,\qquad
\widetilde B_y=cw_y,\qquad
\widetilde E_{xy}=u_xw_y+sv_xr_y,
\]
and have degree at most three. For the complete criterion, the input probabilities and $\delta$ are constants; hence $g_i$ also has degree at most three, rather than six. The complete system has five quadratic equalities and ten inequalities (the eight $g_i\ge0$ and $c,s\ge0$). The pairwise system has ten equalities (five residual equations and five circle equations) and five inequalities (three weight conditions and $c,s\ge0$). Neither formulation introduces unknown mixture weights.

For rational inputs with total encoding length $L$, standard existential real-algebraic algorithms admit bounds of the form $(md)^{O(n)}\operatorname{poly}(L)$, where $n,m,d$ count variables, constraints, and maximum degree \cite{BasuPollackRoy2006}. Here $n=10$, $m=15$, and $d\le3$ are fixed, giving a polynomial input-bit-length bound for exact feasibility in this Bell scenario. The constants can be large, and this bound neither guarantees fast local optimization nor specifies a wall-clock time. Rational pivot elimination preserves exact coefficients in the pairwise formulation; unverified floating-point QR factors cannot be used as exact input data. Experimental uncertainty must be specified separately from exact rational membership.

\paragraph*{Recorded single-input benchmark.}

\subsection{Direct elimination for a fixed post-quantum input}\label{appcontents:E:1}
We illustrate the decomposition search without assuming a separating Bell functional. Fix
\begin{equation}\label{eq:example-def}
\eps=10^{-4},\qquad k=\frac{1+\eps}{3}=\frac{10001}{30000},\qquad r=\frac1{\sqrt2},
\end{equation}
and consider the input
\begin{equation}\label{eq:example-point}
\vQ_*=k(0,0,-2,0,r,2+r,r,-2-r).
\end{equation}

The coordinate order is $(A_0,A_1,B_0,B_1,E_{00},E_{01},E_{10},E_{11})$. For a candidate pair of deterministic vertices, we solve
\begin{equation}\label{eq:example-mixture}
\vQ_*=\lambda_0\vQ_{\vtheta}+\lambda_1\vL_1+\lambda_2\vL_2,\qquad\lambda_0,\lambda_1,\lambda_2\ge0,\quad\lambda_0+\lambda_1+\lambda_2\le1.
\end{equation}
Here $\vQ_{\vtheta}$ has the five-parameter realization
\begin{equation}\label{eq:example-parameter}
A_x=c\cos a_x,\quad B_y=c\cos b_y,\quad E_{xy}=\cos a_x\cos b_y+s\sin a_x\sin b_y,\quad s=\sin\theta,\quad c=\cos\theta.
\end{equation}
In the following, the symbols $A_x,B_y,E_{xy}$ refer to the unknown component $\vQ_{\vtheta}$, not to the fixed input $\vQ_*$. 

\paragraph*{A pair admitting a complete analytic calculation.} Take
\begin{equation}\label{eq:two-local}
\vL_1=(-1,1,-1,-1,1,1,-1,-1),\qquad \vL_2=(1,-1,-1,1,-1,1,1,-1).
\end{equation}
Under the deterministic labeling $A_x=(-1)^{ax+b}$, $B_y=(-1)^{cy+d}$, these are $\vL_{1101}$ and $\vL_{1011}$. This pair is the eighth in the fixed pseudorandom ordering used in our accompanying numerical illustration (seed 20261022).

The two local weights are eliminated using the $A_0$ and $B_0$ coordinates:
\begin{equation}\label{eq:example-weights}
\lambda_1=k+\frac{\lambda_0}{2}(B_0+A_0),\qquad \lambda_2=k+\frac{\lambda_0}{2}(B_0-A_0).
\end{equation}
The remaining consistency conditions are
\begin{equation}\label{eq:example-five-eqs}
\begin{gathered}
A_0+A_1=0,\quad B_1-A_0=0,\quad E_{01}+E_{11}=0,\\
E_{10}-E_{00}-2A_0=0,\quad E_{00}+A_0-E_{01}-B_0=0.
\end{gathered}
\end{equation}
The quantum weight is recovered from
\begin{equation}\label{eq:example-mu0}
\lambda_0=\frac{kr}{E_{00}+A_0}.
\end{equation}
The denominator must be nonzero; for nonnegative weights it is positive. Indeed, $E_{00}+E_{10}=0$ at both local vertices, whereas the target has this sum equal to $2kr>0$, so $\lambda_0=0$ is impossible. Equations~\eqref{eq:example-five-eqs} are five trigonometric consistency equations in the five angles, with all mixture weights eliminated.

\paragraph*{Excluding the nonmaximally entangled branch.} Suppose $0\le\theta<\pi/2$, so $c>0$. The first two equations imply
\begin{equation}\label{eq:cos-relations}
\cos a_1=-\cos a_0,\qquad\cos b_1=\cos a_0.
\end{equation}
The third then becomes
\begin{equation}\label{eq:sin-relation}
s(\sin a_0+\sin a_1)\sin b_1=0.
\end{equation}
If $s=0$, the parametrization immediately gives $E_{00}+E_{10}=0$. If $s>0$ and $\sin b_1\ne0$, then $\sin a_1=-\sin a_0$, with the same consequence. If $s>0$ and $\sin b_1=0$, then $\cos a_0=\pm1$ and $\sin a_0=\sin a_1=0$, again giving $E_{00}+E_{10}=0$. All cases contradict
\begin{equation}\label{eq:example-positive}
\lambda_0(E_{00}+E_{10})=2kr>0.
\end{equation}
Thus no solution exists in this branch.

\paragraph*{Solving the maximally entangled branch.} At $\theta=\pi/2$, all marginals of $\vQ_{\vtheta}$ vanish. Equation~\eqref{eq:example-weights} gives $\lambda_1=\lambda_2=k$, and the correlators must have the form
\begin{equation}\label{eq:example-correlators}
(E_{00},E_{01},E_{10},E_{11})=(t,t,t,-t),\qquad t=\frac{kr}{\lambda_0}>0.
\end{equation}
These correlators are inner products of planar unit vectors, since $E_{xy}=\cos(a_x-b_y)$. Put $z=\cos(b_0-b_1)$. The cases $z=\pm1$ are incompatible with $t>0$ and the displayed correlator pattern. The two three-vector Gram determinants therefore give
\begin{equation}\label{eq:gram-dets-E}
\begin{gathered}
0=\det\begin{pmatrix}1&t&t\\t&1&z\\t&z&1\end{pmatrix}=(1-z)(1+z-2t^2),\\
0=\det\begin{pmatrix}1&t&-t\\t&1&z\\-t&z&1\end{pmatrix}=(1+z)(1-z-2t^2).
\end{gathered}
\end{equation}
Consequently $z=0$ and $t=r$. Hence the only behavior compatible with nonnegative weights is
\begin{equation}\label{eq:unique-E}
\vQ_{\vtheta}=(0,0,0,0,r,r,r,-r),\qquad\lambda_0=\lambda_1=\lambda_2=k.
\end{equation}
For example, it is realized by
\begin{equation}\label{eq:angles-E}
(a_0,a_1,b_0,b_1,\theta)=\left(0,\frac{\pi}{2},\frac{\pi}{4},\frac{7\pi}{4},\frac{\pi}{2}\right).
\end{equation}
The angular representation is not unique because of the common-angle shift at maximal entanglement, but the behavior and these weights are fixed. The total weight is
\begin{equation}\label{eq:weight-too-big}
\lambda_0+\lambda_1+\lambda_2=3k=\frac{10001}{10000}>1.
\end{equation}
Thus this pair admits no valid convex decomposition. This conclusion follows from solving the projected equations and testing the weights, not from evaluating a pre-existing separating Bell inequality.

\paragraph*{The first randomly selected pair.} The first pair in the same seeded ordering is $(\vL_{0000},\vL_{1101})$. This pair can also be excluded analytically. In any admissible decomposition of $\vQ_*$, its quantum component must be nonlocal: otherwise the origin and all three components are local. In particular $\lambda_0>0$ and $\theta>0$. Write
\begin{equation}\label{eq:hmn}
h=\frac{k}{\lambda_0}>0,\qquad m=\frac{\lambda_1}{\lambda_0}\ge0,\qquad n=\frac{\lambda_2}{\lambda_0}\ge0.
\end{equation}
The component has coordinates
\begin{equation}\label{eq:component-hmn}
\begin{aligned}
A_0&=n-m,&A_1&=-m-n,&B_0&=-2h-m+n,&B_1&=n-m,\\
E_{00}&=hr-m-n,&E_{01}&=h(2+r)-m-n,\\
E_{10}&=hr-m+n,&E_{11}&=-h(2+r)-m+n.
\end{aligned}
\end{equation}
At maximal entanglement, $A_1=0$ forces $m=n=0$, contradicting $B_0=0$. For $0<\theta<\pi/2$, put $z=\cos^2\theta\in(0,1)$. Eliminating the individual angle sines from Eq.~\eqref{eq:example-parameter} gives the necessary polynomial identities
\begin{equation}\label{eq:fxy-E}
f_{xy}:=z^2+(E_{xy}^2-1-A_x^2-B_y^2)z+A_x^2+B_y^2+A_x^2B_y^2-2E_{xy}A_xB_y=0.
\end{equation}
Substitution yields the exact identity
\begin{equation}\label{eq:identity-E}
(h+m)(f_{01}-f_{00})-h(f_{11}-f_{00})=-4hm(h+m)(m-n+1)^2.
\end{equation}
Since $A_0^2=(n-m)^2\le z<1$, the last squared factor cannot vanish. Thus $m=0$. This implies $A_1=-A_0$, $B_1=A_0$, and $E_{01}+E_{11}=0$. The preceding angular argument then forces $E_{00}+E_{10}=0$, contrary to $E_{00}+E_{10}=2hr>0$. Hence the first sampled pair has no admissible solution either.

\paragraph*{Scope of the calculation.} These are exact exclusions for the two specified local pairs. The accompanying multistart computation also examines all $28$ pairs in the selected CHSH facet. No admissible candidate was found, but this numerical observation is not an exhaustive proof of infeasibility. A complete primal exclusion of $\vQ_*$ requires globally certified infeasibility for every necessary local subset, including larger subsets unless a separate coverage theorem reduces the search to pairs.

The recorded single-point benchmark uses

\begin{equation}\label{eq:F-benchmark-input}
\begin{gathered}
\vtheta_*=(11\pi/6,\pi/2,5\pi/3,2\pi/5,9\pi/20),\\
\vL_{uv}=(u,v,u,v,1,uv,uv,1),\qquad u,v\in\{\pm1\},\\
\vQ_{\mathrm{test}}=0.6\vQ_{\vtheta_*}+0.2\vL_{++}+0.2\vL_{+-}.
\end{gathered}
\end{equation}
The generating angles and weights were not supplied as initial guesses. Across twenty independent seeds, all three methods returned numerical decompositions at $\epsilon_{\mathrm{num}}=10^{-9}$. The median times were $0.0128\,\mathrm{s}$ for the specified feasible pair, $0.339\,\mathrm{s}$ for a cyclic search over all $28$ pairs of the violated facet, and $0.0308\,\mathrm{s}$ for the eight-vertex search. The corresponding observed ranges were $0.0062$--$0.0293\,\mathrm{s}$, $0.0174$--$1.6491\,\mathrm{s}$, and $0.0058$--$0.0520\,\mathrm{s}$.

The following settings describe the original recorded benchmark; it was not rerun for this revision. These timings were recorded with Python~3.13.5, NumPy~2.3.5, and SciPy~1.17.0 in a virtualized Intel Xeon Platinum~8573C environment, using one process and one BLAS thread. Per-input preparation was included, while module imports were excluded. A supplementary set comprised forty nonlocal mixtures with arbitrary local distributions on all sixteen vertices, ten nonmaximally entangled members of $\cS$, and ten maximally entangled members. The initial solver found $56/60$ decompositions within budget; all four unresolved cases were in the maximally entangled group. Adding the direct branch above gave $60/60$ numerical acceptances. The source code and original result files accompany these benchmarks. They demonstrate feasibility on the tested inputs, not strict certification, a worst-case running-time guarantee, or a speed comparison with NPA.

\bibliographystyle{apsrev4-2}
\bibliography{exact_boundary_refs}
\end{document}